\def\Gauss{{ \mathrm{N} }}
\def\T{\mathrm{\scriptscriptstyle{T}}}
\def\E{E_{P_{f_0}}}
\def\|{\mid}
\newtheorem{theorem}{Theorem}[section]
\newtheorem{lemma}[theorem]{Lemma}
\newtheorem{assumption}[theorem]{Assumption}
\newtheorem{proposition}[theorem]{Proposition}
\DeclareMathOperator{\etr}{etr}
\date{}
\title{Nearest Neighbor Dirichlet Mixtures}
\author[1]{Shounak Chattopadhyay\thanks{shounak.chattopadhyay@duke.edu}}
\author[2]{Antik Chakraborty\thanks{antik015@purdue.edu}}
\author[1]{David B. Dunson \thanks{dunson@duke.edu}}
\affil[1]{Department of Statistical Science, Duke University}
\affil[2]{Department of Statistics, Purdue University}
\begin{document}
\maketitle

\begin{abstract}
There is a rich literature on Bayesian
methods for density estimation, which characterize the unknown density as a mixture of kernels.  Such methods have advantages in terms of providing uncertainty quantification in estimation, while being adaptive to a rich variety of densities.  However, relative to frequentist locally adaptive kernel methods, Bayesian approaches can be slow and unstable to implement in relying on Markov chain Monte Carlo algorithms.  To maintain most of the strengths of Bayesian approaches without the computational disadvantages, 
 we propose a class of nearest neighbor-Dirichlet mixtures. The approach starts by grouping the data into neighborhoods based on standard algorithms.  Within each neighborhood, the density is characterized via a Bayesian parametric model, such as a Gaussian with unknown parameters.  Assigning a Dirichlet prior to the weights on these local kernels, we obtain a  pseudo-posterior for the weights and kernel parameters.  A simple and embarrassingly parallel Monte Carlo algorithm is proposed to sample from the resulting pseudo-posterior for the unknown density. Desirable asymptotic properties are shown, and the methods are evaluated in simulation studies and applied to a motivating data set in the context of classification. 
\end{abstract}

{\small \textsc{Keywords:} {\em Bayesian, Density estimation, Distributed computing, Embarrassingly parallel, Kernel density estimation, Mixture model, Quasi-posterior, Scalable}}

\section{Introduction}\label{sec:intro}

Bayesian nonparametric methods provide a useful alternative to black box machine learning algorithms, having potential advantages in terms of characterizing uncertainty in inferences and predictions.  However, computation can be slow and unwieldy to implement.  Hence, it is important to develop simpler and faster Bayesian nonparametric approaches, and {\em hybrid} methods that borrow the best of both worlds.  For example, if one could use the Bayesian machinery for uncertainty quantification and reduction of mean square errors through shrinkage, while incorporating algorithmic aspects of machine learning approaches, one may be able to engineer a highly effective hybrid. The focus of this article is on proposing such an approach for density estimation, motivated by the successes and limitations of nearest neighbor algorithms and Bayesian mixture models. 

Nearest neighbor algorithms are popular due to a combination of simplicity and performance. Given a set of $n$ observations $\mathcal{X}^{(n)} = (X_1,\ldots, X_n)$ in $\mathbb{R}^p$, the density at $x$ is estimated as $\hat{f}_{\text{knn}}(x) = k/(nV_p R_{k}^{p})$, where $k$ is the number of neighbors of $x$ in $\mathcal{X}^{(n)}$, $R_k = R_k(x)$ is the distance of $x$ from its $k$th nearest neighbor in $\mathcal{X}^{(n)}$, and $V_p$ is the volume of the $p$-dimensional unit ball \citep{loftsgaarden1965nonparametric, mack1979multivariate}. Refer to \cite{biau2015lectures} for an overview of related estimators and corresponding theory.  

Nearest neighbor density estimators are a type of locally adaptive kernel density estimators. The literature on such methods identifies two broad classes: {\bf balloon estimators} and {\bf sample smoothing estimators}; see \cite{scott2015multivariate, terrell1992variable} for an overview. {\bf Balloon estimators} characterize the density at a query point $x$ using a bandwidth function $h(x)$; classical examples include the naive $k$-nearest neighbor density estimator \citep{loftsgaarden1965nonparametric} and its modification in \cite{mack1979multivariate}. More elaborate balloon estimators face challenges in terms of choice of $h(x)$ and obtaining density estimators that do not integrate to $1$.
{\bf Sample smoothing estimators} use $n$ different bandwidths $h(X_i)$, one for each sample point $X_i$, to estimate the density at a query point $x$ globally. 
By construction, sample smoothing estimators are {\em bona fide} density functions integrating to $1$. To fit either the balloon or the sample smoothing estimator, one may compute an initial pilot density estimator employing a constant bandwidth and then use this pilot to estimate the bandwidth function \citep{breiman1977variable, abramson1982bandwidth}. Another example of a locally adaptive density estimator is the local likelihood density estimator \citep{loader1996local, loader2006local, hjort1996locally}, which fits a polynomial model in the neighborhood of a query point $x$ to estimate the density at $x$, estimating the parameters of the local polynomial by maximizing a penalized local log-likelihood function.
The above methods produce a point estimate of the density without uncertainty quantification (UQ).

Alternatively, there is a Bayesian literature on locally adaptive kernel methods, which express the unknown density as:
\begin{align}
f(x) = \sum_{h=1}^m \pi_h \mathcal{K}(x; \theta_h),\quad 
\theta_h \sim P_0,\quad 
(\pi_h)_{h=1}^m \sim Q_0,
\label{eq:SB}
\end{align}
which is a mixture of $m$ components, with the $h$th having probability weight $\pi_h$ and kernel parameters $\theta_h$; by allowing the location and bandwidth to vary across components, local adaptivity is obtained.  A Bayesian specification is completed with prior $P_0$ for the kernel parameters and $Q_0$ for the weights.  In practice, it is common to rely on an over-fitted mixture model \citep{rousseau2011asymptotic}, 
which chooses $m$ as a pre-specified finite upper bound on the number of components, and lets 
\begin{eqnarray}
\label{eq:Dirichlet}
\pi = (\pi_1,\ldots,\pi_m)^{\T} \sim \mbox{Dirichlet}(\alpha,\ldots,\alpha).
\end{eqnarray}
Augmenting component indices $c_i \in \{1,\ldots,m\}$ for $i=1,\ldots,n$, a simple Gibbs sampler can be used for posterior computation, alternating between sampling (i) $c_i$ from a multinomial conditional posterior, for $i=1,\ldots,n$; (ii) $\theta_h \mid - \sim 
P_0(\theta_h)\prod_{i:c_i=h}\mathcal{K}(X_i;\theta_h)$; and (iii) $\pi \mid - \sim \mbox{Dirichlet}(\alpha + n_1,\ldots,\alpha+n_m),$ with $n_h = \sum_{i=1}^n \mathbb{I}(c_i=h)$ for $h=1,\ldots,m$.

Relative to frequentist locally adaptive methods, Bayesian approaches are appealing in automatically providing a characterization of uncertainty in estimation, while having excellent practical performance for a broad variety of density shapes and dimensions.  However, implementation typically relies on Markov chain Monte Carlo (MCMC), with the Gibbs sampler sketched above providing an example of a common algorithm used in practice. Unfortunately, current MCMC algorithms for posterior sampling in mixture models tend to face issues with
{\em slow mixing}, meaning the sampler can take a very large number of iterations to adequately explore different posterior modes and obtain sufficiently accurate posterior summaries.



MCMC inefficiency has motivated a literature on faster approaches, including 
sequential approximations \citep{wang2011fast,zhang2014sequential} and variational Bayes \citep{blei2006variational}.  These methods are order dependent, tend to converge to local modes, and/or lack theory support. \cite{newton1999recursive, newton2002nonparametric} instead rely on predictive recursion.  Such estimators are fast to compute and have theory support, but are also order dependent and do not provide a characterization of uncertainty.  Alternatively, one can use a Polya tree as a conjugate prior \citep{lavine1992some, lavine1994more}, and there is a rich literature on related multiscale and recursive partitioning approaches, such as the optional Polya tree \citep{wong2010optional}. However, Polya trees have disadvantages in terms of sensitivity to a base partition and a tendency to favor spiky/erratic densities.  These disadvantages are inherited by most of the computationally fast modifications. 

This article develops an alternative to current locally adaptive density estimators, obtaining the practical advantages of Bayesian approaches in terms of uncertainty quantification and a tendency to have relatively good performance for a wide variety of true densities, but without the computational disadvantage due to the use of MCMC.  This is accomplished with a {\em Nearest Neighbor-Dirichlet Mixture} (NN-DM) model.
The basic idea is to rely on fast nearest neighbor search algorithms to group the data into local neighborhoods, and then use these neighborhoods in defining a Bayesian mixture model-based approach. Section \ref{sec:methodology} 
outlines the NN-DM approach and describes implementation details for Gaussian kernels.
Section \ref{sec:theory} provides some theory support for NN-DM.  Section \ref{sec:simulation} contains simulation experiments comparing NN-DM with a rich variety of competitors in univariate and multivariate examples, including an assessment of UQ performance.  Section \ref{sec:application-pulsar} contains a real data application, and Section \ref{sec:discussion} a discussion.    


\section{Methodology}\label{sec:methodology}

\subsection{Nearest Neighbor Dirichlet Mixture Framework}\label{subsection:NNDPframework} 
Let $d(x_1,x_2)$ denote a distance metric between data points $x_1,x_2 \in \mathcal{X}$.  For $\mathcal{X} = \mathbb{R}^p$, the Euclidean distance is typically chosen. For each $i \in \{1, 2, \ldots, n\}$, let $X_{i[j]}$ denote the $j$th nearest neighbor to $X_i$ in the data $\mathcal{X}^{(n)} = (X_1,\ldots,X_n)$ such that 
$d(X_i,X_{i[1]}) \leq \ldots \leq d(X_i,X_{i[n]})$, with ties broken by increasing order of indices. By convention, we define $X_{i[1]} = X_i$.  The indices on the $k$ nearest neighbors to $X_i$ are denoted as $\mathcal{N}_i = \{ j: d(X_i,X_j) \leq d(X_i, X_{i[k]}) \}$. Denote the set of data points in the $i$th neighborhood by $\mathcal{S}_i = \{X_j : j \in \mathcal{N}_i\}$. In implementing the proposed method, we typically let the number of neighbors $k$ vary as a function of $n$. When necessary, we use the notation $k_n$ to express this dependence.  However, we routinely drop the $n$ subscript for notational simplicity.

Fixing $x \in \mathcal{X}$, we model the density of the data within the $i$th neighborhood using 
\begin{align}
f_i(x) = \mathcal{K}(x ; \theta_i),\quad \theta_i \sim P_0, \label{eq:NN-DP1}
\end{align}
where $\theta_i$ are parameters specific to neighborhood $i$ that are given a global prior distribution $P_0$. To combine the $f_i(x)$s into a single global $f(x)$, similarly to equations 
(\ref{eq:SB})-(\ref{eq:Dirichlet}), we let
\begin{eqnarray}
f(x) = \sum_{i=1}^n \pi_i f_i(x),\quad 
\pi=( \pi_i )_{i=1}^{n} \sim \mbox{Dirichlet}(\alpha,\ldots,\alpha),\quad 
\theta_i \sim P_0.
\label{eq:NN-DP}
\end{eqnarray}
The key difference relative to standard Bayesian mixture model \eqref{eq:SB}  is that in \eqref{eq:NN-DP} we include one component for each data sample and assume that only the data in the $k$-nearest neighborhood of sample $i$ will inform about $\theta_i$.  In contrast, \eqref{eq:SB} lacks any sample dependence, and we infer allocation of samples to mixture components in a posterior inference phase.

Given the restriction that only data in the $i$th neighborhood $\mathcal{S}_i$ inform about $\theta_i$, the
 pseudo-posterior density $\widetilde{\Pi}_{1}(\theta_i ; \mathcal{S}_i, P_0)$ of $\theta_i$ with data $\mathcal{S}_i$ and prior $P_0$ is
\begin{align}
 \widetilde{\Pi}_{1}(\theta_i ; \mathcal{S}_i, P_0) \propto P_0(\theta_i) \prod_{j \in \mathcal{N}_i} \mathcal{K}(X_j ; \theta_i), 
 \label{eq:postthti} 
\end{align}
where the right-hand side of \eqref{eq:postthti} is motivated from Bayes' theorem. This pseudo-posterior is in a simple analytic form if $P_0$ is conjugate to $\mathcal{K}(x;\theta)$. The prior $P_0$ can involve unknown parameters and borrows information across neighborhoods; this reduces the large variance problem common to nearest neighbor estimators. 
Since the neighborhoods are overlapping, proposing a pseudo-posterior update for $\pi$ under \eqref{eq:NN-DP} is not straightforward. However, one can define the number of effective members in the $i$th neighborhood $\mathcal{S}_i$ similar in spirit to the number of points in the $h$th cluster in mixture models of the form 
\eqref{eq:SB}. 
By convention, we define the point $X_i$ that generated its neighborhood $\mathcal{S}_i$ to be an effective member of that neighborhood. For any other data point $X_j$ to be a effective member of the neighborhood generated by $X_i$ for $j \neq i$, we require $X_j \in \mathcal{S}_i$ but $X_j \notin \mathcal{S}_u$ for all $u=1,\ldots,n$ such that $u \notin \{i, j\}$. That is, $X_j$ lies in the neighborhood generated by $X_i$ but does not lie in the neighborhood of any other $X_u$ for $u \notin \{i,j\}$. In Section \ref{sec:alpha-plus-one}, we show that the number of effective member points defined as above approaches 1 as $n\to \infty$. This motivates the following Dirichlet pseudo-posterior density $\widetilde{\Pi}_{2}(\pi ; \mathcal{X}^{(n)})$ for the neighborhood weights $\pi$:
\begin{equation}
 \widetilde{\Pi}_{2}(\pi ; \mathcal{X}^{(n)}) = 
 \mbox{Dirichlet}(\pi \mid \alpha + 1,\ldots, \alpha + 1), \label{eq:NN-DP2}
\end{equation}
where $\mbox{Dirichlet}(p \mid q_1, \ldots, q_d)$ denotes the density of the Dirichlet distribution evaluated at $p$ with parameters $(q_1, \ldots, q_d)$. We provide a justification for the pseudo-posterior update \eqref{eq:NN-DP2} in Section \ref{sec:alpha-plus-one}. This distribution is inspired from the  conditional posterior on the kernel weights in the Dirichlet mixture of equations (\ref{eq:SB})-(\ref{eq:Dirichlet}), but we use $n$ components and fix the effective number of samples allocated to each component at one.  

Based on equations (\ref{eq:NN-DP1})-(\ref{eq:NN-DP2}), our nearest neighbor-Dirichlet mixture produces a pseudo-posterior distribution for the unknown density $f(x)$ through simple  distributions for the parameters characterizing the density within each neighborhood and for the weights. To generate independent Monte Carlo samples from the pseudo-posterior for $f$, one can simply draw independent samples of $(\theta_i)_{i=1}^{n}$ and $\pi$ from (\ref{eq:postthti}) and (\ref{eq:NN-DP2}) respectively, and plug these samples into the expression for $f(x)$ in \eqref{eq:NN-DP}. The resulting mechanism can be described as
\begin{align}
\label{eq:NNDM-PP}
\begin{split}
    \theta_i & \overset{ind}{\sim} \widetilde{\Pi}_{1}(\cdot \, ; \mathcal{S}_i, P_0) \quad \mbox{for $i=1,\ldots,n$,} \\
    \pi & \sim \widetilde{\Pi}_{2}(\cdot \,; \mathcal{X}^{(n)}) \\
    f(x) & = \sum_{i=1}^{n} \pi_i \mathcal{K}(x; \theta_i).
\end{split}
\end{align} 
In \eqref{eq:NNDM-PP}, we denote the induced pseudo-posterior distribution on $f$ by $f \sim \widetilde{\Pi}$. Although this is not exactly a coherent fully Bayesian posterior distribution, we claim that it can be used as a practical alternative to such a posterior in practice.  This claim is backed up by theoretical arguments, simulation studies, and a real data application in the sequel.

\subsection{Illustration with Gaussian Kernels}\label{sec:multivariate_gaussian}
Suppose we have independent and identically distributed (iid) observations $\mathcal{X}^{(n)}$ from the density $f$, where $X_i \in \mathbb{R}^p$ for $i = 1, \ldots, n$ and $f$ is an unknown density function with respect to the Lebesgue measure on $\mathbb{R}^p$ for $p \geq 1$. Let $\mathbb{R}^{p \times p}_{+}$ denote the set of all real-valued $p \times p$ positive definite matrices. Fix $x \in \mathbb{R}^p$. 
We proceed by setting $\mathcal{K}(x; \theta)$ to be the multivariate Gaussian density $\phi_p(x; \eta, \Sigma)$, given by $$\phi_p(x; \eta, \Sigma) = (2\pi)^{-p/2} |\Sigma| ^{-1/2} \exp{\{-(x - \eta)^\T \Sigma^{-1}(x - \eta)/2 \}},$$ where $\theta = (\eta, \Sigma)$, $\eta \in \mathbb{R}^p$ and $\Sigma \in \mathbb{R}^{p \times p}_{+}$. We first compute the neighborhoods $\mathcal{N}_i$ corresponding to $X_i$ as in Section \ref{subsection:NNDPframework} and place a normal-inverse Wishart (NIW) prior on $\theta_i = (\eta_i, \Sigma_i)$, given by $(\eta_i, \Sigma_i) \sim \mbox{NIW}_p(\mu_0, \nu_0, \gamma_0, \Psi_0)$ independently for $i = 1, \ldots,n$. That is, we let $$\eta_i \mid \Sigma_i \sim \Gauss\left(\mu_0, \dfrac{\Sigma_i}{ \nu_0} \right), \quad \Sigma_i \sim \mbox{IW}_p(\gamma_0, \Psi_0),$$ with $\mu_0 \in \mathbb{R}^p$, $\nu_0>0,\,\gamma_0> p-1$ and  $\Psi_0 \in \mathbb{R}_{+}^{p \times p}$; for details about parametrization see Section \ref{app:invwishart} of the Appendix. 


Monte Carlo samples from the pseudo-posterior of $f(x)$ can be obtained using Algorithm \ref{algo:NNDP_multivariate}. The corresponding steps for the univariate case are provided in Section \ref{app:univariate_gaussian} of the Appendix.
\begin{algorithm}
\begin{itemize}
\item \textbf{Step 1: }For $i=1,\ldots,n$, compute the neighborhood $\mathcal{N}_i$ for data point $X_i \in \mathbb{R}^{p}$ according to distance $d(\cdot, \cdot)$ with $(k-1)$ nearest neighbors in $\mathcal{X}^{-i} = \mathcal{X}^{(n)} \setminus \{X_i\}$.
\item \textbf{Step 2: }Update the parameters for neighborhood $\mathcal{N}_i$  to $(\mu_i, \nu_n, \gamma_n, \Psi_i)$, where $\nu_n = \nu_0 + k$, $\gamma_n  = \gamma_0 + k$, $$\mu_i  = \dfrac{1}{\nu_n} \left(\nu_0\mu_0 + k \bar{X}_{i}\right), \quad \bar{X}_{i} = \dfrac{1}{k} \sum_{j \in \mathcal{N}_i} X_j, \mbox{ and }$$ $$\Psi_i  =  \Psi_0 + \sum_{j \in \mathcal{N}_i} (X_j - \bar{X}_{i})(X_j - \bar{X}_{i})^\T  + \dfrac{k \nu_0} {\nu_n} (\bar{X}_{i} - \mu_0)(\bar{X}_{i} - \mu_0)^\T .$$
\item \textbf{Step 3: }To compute the $t$-th Monte Carlo sample $f^{(t)}(x)$ of $f(x)$, sample Dirichlet weights $\pi^{(t)} \sim \mbox{Dirichlet}(\alpha+1, \ldots, \alpha+1)$ and neighborhood specific parameters $(\eta_{i}^{(t)}, \Sigma_{i}^{(t)}) \sim  \mbox{NIW}_p(\mu_i,\,\nu_n, \,\gamma_n, \Psi_i)$ independently for $i = 1, \ldots, n$, and set 
\begin{equation*}
         f^{(t)}(x) = \sum_{i=1}^n \pi_{i}^{(t)} \phi_p \left(x ; \eta_{i}^{(t)}, \Sigma_{i}^{(t)} \right).
\end{equation*}
\end{itemize}
\caption{Nearest neighbor-Dirichlet mixture algorithm to obtain Monte Carlo samples from the pseudo-posterior of $f(x)$ with Gaussian kernel and normal-inverse Wishart prior.}
\label{algo:NNDP_multivariate}
\end{algorithm}
Although the pseudo-posterior distribution of $f(x)$ lacks an analytic form, we can obtain a simple form for its pseudo-posterior mean by integrating over the pseudo-posterior distribution of $(\theta_i)_{i=1}^{n}$ and $\pi$. Recall the definitions of $\mu_i$ and $\Psi_i$ from Step 2 of Algorithm \ref{algo:NNDP_multivariate} and define $\Lambda_i = \{\nu_{n}(\gamma_{n}-p+1)\}^{-1} (\nu_{n}+1) \, \Psi_i$. Then the pseudo-posterior mean of $f(x)$ is given by
 \begin{equation}\label{eq:multivariate_posterior_mean}
   \hat{f}_{n}(x) = \dfrac{1}{n} \sum_{i=1}^{n} t_{\gamma_{n}-p+1}(x ; \mu_i, \Lambda_i),
\end{equation}
where $t_{\gamma}(x ; \mu, \Lambda)$ for $x \in \mathbb{R}^p$ is the $p$-dimensional Student's t-density with degrees of freedom $\gamma > 0$, location $\mu \in \mathbb{R}^{p}$ and scale matrix $\Lambda \in \mathbb{R}_{+}^{p \times p}$. 
We proceed with using Gaussian kernels and NIW conjugate priors when implementing the NN-DM for the remainder of the paper.

\subsection{Hyperparameter Choice}\label{sec:CV}
The hyperparameters in the prior for the neighborhood-specific parameters need to be chosen carefully -- we found results to be sensitive to $\gamma_0$ and $\Psi_0$. If non-informative values are chosen for these key hyperparameters, we tend to inherit typical problems of nearest neighbor estimators including lack of smoothness and high variance. Suppose $\Sigma \sim \mbox{IW}_{p}(\gamma_{0}, \Psi_{0})$ and for $i, j = 1,\ldots,p$, let $\Sigma_{ij} \text{ and } \Psi_{0,\, ij}$ denote the $i, j$th entry of $\Sigma$ and $\Psi_0$, respectively. Then $\Sigma_{jj} \sim \mbox{IG}(\gamma_{*}/2, \Psi_{0,\, jj}/2)$ where $\gamma_{*} = \gamma_{0} - p + 1$. For $p=1$, the $\mbox{IW}_{p}(\gamma_0, \Psi_0)$ density simplifies to an $\mbox{IG}(\gamma_0/2, \gamma_0 \delta_0^2/2)$ density with $\delta_0^2 = \Psi_0/\gamma_0$. Thus borrowing from the univariate case, we set $\Psi_{0,\, jj} = \gamma_{*}\delta_{0}^2$ and $\Psi_{0,\, ij} = 0$ for all $i \neq j$, which implies that $\Psi_{0} = (\gamma_* \delta_{0}^2) \, \mathbbm{I}_p$ and we use leave-one-out cross-validation to select the optimum $\delta_{0}^2$. With $p$ dimensional data, we recommend fixing $\gamma_{0} = p$ which implies a multivariate Cauchy prior predictive density. We choose the leave-one-out log-likelihood as the criterion function for cross-validation, which is closely related to minimizing the Kullback-Leibler divergence between the true and estimated density \citep{hall1987kullback,bowman1984alternative}. The explicit expression for the pseudo-posterior mean in \eqref{eq:multivariate_posterior_mean} makes cross-validation computationally efficient. The description of a fast implementation is provided in Section \ref{app:CVdetails} of the Appendix.

The proposed method has substantially faster runtime if one uses a default choice of hyperparameters. 
In particular, we found the default values $\mu_0 = 0_p, \nu_0 = 0.001, \gamma_0 = p,$ and $\Psi_0 = \mathbbm{I}_p$ to work well across a number of simulation cases, especially when the true density is smooth.
Although using cross-validation to estimate $\Psi_0$ can lead to improved performance when the underlying density is spiky, cross-validation provides little to no gains for smooth true densities.  Furthermore, with low sample size and increasing number of dimensions, we found this improvement to diminish rapidly. In order to obtain desirable uncertainty quantification in simulations and applications, we found small values of $\alpha$ to work well. As a default value, we recommend using $\alpha = 0.001$ for small samples and moderate dimensions.

The other key tuning parameter for NN-DM is the number of nearest neighbors $k = k_n$. The pseudo-posterior mean in \eqref{eq:multivariate_posterior_mean} reduces to a single $t_{\gamma_n - p +1}$ kernel if $k_n = n$. In contrast, $k_n = 1$ provides a sample smoothing kernel density estimate with a specific bandwidth function \citep{terrell1992variable}. Therefore, the choice of $k$ can impact the smoothness of the density estimate.
To assess the sensitivity of the NN-DM estimate to the choice of $k$, we investigate how the out-of-sample log-likelihood of a test set changes with respect to $k$ in Section \ref{sec:k-cv-results}. These simulations suggest that the proposed method is quite robust to the exact choice of $k$. In practice with finite samples and small dimensions, we recommend a default choice of 
 $k_n = \lfloor n^{1/3} \rfloor + 1$ and $k_n = 10$ for univariate and multivariate cases, respectively.  These values led to good performance across a wide variety of simulation cases as described in  
 Section \ref{sec:simulation}.

\section{Theory}\label{sec:theory}

\subsection{Asymptotic Properties}\label{sec:mean_and_variance}

There is a rich literature on asymptotic properties of the posterior measure for an unknown density under Bayesian models, providing a frequentist justification for Bayesian density estimation; refer, for example to 
\cite{ghosal1999posterior}, \cite{ghosal2007posterior}. Unfortunately, the tools developed in this literature rely critically on the mathematical properties of fully Bayes posteriors, providing theoretical guarantees for a computationally intractable exact posterior distribution under a Bayesian model.  Our focus is instead on providing frequentist asymptotic guarantees for our computationally efficient NN-DM approach, with this task made much more complex by the dependence across neighborhoods induced by the use of a nearest neighbor procedure.

We first focus on proving pointwise consistency of the pseudo-posterior of $f(x)$ induced by \eqref{eq:NNDM-PP} for each $x \in [0,1]^p$, using Gaussian kernels as in Section \ref{sec:multivariate_gaussian}. We separately study the mean and variance of the NN-DM pseudo-posterior distribution, first showing that the pseudo-posterior mean in \eqref{eq:multivariate_posterior_mean} is pointwise consistent and then that the pseudo-posterior variance vanishes asymptotically.
 The key idea behind our proof is to show that the pseudo-posterior mean is asymptotically close to a kernel density estimator with suitably chosen bandwidth for fixed $p$ and $k_n \to \infty$ at a desired rate. The proof then follows from standard arguments leading to consistency of kernel density estimators. 
The NN-DM pseudo-posterior mean mimics a kernel density estimator only in the asymptotic regime; in finite sample simulation studies (refer to Section \ref{sec:simulation}), NN-DM has much better performance. The detailed proofs of all results in this section are in the Appendix.

Consider independent and identically distributed data $\mathcal{X}^{(n)}$ from a fixed unknown density $f_{0}$ with respect to the Lebesgue measure on $\mathbb{R}^p$ equipped with the Euclidean metric, inducing the measure $P_{f_{0}}$ on $\mathcal{B}(\mathbb{R}^p)$. We use $\widetilde{E}\{f(x)\}$, $\widetilde{\mbox{var}}\{f(x)\},$ and $\widetilde{\mbox{pr}}\{f(x) \in B \}$ to denote the mean of $f(x)$, variance of $f(x)$, and probability of the event $\{f(x) \in B \}$ for $B \in \mathcal{B}(\mathbb{R}^p)$, respectively, under the pseudo-posterior distribution of $f(x)$ implied by \eqref{eq:NNDM-PP}. We make the following regularity assumptions on $f_0$: 

\begin{assumption}[Compact support] \label{assump:a1}
 $f_0$ is supported on $[0,1]^p$.
\end{assumption}
\begin{assumption}[Bounded gradient]\label{assump:a2}
$f_{0}$ is continuous on $[0,1]^p$ with $||\nabla f_{0}(x)||_2 \, \leq L$ for all $x \in [0,1]^p$ and some finite $L > 0$.
\end{assumption}
\begin{assumption}[Bounded sup-norm]\label{assump:a3}
$||\log(f_0)||_{\infty} < \infty.$
\end{assumption}
Our asymptotic analysis relies on analyzing the behaviour of the pseudo-posterior updates within each nearest neighborhood. We leverage on key results from \cite{biau2015lectures, evans2002asymptotic} which are based on the assumption that the true density has compact support as in Assumption \ref{assump:a1}. Assumption \ref{assump:a2} ensures that the kernel density estimator has finite expectation. Versions of this assumption are common in the kernel density literature; for example, refer to \cite{tsybakov2004introduction}. Assumptions \ref{assump:a1} and \ref{assump:a3} imply the existence of $0 < a_1, a_2 < \infty$ such that $0 < a_1 < f_0(x) < a_2 < \infty$ for all $x \in [0,1]^p$, which is referred to as a positive density condition by \cite{evans2008law, evans2002asymptotic}. This is used to establish consistency of the proposed method and justify the choice of the pseudo-posterior distribution of the weights. 
 These assumptions are standard in the literature studying frequentist asymptotic properties of nearest neighbor and Bayesian density estimators.
 

For $i=1,\ldots,n$, recall the definitions of $\mu_i$ and $\Lambda_i$ from  \eqref{eq:multivariate_posterior_mean}: 
$$\mu_{i} = \frac{\nu_0}{\nu_n}\mu_0 + \frac{k_n}{\nu_n}\bar{X}_i, \quad \Lambda_{i} = \frac{\nu_{n}+1}{\nu_{n}(\gamma_n - p + 1)}\Psi_{i},$$
where $\nu_n = \nu_0 + k_n, \gamma_n = \gamma_0 + k_n$, and $\bar{X}_i, \Psi_i$ are as in Algorithm \ref{algo:NNDP_multivariate}. Define the bandwidth matrix 
\begin{equation}
    \label{eq:h-n-sq}
    H_n = h_n^2 \mathbbm{I}_p, \quad \text{where \,} h_n^2 = \frac{(\nu_n+1)(\gamma_0-p+1)}{\nu_n (\gamma_n - p + 1)} \delta_0^2.
\end{equation}
We have suppressed the dependence of $\mu_i$ and $\Lambda_i$ on $n$ for notational convenience. It is immediate that $h_n^2 \to 0$ if $k_n \to \infty$ as $n \to \infty$. Fix $x \in [0,1]^p$. To prove consistency of the pseudo-posterior mean, we first show that $\hat{f}_n(x)$ and $f_K(x) = (1/n) \sum_{i=1}^{n} t_{\gamma_n - p + 1}(x; X_i, H_n)$ are asymptotically close, that is we show that $E_{P_{f_0}}(\, |\hat{f}_n(x) - f_K(x)| \,) \to 0$ as $n \to \infty$. To obtain this result, we approximate $\mu_i$ by $X_i$ and $\Lambda_i$ by $H_n$ using successive applications of the mean value theorem. Finally, we exploit the convergence of $f_K(x)$ to the true value $f_0(x)$ to obtain the consistency of $\hat{f}_n(x)$. The proof of convergence of $f_K(x)$ to $f_0(x)$ is provided in Section \ref{app:kdeconsistency} of the Appendix. The precise statement regarding the consistency of the pseudo-posterior mean is given in the following theorem. Let $a\wedge b$ denote the minimum of $a$ and $b$.



\begin{theorem}
Fix $x \in [0,1]^p$. Let $k_n = o(n^{i_0})$ with $i_0 = \{2/(p^2+p+2)\} \wedge \{4/(p+2)^2\}$ such that $k_n \to \infty$ as $n \to \infty$, and $\nu_0 = o\{n^{-2/p} k_n^{(2/p)+1}\}$. Then, $\hat{f}_{n}(x) \to f_{0}(x)$ in $P_{f_0}$-probability as $n \to \infty$.
\label{theorem:post_mean}
\end{theorem}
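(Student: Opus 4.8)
The plan is to prove consistency through the decomposition
\[
\hat{f}_n(x) - f_0(x) = \{\hat{f}_n(x) - f_K(x)\} + \{f_K(x) - f_0(x)\},
\]
where $f_K(x) = n^{-1}\sum_{i=1}^{n} t_{\gamma_n - p + 1}(x; X_i, H_n)$ is the surrogate t-kernel density estimator with bandwidth matrix $H_n = h_n^2 \mathrm{I}_p$ introduced above. Under the stated rate one has $h_n^2 \asymp k_n^{-1} \to 0$ and $n h_n^p \asymp n k_n^{-p/2} \to \infty$, so the second bracket $f_K(x) - f_0(x) \to 0$ in $P_{f_0}$-probability by consistency of a t-kernel density estimator; I would establish this as a separate lemma (the bias vanishes by a first-order Taylor expansion using the bounded gradient in Assumption \ref{assump:a2}, and the variance vanishes because $n h_n^p \to \infty$), relegating it to the appendix as indicated in the text. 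It then remains to show $\hat{f}_n(x) - f_K(x) \to 0$ in probability, for which I would prove the stronger statement $\E(|\hat{f}_n(x) - f_K(x)|) \to 0$.

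For this $L^1$ bound, since the $n$ summands are identically distributed by exchangeability of the iid sample, it suffices to control $\E|t_{\gamma_n - p + 1}(x;\mu_1,\Lambda_1) - t_{\gamma_n - p + 1}(x;X_1,H_n)|$. The two kernels differ only through location and scale, so I would compare them by a two-step mean value argument, first moving the location from $X_1$ to $\mu_1$ and then the scale from $H_n$ to $\Lambda_1$, giving the pointwise bound
\[
|t_{\gamma_n - p + 1}(x;\mu_1,\Lambda_1) - t_{\gamma_n - p + 1}(x;X_1,H_n)| \lesssim \sup\norm{\partial_\mu t}\,\norm{\mu_1 - X_1} + \sup\norm{\partial_\Lambda t}\,\norm{\Lambda_1 - H_n},
\]
the suprema being over intermediate parameters on the segments joining the endpoints. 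Writing $t_{\gamma_n - p + 1}(\cdot;\mu,h^2\mathrm{I}_p) = h^{-p}\psi((\cdot-\mu)/h)$ for the standardized t-density $\psi$, and using the crucial fact $(\gamma_n - p + 1)h_n^2 = O(1)$ together with $\gamma_n - p + 1 \to \infty$ (so that $\psi$ and its derivatives are bounded uniformly in $n$ and approach the Gaussian), the derivative magnitudes scale as $\norm{\partial_\mu t} \lesssim h_n^{-(p+1)}$ and $\norm{\partial_\Lambda t} \lesssim h_n^{-(p+2)}$. I would also verify that the intermediate scale matrices remain of order $h_n^2\mathrm{I}_p$ under the rate assumptions, so these bounds hold along the entire path.

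The last step converts the parameter gaps into rates through nearest neighbor distances. By construction $\norm{\mu_1 - X_1}$ is governed by $\nu_0 \nu_n^{-1}\norm{\mu_0 - X_1}$ and by $\norm{\bar{X}_1 - X_1} \le R_{k_n}(X_1)$, while $\Lambda_1 - H_n = \{(\nu_n+1)/(\nu_n(\gamma_n - p + 1))\}(\Psi_1 - \Psi_0)$ is governed by the neighborhood scatter $\sum_{j\in\mathcal{N}_1}\norm{X_j - \bar{X}_1}^2 \lesssim k_n R_{k_n}(X_1)^2$ and a $\nu_0$-term. Invoking the asymptotic moments of $k$-nearest neighbor radii under the positive density condition guaranteed by Assumptions \ref{assump:a1} and \ref{assump:a3}, namely $\E\{R_{k_n}(X_1)^m\} \asymp (k_n/n)^{m/p}$ from \cite{biau2015lectures, evans2002asymptotic}, yields $\E\norm{\mu_1 - X_1} \lesssim (k_n/n)^{1/p} + \nu_0/k_n$ and $\E\norm{\Lambda_1 - H_n} \lesssim (k_n/n)^{2/p} + \nu_0/k_n$. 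Combining with $h_n^2 \asymp k_n^{-1}$ gives
\[
\E(|\hat{f}_n(x) - f_K(x)|) \lesssim k_n^{(p+1)/2}\left(\frac{k_n}{n}\right)^{1/p} + k_n^{(p+2)/2}\left(\frac{k_n}{n}\right)^{2/p} + (\text{lower order }\nu_0\text{ terms}).
\]
The first term vanishes exactly when $k_n = o(n^{2/(p^2+p+2)})$, the first constituent of $i_0$; a careful accounting of the scale-perturbation term supplies the companion constraint matching the $4/(p+2)^2$ constituent; and the hypothesis $\nu_0 = o(n^{-2/p} k_n^{(2/p)+1})$ is precisely what renders the $\nu_0$-terms negligible relative to the scatter terms. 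Taking the minimum of the two exponents reproduces the stated rate $i_0$.

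I expect the main obstacle to lie in the scale-perturbation term: one must simultaneously control the shrinking bandwidth $h_n \to 0$ and the diverging degrees of freedom $\gamma_n - p + 1 \to \infty$ so that the t-kernel and its $\Lambda$-derivatives behave like their Gaussian limits uniformly in $n$, confirm that the mean value path stays in the regime where $\Lambda^{-1}$ is of order $h_n^{-2}$, and obtain the nearest neighbor moment bounds in expectation rather than merely in probability, since these moments drive the entire $L^1$ argument and pin down the admissible growth rate of $k_n$.
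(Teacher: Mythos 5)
Your proposal takes essentially the same route as the paper's proof: the same decomposition through the surrogate t-kernel estimator $f_K$, the same reduction to a single summand by exchangeability, the same two mean value theorem steps (organized in the paper as Propositions \ref{prop:mult1} and \ref{prop:mult2}, with the intermediate estimator $f_A(x) = n^{-1}\sum_{i=1}^n t_{\gamma_n-p+1}(x;X_i,\Lambda_i)$ in which only the locations are swapped), the same nearest-neighbor moment bound $E_{P_{f_0}}(R_1^2)\lesssim (k_n/n)^{2/p}$ from \cite{biau2015lectures} and \eqref{eq:mack_result}, the same role for the $\nu_0$ conditions, and the same separate appendix lemma for consistency of $f_K$ (Taylor expansion with Assumption \ref{assump:a2} for the bias, $nh_n^p\to\infty$ for the variance).

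The one substantive discrepancy is in the scale-perturbation term, exactly the step you flag as the main obstacle. You claim $\norm{\partial_\Lambda t}\lesssim h_n^{-(p+2)}$; the paper's bound in Proposition \ref{prop:mult2} is $\norm{M_1}_F\lesssim h_n^{-(p+2)}(\gamma_n+1)$, carrying an extra factor $\gamma_n+1\asymp k_n$ because it bounds $\sup_{\Sigma} t$ and $\sup_{\Sigma}\norm{\partial_\Sigma\log t}_F$ separately. Your sharper, $k_n$-free bound is in fact provable — the region where $\partial_\Sigma\log t$ is of order $\gamma_n h_n^{-2}$ is precisely where the density is vanishingly small, so the product $t\cdot\partial_\Sigma\log t$ is uniformly $O(h_n^{-(p+2)})$ — but this is exactly the ``careful accounting'' you defer, and you then draw the wrong conclusion from it: your stated bound $k_n^{(p+2)/2}(k_n/n)^{2/p}$ vanishes if and only if $k_n = o(n^{4/(p^2+2p+4)})$, which is a \emph{weaker} constraint than, and does not ``match,'' the $4/(p+2)^2$ constituent of $i_0$; that exponent comes from the paper's lossier bound $k_n^{(p+2)/2+1}(k_n/n)^{2/p}$. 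The mismatch is harmless for the theorem as stated, since $k_n=o(n^{i_0})$ implies both constraints, so your argument closes either way; but the claim as written is arithmetically incorrect, and to fix it you must either prove the sharp product bound or insert the $(\gamma_n+1)$ factor and recover $4/(p+2)^2$ as the paper does. One minor point: along the MVT path you only need the one-sided bound $\Sigma_0\geq H_n$, which holds deterministically because $\Lambda_1\geq H_n$ (as $\Psi_1\geq\Psi_0$); the two-sided ``order $h_n^2\mathrm{I}_p$'' control you propose to verify is neither available nor needed.
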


We now look at the pseudo-posterior variance of $f(x)$. We let 
\begin{equation}
    \label{eq:post-var-RnDn}
    R_{n} = \dfrac{\Gamma\{(\gamma_n - p +2)/2 \}}{\Gamma\{(\gamma_n - p +1)/2 \}} \left[\dfrac{\nu_n+2}{4 \pi \nu_n (\gamma_n - p + 2)}\right]^{p/2} \text{ and }  D_{n} = \frac{(\gamma_n - p + 1) (\nu_n + 2)}{2(\gamma_n - p + 2) (\nu_n + 1)}.
\end{equation}
For $i=1,\ldots,n$, let $B_{i} = D_n \Lambda_i$ and define
\begin{equation}
    \label{eq:post-var-fhatvar}
    \widehat{f}_{var}(x) = \frac{1}{n} \sum_{i=1}^{n} t_{\gamma_n - p + 2}(x; \mu_i, B_i).
\end{equation} 
As $n \to \infty$, we have $D_n \to 1/2$. Analogous steps to the ones used in the proof of Theorem \ref{theorem:post_mean} can be used to imply that $\widehat{f}_{var}(x) \to f_0(x)$ in $P_{f_0}$-probability. Also, as $n \to \infty$, $k_n^{(p-1)/2} R_{n} = \mathcal{O}(1)$ using Stirling's approximation. We now provide an upper bound on the pseudo-posterior variance of $f(x)$ which shows convergence of the pseudo-posterior variance to $0$.
\begin{theorem}
Let $H_n$ be the bandwidth matrix defined in \eqref{eq:h-n-sq}. Let $R_n, D_n$ be as in \eqref{eq:post-var-RnDn} and $\widehat{f}_{var}$ be as in \eqref{eq:post-var-fhatvar}. Under Assumptions \ref{assump:a1}-\ref{assump:a3} with $x, k_n,$ and $\nu_0$  as in Theorem \ref{theorem:post_mean}, we have
\begin{equation}
\widetilde{\mbox{{\em var}}} \{f(x) \} \leq \frac{R_n D_n^{-p/2} \widehat{f}_{var}(x)}{|H_n|^{1/2}} \left\{\frac{1}{n(\alpha+1) + 1} + \frac{1}{n} \right\}.  
\end{equation}
This implies $\widetilde{\mbox{{\em var}}} \{f(x) \} \to 0$ in $P_{f_0}$-probability as $n \to \infty$.
\label{theorem:post_var}
\end{theorem}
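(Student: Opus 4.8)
\section*{Proof proposal for Theorem \ref{theorem:post_var}}

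The plan is to exploit the conditional independence built into the pseudo-posterior: the weights $\pi\sim\mbox{Dirichlet}(\alpha+1,\dots,\alpha+1)$ are independent of the kernel parameters $(\eta_i,\Sigma_i)\sim\mbox{NIW}_p(\mu_i,\nu_n,\gamma_n,\Psi_i)$, which are in turn independent across $i$. Writing $g_i=\phi_p(x;\eta_i,\Sigma_i)$ so that $f(x)=\sum_{i=1}^n\pi_i g_i$, I would apply the law of total variance, first conditioning on $(g_1,\dots,g_n)$. Since $E(\pi_i)=1/n$ and the symmetric Dirichlet has $\mathrm{var}(\pi_i)=\{(1/n)(1-1/n)\}/\{n(\alpha+1)+1\}$ and $\mathrm{cov}(\pi_i,\pi_j)=-\{1/n^2\}/\{n(\alpha+1)+1\}$, the conditional variance contracts to $\{n(\alpha+1)+1\}^{-1}\{n^{-1}\sum_i g_i^2-n^{-2}(\sum_i g_i)^2\}$, while the variance of the conditional mean $n^{-1}\sum_i g_i$ equals $n^{-2}\sum_i\mathrm{var}(g_i\mid\mathcal{X}^{(n)})$. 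This is exactly the source of the two coefficients $\{n(\alpha+1)+1\}^{-1}$ and $1/n$ appearing in the bound. Taking pseudo-posterior expectations and discarding only the terms that help, namely the nonnegative $n^{-2}(\sum_i g_i)^2$ in the first piece and using $\mathrm{var}(g_i\mid\mathcal{X}^{(n)})\le E(g_i^2\mid\mathcal{X}^{(n)})$ in the second, reduces the problem to showing $n^{-1}\sum_{i=1}^n E(g_i^2\mid\mathcal{X}^{(n)})\le R_nD_n^{-p/2}|H_n|^{-1/2}\widehat f_{var}(x)$, after which the stated inequality is immediate.

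The computational heart is therefore the second moment $E(g_i^2\mid\mathcal{X}^{(n)})=E\{\phi_p(x;\eta_i,\Sigma_i)^2\}$ under the NIW posterior. I would start from the identity $\phi_p(x;\eta,\Sigma)^2=(4\pi)^{-p/2}|\Sigma|^{-1/2}\phi_p(x;\eta,\Sigma/2)$, integrate out $\eta\mid\Sigma\sim\Gauss(\mu_i,\Sigma/\nu_n)$ by Gaussian convolution to obtain $(4\pi)^{-p/2}|\Sigma|^{-1/2}\phi_p(x;\mu_i,\{(\nu_n+2)/(2\nu_n)\}\Sigma)$, and then integrate $\Sigma\sim\mbox{IW}_p(\gamma_n,\Psi_i)$. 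The resulting integrand is an unnormalized $\mbox{IW}_p(\gamma_n+2,\Psi_i+\{2\nu_n/(\nu_n+2)\}(x-\mu_i)(x-\mu_i)^\T)$ density, so the matrix determinant lemma collapses it, up to explicit constants, into the Student-t kernel $t_{\gamma_n-p+2}(x;\mu_i,B_i)$ with $B_i=D_n\Lambda_i$ exactly as defined; the degrees of freedom $\gamma_n-p+2$ and the scale $B_i$ are forced by matching the exponent $-(\gamma_n+2)/2$ and the quadratic form. Averaging over $i$ reproduces $\widehat f_{var}(x)$. The same computation leaves a residual factor $|\Psi_i|^{-1/2}$, which I would control by $\Psi_i\succeq\Psi_0=(\gamma_0-p+1)\delta_0^2\mathrm{I}_p$ (the prior scale is added to a positive semidefinite sum), giving $|\Psi_i|^{-1/2}\le\{(\gamma_0-p+1)\delta_0^2\}^{-p/2}$; this is precisely the quantity that, combined with the inverse-Wishart and Student-t normalizing constants, assembles into $|H_n|^{-1/2}$ and the factors $R_n$ and $D_n$.

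The main obstacle is exactly this constant bookkeeping: verifying that the multivariate-gamma ratio $\Gamma_p\{(\gamma_n+2)/2\}/\Gamma_p(\gamma_n/2)$, the factors generated by the $\phi_p^2$ identity and the Gaussian convolution, the determinant-lemma factor, and the Student-t normalization all collapse exactly into $R_n$, $D_n$ and $|H_n|^{-1/2}$, with the $|\Psi_i|$ bound supplying the $\delta_0$ dependence inside $H_n$. Once the bound is in hand, convergence to zero is routine. By the same mean-value-theorem argument used for Theorem \ref{theorem:post_mean} one has $\widehat f_{var}(x)\to f_0(x)$ in $P_{f_0}$-probability, while $D_n\to1/2$ and $k_n^{(p-1)/2}R_n$ is bounded, so the deterministic prefactor $R_nD_n^{-p/2}|H_n|^{-1/2}$ grows only polynomially in $k_n$; multiplied by $\{n(\alpha+1)+1\}^{-1}+n^{-1}=O(1/n)$, the rate conditions $k_n=o(n^{i_0})$ and $\nu_0=o\{n^{-2/p}k_n^{(2/p)+1}\}$ force the product to vanish, yielding $\mathrm{var}\{f(x)\mid\mathcal{X}^{(n)}\}\to0$ in $P_{f_0}$-probability.
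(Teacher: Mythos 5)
Your proposal is correct and takes essentially the same route as the paper's own proof: the identical law-of-total-variance decomposition conditioning on the kernel evaluations $z_i=\phi_p(x;\eta_i,\Sigma_i)$, the same Dirichlet covariance computation producing the coefficients $\{n(\alpha+1)+1\}^{-1}$ and $n^{-1}$, the same NIW second-moment identity $E(z_i^2\mid\mathcal{X}^{(n)})=R_n\,|B_i|^{-1/2}\,t_{\gamma_n-p+2}(x;\mu_i,B_i)$, the same determinant bound $|B_i|\geq D_n^p\,|H_n|$ (via $\Psi_i\succeq\Psi_0$, equivalently $|\Lambda_i|\geq|H_n|$), and the same limiting argument combining $\widehat{f}_{var}(x)\to f_0(x)$ with the polynomial-in-$k_n$ growth of the prefactor. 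The only difference is one of detail: you spell out the derivation of the second-moment identity (square-of-Gaussian identity, convolution over $\eta$, inverse-Wishart conjugacy, matrix determinant lemma), a step the paper asserts without proof.
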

Refer to Sections \ref{app:multproof} and \ref{app:proof_post_var} in the Appendix for proofs of Theorems 4 and 5, respectively. Pointwise pseudo-posterior consistency follows from Theorems \ref{theorem:post_mean} and \ref{theorem:post_var}, as shown below. 
\begin{theorem}\label{theorem:post_conc}
Let $f_0$ satisfy Assumptions \ref{assump:a1}-\ref{assump:a3} with $x, k_n$ and $\nu_0$ as in Theorem \ref{theorem:post_mean}. Fix $\epsilon > 0$ and define the $\epsilon$-ball around $f_{0}(x)$ by $U_{\epsilon} = \{y_* :\, |y_* - f_{0}(x)| \, \leq \epsilon\}$. Let $\widetilde{\text{{\em pr}}}\{f(x) \in U_{\epsilon}^{c} \}$ denote the probability of the set $U_\epsilon^c$ under the pseudo-posterior distribution of $f(x)$ as induced by \eqref{eq:NNDM-PP}. Then 
$\widetilde{\text{{\em pr}}}\{f(x) \in U_{\epsilon}^{c} \} \to 0$ in $P_{f_0}$-probability as $n \to \infty$.
\end{theorem}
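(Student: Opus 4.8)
The plan is to deduce pointwise concentration from the two preceding results by a conditional Chebyshev argument, keeping the randomness in the data and the randomness in the pseudo-posterior as two separate layers. The starting observation is that the pseudo-posterior mean $E\{f(x)\mid\mathcal{X}^{(n)}\}$ equals $\hat{f}_n(x)$ of \eqref{eq:multivariate_posterior_mean}, so that by the triangle inequality, on the data event where $|\hat{f}_n(x) - f_0(x)| \leq \epsilon/2$ one has the pointwise inclusion $U_\epsilon^c \subseteq \{y_*:|y_* - \hat{f}_n(x)| > \epsilon/2\}$ at the level of the pseudo-posterior.

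First I would fix $\epsilon, \delta > 0$ and introduce two ``good'' data events, $A_n = \{|\hat{f}_n(x) - f_0(x)| \leq \epsilon/2\}$ and $B_n = \{\mbox{var}\{f(x)\mid\mathcal{X}^{(n)}\} \leq \epsilon^2\delta/8\}$. Theorem \ref{theorem:post_mean} gives $P_{f_0}(A_n^c) \to 0$ and Theorem \ref{theorem:post_var} gives $P_{f_0}(B_n^c) \to 0$, since both the bias and the pseudo-posterior variance vanish in $P_{f_0}$-probability.

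Next, on $A_n$ I would combine the triangle-inequality inclusion above with Markov's inequality applied conditionally on $\mathcal{X}^{(n)}$, yielding
\begin{equation*}
\mbox{pr}\{f(x)\in U_\epsilon^c\mid\mathcal{X}^{(n)}\} \leq \mbox{pr}\{|f(x)-\hat{f}_n(x)|>\epsilon/2\mid\mathcal{X}^{(n)}\} \leq \frac{4\,\mbox{var}\{f(x)\mid\mathcal{X}^{(n)}\}}{\epsilon^2}.
\end{equation*}
Intersecting with $B_n$ forces the right-hand side below $\delta/2 < \delta$, so that $\{\mbox{pr}\{f(x)\in U_\epsilon^c\mid\mathcal{X}^{(n)}\} > \delta\} \subseteq A_n^c \cup B_n^c$. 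A union bound then gives
\begin{equation*}
P_{f_0}\bigl(\mbox{pr}\{f(x)\in U_\epsilon^c\mid\mathcal{X}^{(n)}\} > \delta\bigr) \leq P_{f_0}(A_n^c) + P_{f_0}(B_n^c) \longrightarrow 0,
\end{equation*}
which is precisely the asserted convergence in $P_{f_0}$-probability, as $\delta$ was arbitrary.

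This argument is essentially routine once Theorems \ref{theorem:post_mean} and \ref{theorem:post_var} are in hand, so I do not anticipate a genuine obstacle. The only point demanding care is keeping the two sources of randomness cleanly separated -- the conditional Chebyshev bound lives inside the pseudo-posterior for a fixed dataset, whereas the convergence of $\hat{f}_n(x)$ to $f_0(x)$ is in the outer data measure $P_{f_0}$ -- and calibrating the threshold in $B_n$ (here $\epsilon^2\delta/8$) so that the final conditional bound lands strictly below $\delta$.
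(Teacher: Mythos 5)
Your proof is correct and takes essentially the same route as the paper's: a single conditional Chebyshev bound whose two ingredients are the bias, controlled by Theorem \ref{theorem:post_mean}, and the pseudo-posterior variance, controlled by Theorem \ref{theorem:post_var}. The paper is merely more compact, applying Chebyshev directly about $f_0(x)$ to get the bound $[\{\hat{f}_n(x)-f_0(x)\}^2 + \mbox{var}\{f(x)\mid\mathcal{X}^{(n)}\}]/\epsilon^2$ in one line, whereas you center at $\hat{f}_n(x)$ via the triangle inequality and make the convergence-in-probability bookkeeping explicit with good events and a union bound.
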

\begin{proof}
Fix $\epsilon > 0$ and consider the $\epsilon$-ball $U_\epsilon = \{y_* :\, |y_* - f_{0}(x)| \, \leq \epsilon\}$. 
Then by Chebychev's inequality, we have $\widetilde{\text{{\em pr}}}\{f(x) \in U_{\epsilon}^{c} \} \leq [\{\hat{f}_n(x) - f_{0}(x)\}^2 + \widetilde{\mbox{var}}\{f(x)\}]/\epsilon^2 \xrightarrow{} 0$ in $P_{f_0}$-probability as $n \to \infty$, using Theorems \ref{theorem:post_mean} and \ref{theorem:post_var}.
\end{proof}

We next focus on the limiting distribution of $f(x)$ for the univariate case. From Section \ref{app:univariate_gaussian} of the Appendix, the pseudo-posterior distribution of $(\eta_i, \sigma_i^2)$ for $i=1,\ldots,n$ is given by NIG$(\mu_i, \nu_n,$ $\gamma_n/2, \gamma_n \delta_i^2/2)$, where $\mu_i, \nu_n, \gamma_n$ are as before and $$\gamma_n \delta_i^2 = \gamma_0 \delta_0^2 + \sum_{j \in \mathcal{N}_i} (X_j - \bar{X}_i)^2 + \frac{k_n \nu_0}{\nu_n} (\bar{X}_i - \mu_0)^2.$$  We establish in Theorem \ref{theorem:asymptotic-normality} that the limiting distribution of $f(x)$ is a Gaussian distribution with appropriate centering and scaling. This allows interpretation of $100(1-\beta)\%$ pseudo-credible intervals as $100(1-\beta)\%$ frequentist confidence intervals on average for large $n$. 
\begin{theorem}
\label{theorem:asymptotic-normality}
Fix $x \in [0,1]$. Suppose $f_0$ satisfies Assumptions \ref{assump:a1}-\ref{assump:a3} and also satisfies $|f_0^{(4)}(x)| \leq C_0$ for all $x \in [0,1]$ for some finite $C_0 > 0$. Let $k_n$ satisfy $k_n = o(n^{2/7})$ such that $ n^{-2/9} k_n \to \infty$, $h_n$ be as in \eqref{eq:h-n-sq} satisfying $h_n \to 0$, and $\alpha = \alpha_n \to \infty$, as $n \to \infty$. For $t \in \mathbb{R}$, define $$G_n(t) = \widetilde{\text{{\em pr}}}\left[(nh_n)^{1/2}\left\{f(x) - \left(f_0(x) + \frac{h_n^2 f_0^{(2)}(x)}{2} \right)\right\} \leq t \right].$$ Then, we have
$$\lim_{n \to \infty} E_{P_{f_0}}\{G_n(t)\} = \Phi\left(t \, ; 0, \frac{f_0(x)}{2\pi^{1/2}}\right),$$
where $\Phi(t \,; 0, \sigma^2)$ denotes the cumulative distribution function of the $N(0, \sigma^2)$ density.
\end{theorem}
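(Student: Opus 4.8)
The plan is to condition on the data $\mathcal{X}^{(n)}$ and exploit the fact that, given the data, the pseudo-posterior draws $(\eta_i,\sigma_i^2)$ are independent across $i$. Conditionally on $\mathcal{X}^{(n)}$, the estimator $g(x)=n^{-1}\sum_{i=1}^{n}Z_i$ with $Z_i=\phi(x;\eta_i,\sigma_i^2)$ is therefore a normalized sum of independent (non-identically distributed) random variables, and $G_n(t)$ is its conditional distribution function after centering and scaling. The conditional mean of $Z_i$ is exactly the $i$th Student's-$t$ term in \eqref{eq:univariate_posterior_mean}, so $E\{g(x)\mid\mathcal{X}^{(n)}\}=\hat{f}_n(x)$. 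I would then split
$$(nh_n)^{1/2}\Big\{g(x)-\big(f_0(x)+\tfrac12 h_n^2 f_0^{(2)}(x)\big)\Big\}=(nh_n)^{1/2}\{g(x)-\hat{f}_n(x)\}+(nh_n)^{1/2}\Big\{\hat{f}_n(x)-\big(f_0(x)+\tfrac12 h_n^2 f_0^{(2)}(x)\big)\Big\},$$
treating the second summand as a deterministic (given-data) bias term and the first as a conditionally mean-zero fluctuation.

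The main obstacle is the refined bias analysis, that is, showing the second summand tends to $0$ in $P_{f_0}$-probability. Theorem~\ref{theorem:post_mean} already controls $\hat{f}_n(x)-f_K(x)$ at the $o(1)$ scale, where $f_K(x)=n^{-1}\sum_i t_{\gamma_n}(x;X_i,H_n)$; here I need the sharper statement $(nh_n)^{1/2}\{\hat{f}_n(x)-f_K(x)\}\to0$. I would obtain this by refining the mean-value-theorem arguments behind Theorem~\ref{theorem:post_mean}: approximate $\mu_i$ by $X_i$ (the error being governed by $\nu_0$ and the $k_n$-nearest-neighbour radius) and $\lambda_i^2$ by $h_n^2$, and quantify the resulting per-kernel perturbation to higher order. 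For the bias of $f_K$ itself, since the $t_{\gamma_n}$ kernel with $\gamma_n\to\infty$ approaches the symmetric Gaussian kernel with unit second moment, a fourth-order Taylor expansion of $f_0$ (justified by $|f_0^{(4)}|\le C_0$) gives $E_{P_{f_0}}\{f_K(x)\}=f_0(x)+\tfrac12 h_n^2 f_0^{(2)}(x)+O(h_n^4)$, the degrees-of-freedom correction being itself $O(h_n^4)$. The two-sided rate $n^{2/9}\ll k_n=o(n^{2/7})$, equivalent to $nh_n^9\to0$ together with $nh_n\to\infty$, ensures that $(nh_n)^{1/2}$ times both the $O(h_n^4)$ bias and the approximation error vanishes.

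For the fluctuation term I would apply a conditional central limit theorem. The conditional variance is $\mathrm{var}\{g(x)\mid\mathcal{X}^{(n)}\}=n^{-2}\sum_i \mathrm{var}(Z_i\mid\mathcal{X}^{(n)})$, and $E(Z_i^2\mid\mathcal{X}^{(n)})$ is precisely the quantity expressed through $R_n$, $D_n$ and $\widehat{f}_{var}$ in the proof of Theorem~\ref{theorem:post_var}. Using $D_n\to 1/2$, $\widehat{f}_{var}(x)\to f_0(x)$ and the Stirling limit for $R_n$ (the squared-mean contribution being $O(n^{-1})$ and hence negligible), I would show $nh_n\,\mathrm{var}\{g(x)\mid\mathcal{X}^{(n)}\}\to f_0(x)/(2\pi^{1/2})$ in $P_{f_0}$-probability; the constant $1/(2\pi^{1/2})=\int\phi(u)^2\,du$ is the usual Gaussian-kernel roughness, confirming that $g$ behaves like a Gaussian kernel density estimator of bandwidth $h_n$. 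Conditional asymptotic normality then follows from a Lyapunov (or conditional Berry--Esseen) argument: since $E|Z_i|^{3}\asymp h_n^{-2}$ while $\sum_i \mathrm{var}(Z_i\mid\mathcal{X}^{(n)})\asymp n h_n^{-1}$, the Lyapunov ratio is of order $(nh_n)^{-1/2}\to0$, so the standardized fluctuation converges to $N(0,1)$ conditionally on the data in $P_{f_0}$-probability.

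Combining the two summands, on a set of data sequences of $P_{f_0}$-probability tending to one the conditional law of $(nh_n)^{1/2}\{g(x)-(f_0(x)+\tfrac12 h_n^2 f_0^{(2)}(x))\}$ is asymptotically $N(0,f_0(x)/(2\pi^{1/2}))$; equivalently $G_n(t)\to\Phi(t;0,f_0(x)/(2\pi^{1/2}))$ in $P_{f_0}$-probability for each fixed $t$. Because $0\le G_n(t)\le 1$, the bounded convergence theorem upgrades this to $\lim_{n\to\infty}E_{P_{f_0}}\{G_n(t)\}=\Phi(t;0,f_0(x)/(2\pi^{1/2}))$, as claimed. I expect the delicate step to be the refined bias control, because the overlapping nearest-neighbour structure makes the data-dependent locations $\mu_i$ and scales $\lambda_i$ correlated across $i$, and these must be reconciled with the Gaussian-kernel bias expansion at the finer $(nh_n)^{-1/2}$ resolution rather than the cruder resolution sufficient for Theorem~\ref{theorem:post_mean}.
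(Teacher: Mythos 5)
Your decomposition assigns the Gaussian limit to the wrong summand: the two claims your plan rests on are both false, and in fact the paper's proof distributes the roles between your two terms in exactly the opposite way.

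First, the conditional CLT you propose cannot hold, because the pseudo-posterior variance of $g(x)$ is $o_P\{(nh_n)^{-1}\}$ rather than asymptotically $f_0(x)/(2\pi^{1/2} n h_n)$. The error is your parenthetical dismissal of the squared-conditional-mean term. Writing $Z_i = \phi(x;\eta_i,\sigma_i^2)$, one has, in $P_{f_0}$-probability,
\[
nh_n\cdot\frac{1}{n^2}\sum_{i=1}^n E\bigl(Z_i^2\mid\mathcal{X}^{(n)}\bigr) \to \frac{f_0(x)}{2\pi^{1/2}}
\qquad\text{and}\qquad
nh_n\cdot\frac{1}{n^2}\sum_{i=1}^n \bigl\{E\bigl(Z_i\mid\mathcal{X}^{(n)}\bigr)\bigr\}^2 \to \frac{f_0(x)}{2\pi^{1/2}},
\]
so the two contributions are of the \emph{same} order $n^{-1}h_n^{-1}$ with the \emph{same} leading constant, and they cancel. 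The reason is that, given the data, each $Z_i$ is evaluated at a $\mathrm{NIG}$ pseudo-posterior draw that contracts around $(\mu_i,\delta_i^2)$ as $k_n\to\infty$; there is no averaging over a density here, unlike the sampling variance of a kernel density estimator, so the second moment does not dominate the squared mean. The paper makes the cancellation precise: $\mathrm{var}(Z_i\mid\mathcal{X}^{(n)})\sim|\Delta_n|\,\widetilde{\lambda}_i^{-2}\,t_{2\gamma_n+1}\{(x-\mu_i)/\widetilde{\lambda}_i\}$ with $\Delta_n = u_{\gamma_n}^2/u_{2\gamma_n+1}-(2\pi)^{-1/2}\to 0$ by Stirling's approximation, whence $nh_n\,\mathrm{var}\{g(x)\mid\mathcal{X}^{(n)}\}\to 0$ and, by Chebyshev, $(nh_n)^{1/2}|g(x)-\hat f_n(x)|\to 0$ in $P_{f_0}$-probability. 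The term you hoped would be asymptotically $N\bigl(0,f_0(x)/(2\pi^{1/2})\bigr)$ is degenerate.

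Second, the term you hoped to kill cannot be killed. Up to the correction $(nh_n)^{1/2}\{\hat f_n(x)-f_K(x)\}$, which does vanish (this piece of your plan, with the rate conditions $k_n=o(n^{2/7})$ and $n^{-2/9}k_n\to\infty$ and the fourth-order Taylor expansion, matches the paper), the quantity $(nh_n)^{1/2}\{\hat f_n(x)-f_0(x)-h_n^2f_0^{(2)}(x)/2\}$ is the standardized \emph{sampling} fluctuation of the kernel estimator $f_K(x)$ about its mean; its variance under $P_{f_0}$ is $\sim f_0(x)/(2\pi^{1/2} nh_n)$, so at this scale it converges in distribution, by the Lyapunov CLT, to exactly the nondegenerate limit $N\bigl(0, f_0(x)/(2\pi^{1/2})\bigr)$ --- it does not go to $0$ in probability, and no refinement of the mean-value-theorem bounds can change this, since those only control $\hat f_n - f_K$, never $f_K - E_{P_{f_0}}(f_K)$. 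The paper's proof is therefore: (i) $(nh_n)^{1/2}E_{P_{f_0}}|\hat f_n(x)-f_K(x)|\to 0$; (ii) Lyapunov CLT for $f_K$ under the data distribution together with the $O(h_n^4)$ bias expansion; (iii) the $\Delta_n$ bound above; (iv) Slutsky's theorem for the sum, noting that $E_{P_{f_0}}\{G_n(t)\}$ is precisely the joint probability of the event. A conceptual consequence worth absorbing: under the correct asymptotics $G_n(t)$ behaves like the indicator of a data event and does \emph{not} converge in probability to $\Phi\bigl(t;0,f_0(x)/(2\pi^{1/2})\bigr)$; this is exactly why the theorem is stated for $E_{P_{f_0}}\{G_n(t)\}$ (coverage ``on average''), and why your bounded-convergence endgame, which needs in-probability convergence of $G_n(t)$, is unavailable.
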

For a proof of Theorem \ref{theorem:asymptotic-normality}, we refer the reader to Section \ref{app:surrogate-AN} of the Appendix. 

\subsection{Pseudo-Posterior Distribution of Weights}
\label{sec:alpha-plus-one}

We investigate the rationale behind the pseudo-posterior update \eqref{eq:NN-DP2}  of the weight $\pi$, which has a symmetric prior distribution $\pi \sim \mbox{Dirichlet}(\alpha, \ldots, \alpha)$ as motivated in Section \ref{sec:intro}. As discussed in Section \ref{sec:intro}, the conditional update for the weights $\pi$ in a finite Bayesian mixture model with $m$ components given the cluster allocation indices $\{c_1, \ldots, c_n\}$ is obtained by $\text{Dirichlet}(\alpha + n_1, \ldots, \alpha + n_m)$, where $\alpha$ is the prior concentration parameter and $n_h = \sum_{i=1}^{n} \mathbb{I}(c_i = h)$ is the number of data points allocated to the $h$th cluster. 
This is not true in our case as the $k_n$-nearest neighborhoods have considerable overlap between them. 
Instead, we consider the number of effective member data points in each of these neighborhoods.

Define the $k_n$-nearest neighborhood of $X_i$ to be the set $\mathcal{S}_i = \{X_j : d(X_i, X_j) \leq d(X_i, X_{i[k_n]})\}$ where $X_{i[k_n]}$ is the $k_n$-th nearest neighbor of $X_i$ in the data $\mathcal{X}^{(n)}$, following the notation in Section \ref{subsection:NNDPframework}. We assume $d(\cdot, \cdot)$ is the Euclidean metric from here on, and let $R_i = d(X_i, X_{i[k_n]}) = ||X_i - X_{i[k_n]}||_2$ denote the distance of $X_i$ from its $k_n$-th nearest neighbor in $\mathcal{X}^{(n)}$.  

Let $N_i$ denote the number of effective members in $\mathcal{S}_i$ as defined in Section \ref{subsection:NNDPframework}. Then, we can express $N_i$ as
\begin{equation}
    \label{eq:member-number}
    N_i = 1+ \sum_{j \neq i}\mathbb{I} \left[X_j \in \mathcal{S}_i \,, \underset{u \notin \{i, j\}}{\bigcap} \{X_j \notin \mathcal{S}_u\} \right],
\end{equation}
where $\mathbb{I}(A)$ is the indicator function of the set $A$. Under  $X_1, \ldots, X_n \overset{iid}{\sim} f_0$, we have
\begin{equation}
   E_{P_{f_0}}(N_1) = 1 + (n-1) P_{f_0}\left[X_2 \in \mathcal{S}_1 \, , \bigcap_{u = 3}^{n} \{X_2 \notin \mathcal{S}_u\}\right],
\end{equation}
by symmetry. Furthermore, $N_i$ are identically distributed for $i=1,\ldots,n$. We now state a result which provides a motivation for our choice of the pseudo-posterior update of $\pi$. For two sequences of real numbers $(a_n)$ and $(b_n)$, we write $a_n \sim b_n$ if $|a_n / b_n| \to c_0$ as $n \to \infty$ for some constant $c_0 > 0.$

\begin{theorem}
\label{theorem:alpha-plus-one}
Suppose $X_1, \ldots, X_n \overset{iid}{\sim} f_0$ with $f_0$ satisfying Assumptions \ref{assump:a1}-\ref{assump:a3}. Furthermore, suppose that $k_n \sim n^{i_0 - \epsilon}$ for some $\epsilon \in (0,i_0)$, where $i_0$ is as defined in Theorem \ref{theorem:post_mean}. Then, 
\begin{equation}
    \underset{n \to \infty}{\lim} \, n \, P_{f_0}\left[X_2 \in \mathcal{S}_1, \bigcap_{u=3}^{n} \{X_2 \notin \mathcal{S}_u\}\right] = 0. 
\end{equation}
\end{theorem}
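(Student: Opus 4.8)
The plan is to show that the probability in question decays exponentially in $k_n$, so that multiplying by $n$ (indeed even by $n^2$) still produces a limit of $0$; since $k_n \sim n^{i_0 - \epsilon}$ grows polynomially in $n$, any bound of the form $n^2 \exp(-c k_n)$ vanishes. Write $A_n = \{X_2 \in \mathcal{S}_1\} \cap \bigcap_{u=3}^n \{X_2 \notin \mathcal{S}_u\}$ for the target event. The essential geometric observation is that the requirement $X_2 \notin \mathcal{S}_u$ for \emph{every} $u \ge 3$ is extremely restrictive: if all neighborhood radii $R_u$ are bounded below by a common deterministic value $\underline{r}_n$, then $X_2 \notin \mathcal{S}_u$ forces $\norm{X_2 - X_u}_2 > R_u \ge \underline{r}_n$, so that the ball $B(X_2, \underline{r}_n)$ contains none of $X_3, \ldots, X_n$. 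Because $\underline{r}_n$ will be of order $(k_n/n)^{1/p}$, the expected number of the remaining $n-2$ points landing in this ball is of order $k_n \to \infty$, so its emptiness is a large-deviation event.

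Concretely, I would fix $\underline{r}_n$ by $\underline{r}_n^{\,p} = k_n/(2 n V_p a_2)$, where $a_2$ is the upper density bound implied by Assumptions \ref{assump:a1} and \ref{assump:a3}, and define the good event $G_n = \{R_i \ge \underline{r}_n \text{ for all } i\}$. On $G_n$ the observation above gives the deterministic inclusion $A_n \cap G_n \subseteq \{X_u \notin B(X_2, \underline{r}_n) \text{ for all } u \ge 3\}$; note that only the lower bound on the radii is used and the event $\{X_2 \in \mathcal{S}_1\}$ may simply be discarded. I would then split $P_{f_0}(A_n) \le P_{f_0}(A_n \cap G_n) + P_{f_0}(G_n^c)$ and bound the two pieces separately.

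For the first piece, conditioning on $X_2$ and using independence of $X_3, \ldots, X_n$,
$$P_{f_0}(A_n \cap G_n) \le E_{P_{f_0}}\left[\{1 - \mu(B(X_2, \underline{r}_n))\}^{\,n-2}\right], \qquad \mu(B) = \int_B f_0 .$$
Using $f_0 \ge a_1 > 0$ and accounting for the at-most factor $2^{-p}$ loss of ball volume near the boundary of $[0,1]^p$, one obtains $\mu(B(X_2, \underline{r}_n)) \ge c_0\, k_n/n$ uniformly in $X_2$ for a constant $c_0 > 0$, hence $P_{f_0}(A_n \cap G_n) \le \exp(-c_0 k_n/2)$ for large $n$. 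For the second piece, observe that $R_i < \underline{r}_n$ holds iff at least $k_n - 1$ of the other points lie within $\underline{r}_n$ of $X_i$; the conditional count is Binomial with mean at most $k_n/2$ by the choice of $\underline{r}_n$ and $f_0 \le a_2$, so a Chernoff bound gives $P_{f_0}(R_i < \underline{r}_n) \le \exp(-c k_n)$, and a union bound over $i$ yields $P_{f_0}(G_n^c) \le n \exp(-c k_n)$. Combining the pieces gives $n\, P_{f_0}(A_n) \le n\exp(-c_0 k_n/2) + n^2 \exp(-c k_n) \to 0$.

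The main obstacle, and the part demanding the most care, is the uniform lower bound on all neighborhood radii, i.e.\ controlling $P_{f_0}(G_n^c)$: the $R_i$ are dependent across $i$ (they are computed from a shared sample) and the tail bound must hold uniformly over the location of each $X_i$, including points near the boundary of the support where the relevant ball measures are smallest. This is exactly where Assumptions \ref{assump:a1}--\ref{assump:a3} enter, through the two-sided density bounds $a_1 \le f_0 \le a_2$, which make the binomial tail estimates uniform; alternatively the $k$-nearest neighbor distance results of \cite{biau2015lectures, evans2002asymptotic} under the positive-density condition can be invoked in place of a from-scratch Chernoff argument. The boundary correction and the verification that a single $\underline{r}_n$ simultaneously renders the radius lower bound likely and keeps the ball measure of order $k_n/n$ are the remaining routine checks.
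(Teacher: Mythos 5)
Your proof is correct, and at the structural level it matches the paper's: both arguments introduce a deterministic radius threshold, restrict to the good event that every $R_u$ exceeds it, note that on this event the target event forces $X_3,\ldots,X_n$ to avoid a fixed-radius ball around $X_2$, and bound the resulting avoidance probability $E_{P_{f_0}}[\{1-\omega_{X_2}(\cdot)\}^{n-2}]$ via the positive-density condition \eqref{eq:positive-density-condition}. The genuine difference is the scale of the threshold and the handling of the bad event, and on that step your route is both more elementary and more rigorous. The paper's Lemma \ref{lemma:radius-asymptotic} works at the smaller scale $c_n \asymp (k_n/n)^{(1+\delta)/p}$, bounds $p_n = P_{f_0}(R_1 \leq c_n)$ using the exact beta representation of the nearest-neighbor distance from \cite{evans2002asymptotic}, and then converts a Borel--Cantelli statement into the claim that $P_{f_0}(R_i > c_n) = 1$ for all $n$ beyond a deterministic index; that conversion is a non sequitur (almost-sure eventual occurrence does not make the probability equal to one at any fixed finite $n$, and indeed $P_{f_0}(R_1 \leq c_n) > 0$ for every $n$), although the paper's proof is repairable exactly along your lines, since its bound on $p_n$ is super-polynomially small and a union bound over $i$ suffices. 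You instead work at the natural scale $\underline{r}_n \asymp (k_n/n)^{1/p}$ and control $P_{f_0}(G_n^c) \leq n e^{-ck_n}$ by a Chernoff bound on the conditional binomial neighbor counts: this needs no distributional identity for $R_1$, is immune to the dependence among the $R_i$ (only a union bound is used), and yields the fully quantitative conclusion $n\,P_{f_0}(A_n) \leq n e^{-c_0 k_n/2} + n^2 e^{-ck_n}$; moreover it requires only $k_n/\log n \to \infty$, weaker than the hypothesis $k_n \sim n^{i_0 - \epsilon}$. What the paper's smaller threshold buys is the summability needed for its Borel--Cantelli argument (and an almost-sure radius lower bound of some independent interest); it buys nothing in the final rate, since the paper's leading term decays like $\exp\{\log n - n^{\xi}/(A^2 e)\}$ with $\xi = 1-(1+\epsilon-i_0)(1+\delta) < i_0 - \epsilon$, i.e.\ no faster than yours. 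The details you defer are indeed routine: the $2^{-p}$ boundary correction for the lower bound on the ball measure, and the check that with $\underline{r}_n^p = k_n/(2 n V_p a_2)$ the conditional binomial mean is at most $k_n/2$ uniformly in the location of $X_i$ (using only $f_0 \leq a_2$ and that $f_0$ vanishes off $[0,1]^p$), so the multiplicative Chernoff bound applies with threshold $k_n - 1$.
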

Proof of Theorem \ref{theorem:alpha-plus-one} is in Section \ref{app:alpha-plus-one-proof} of the Appendix. The above theorem suggests we asymptotically have only one effective member per neighborhood $\mathcal{S}_i$, namely the point $X_i$ that itself generated this neighborhood. This result motivates our choice of the pseudo-posterior update of the weight vector $\pi$. 
We illustrate uncertainty quantification of the proposed method in finite samples in Section \ref{sec:coverage} with this choice of pseudo-posterior update of the weight vector $\pi$.

\section{Simulation Experiments} \label{sec:simulation}

\subsection{Preliminaries} \label{subsection:simulation_preliminaries}

In this section, we compare the performance of the proposed density estimator with several other standard density estimators through several numerical experiments. We evaluate estimation performance based on the expected $\mathcal{L}_1$ distance \citep{devroye1985nonparametric}. For the pair $(f_0, \hat{f})$, where $f_0$ is the true data generating density and $\hat{f}$ is an estimator, the expected $\mathcal{L}_1$ distance is defined as $\mathcal{L}_{1}(f_0, \hat{f}) = E_{P_{f_0}} \{\int  |f_0(x) - \hat{f}(x)| \, dx \}$. We compute an estimate $\hat{\mathcal{L}}_1(f_0, \hat{f})$ of $\mathcal{L}_1(f_0, \hat{f})$ in two steps. First, we sample $n$ training points $X_1, \ldots, X_n \sim f_0$ and obtain $\hat{f}$ based on this sample, and then further sample $n_t$ independent test points $X_{n+1}, \ldots, X_{n + n_t} \sim f_0$ and compute $$\hat{L} = \frac{1}{n_t} \sum_{i = 1}^{n_t} \left| \frac{\hat{f}(X_{n+i})}{f_0(X_{n+i})} - 1 \right|.$$ In the second step, to approximate the expectation with respect to $P_{f_0}$, the first step is repeated $R$ times. Letting $\hat{L}_r$ denote the estimate for the $r$th replicate, we compute the final estimate as $\hat{\mathcal{L}}_1(f_0, \hat{f}) = (1/R) \sum_{r=1}^R \hat{L}_r$. Then, it follows that $\hat{\mathcal{L}}_1(f_0, \hat{f}) \to \mathcal{L}_{1}(f_0, \hat{f})$ as $n_{t}, R \xrightarrow{} \infty$, by the law of large numbers. In our experiments, we set $n_t = 500$ and $R = 20$. We let $0_p$ and $\mathbbm{1}_p$ denote the vector with all entries equal to $0$ and the vector with all entries equal to $1$ in $\mathbb{R}^p$, respectively, for $p\geq 1$.

All simulations were carried out using the R programming language \citep{rsoftware}. For Dirichlet process mixture models, we collect $2,000$ Markov chain Monte Carlo (MCMC) samples after discarding a burn-in of $3,000$ samples using the \texttt{dirichletprocess} package \citep{dirichletprocesspackage}. The default implementation of the Dirichlet process mixture model in $p$ dimensions in the \texttt{dirichletprocess} package uses multivariate Gaussian kernels and has the base measure as $\text{NIW}_p(0_p, p, p, \mathbbm{I}_p)$ with the Dirichlet concentration parameter having the $\text{Gamma}(2,4)$ prior \citep{west1992hyperparameter}. For the nearest neighbor-Dirichlet mixture, $1,000$ Monte Carlo samples are taken. For the kernel density estimator, we select the bandwidth by the default plug-in method \texttt{hpi} for univariate cases and \texttt{Hpi} for multivariate cases \citep{sheather1991reliable, wand1994multivariate} using the package \texttt{ks} \citep{kspackage}. We additionally consider the k-nearest neighbor estimator studied in \cite{mack1979multivariate}, setting the number of neighbors $k = n^{1/2}$, and the variational Bayes (VB) approximation to Dirichlet process mixture models \citep{blei2006variational}. We also compare with the optional Polya tree (OPT) \citep{wong2010optional} using the package \texttt{PTT}. For univariate cases, we consider the recursive predictive density estimator (RD) from \cite{hahn2018recursive}, Polya tree mixtures (PTM) using the package \texttt{DPpackage} \citep{DPpackage}, and the sample smoothing kernel density estimator (A-KDE) using the package \texttt{quantreg}. Lastly, we also compare with the local likelihood density estimator (LLDE) using the package \texttt{locfit} for both univariate and multivariate cases.
Dirichlet process mixture model hyperparameter values are kept the same in both the MCMC and variational Bayes implementations, with the number of components of the variational family set to 10 for all cases.
We denote the nearest neighbor-Dirichlet mixture, Dirichlet process mixture (DPM) implemented with MCMC, kernel density estimator, variational Bayes approximation to the DPM, and k-nearest neighbor density estimator by NN-DM, DP-MC, KDE, DP-VB, and KNN, respectively, in tables and figures.


\subsection{Univariate Cases}\label{sec:univ_simulations}

We set $n = 200, 500$ with $k_{n} = \lfloor n^{1/3} \rfloor + 1$.
We consider 10 choices of $f_0$ from the \texttt{R} package \texttt{benchden} \citep{benchden}; the specific choices are Cauchy (CA), claw (CW), double exponential (DE), Gaussian (GS), inverse exponential (IE), lognormal (LN), logistic (LO), skewed bimodal (SB), symmetric Pareto (SP), and sawtooth (ST) with default choices of the corresponding parameters. The prior hyperparameter choices for the proposed method are $\mu_0 = 0, \nu_0 = 0.001, \gamma_0 = 1$; 
$\delta_0^2$ is chosen via the cross-validation method of Section \ref{sec:CV}. Detailed numerical results are deferred to Table \ref{tab:univariate_tab} in the Appendix. Instead, in Figure \ref{fig:univariate_results}, we provide a visual summary of the performance of each method under consideration by forming a box plot of the estimated $\mathcal{L}_1$ errors of the methods across all the data generating densities. Methods with lower median as indicated by the solid line of the box plot, and smaller overall spread are preferable as they provide higher accuracy and also maintain such accuracy across a collection of true density cases. Results of KNN are omitted in Figure \ref{fig:univariate_results} due to much higher values compared to other methods. For the KDE and RD estimator, the plot and the table exclude the results for the heavy-tailed densities CA, IE, and SP due to very high $\mathcal{L}_1$ errors. 
\begin{figure}
\begin{subfigure}{.55\textwidth} 
\includegraphics[height = 6cm, width = 7.3cm]{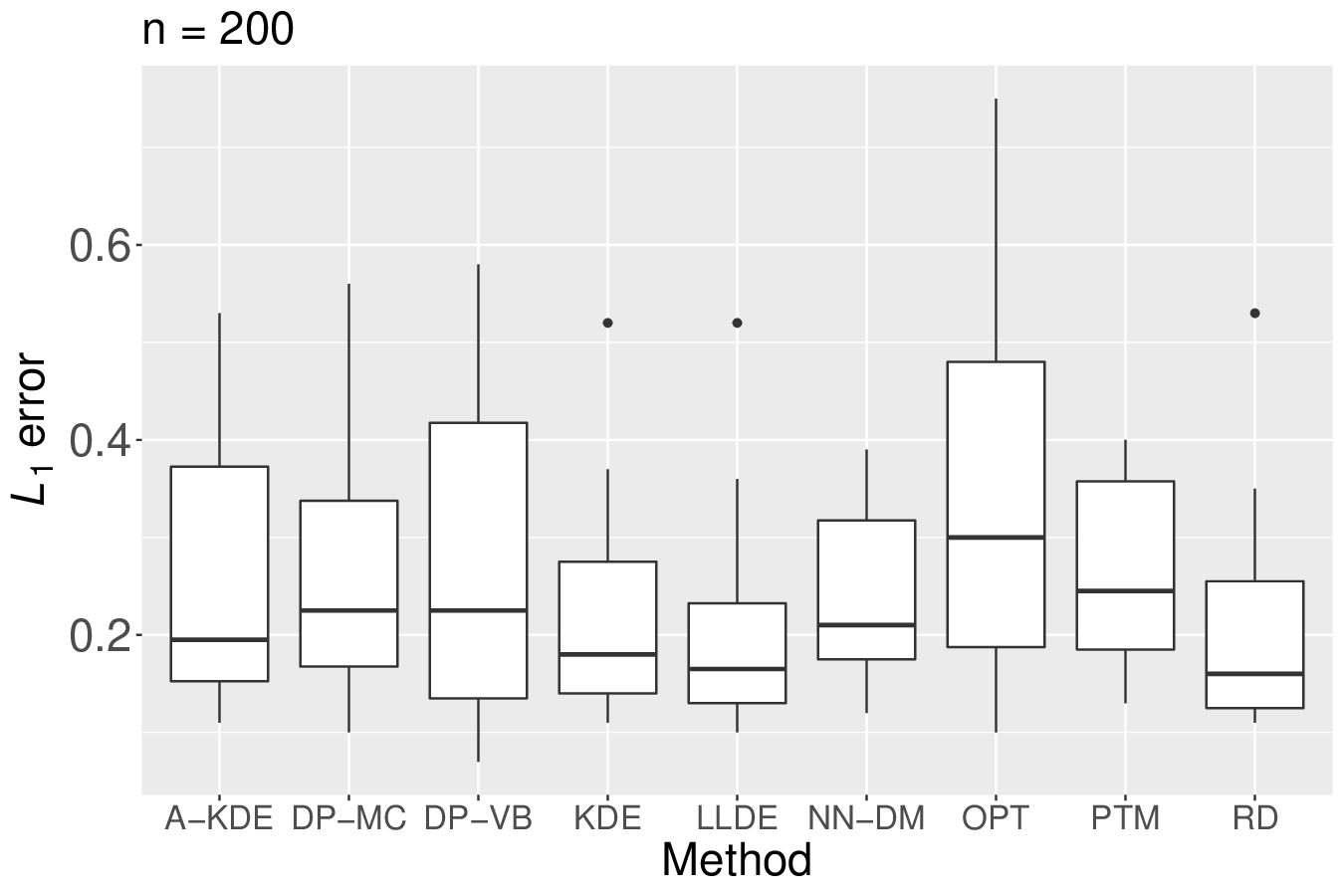}
\end{subfigure}
\hspace{-0.3in}
\begin{subfigure}{.35\textwidth}
\includegraphics[height = 6cm, width = 7.3cm]{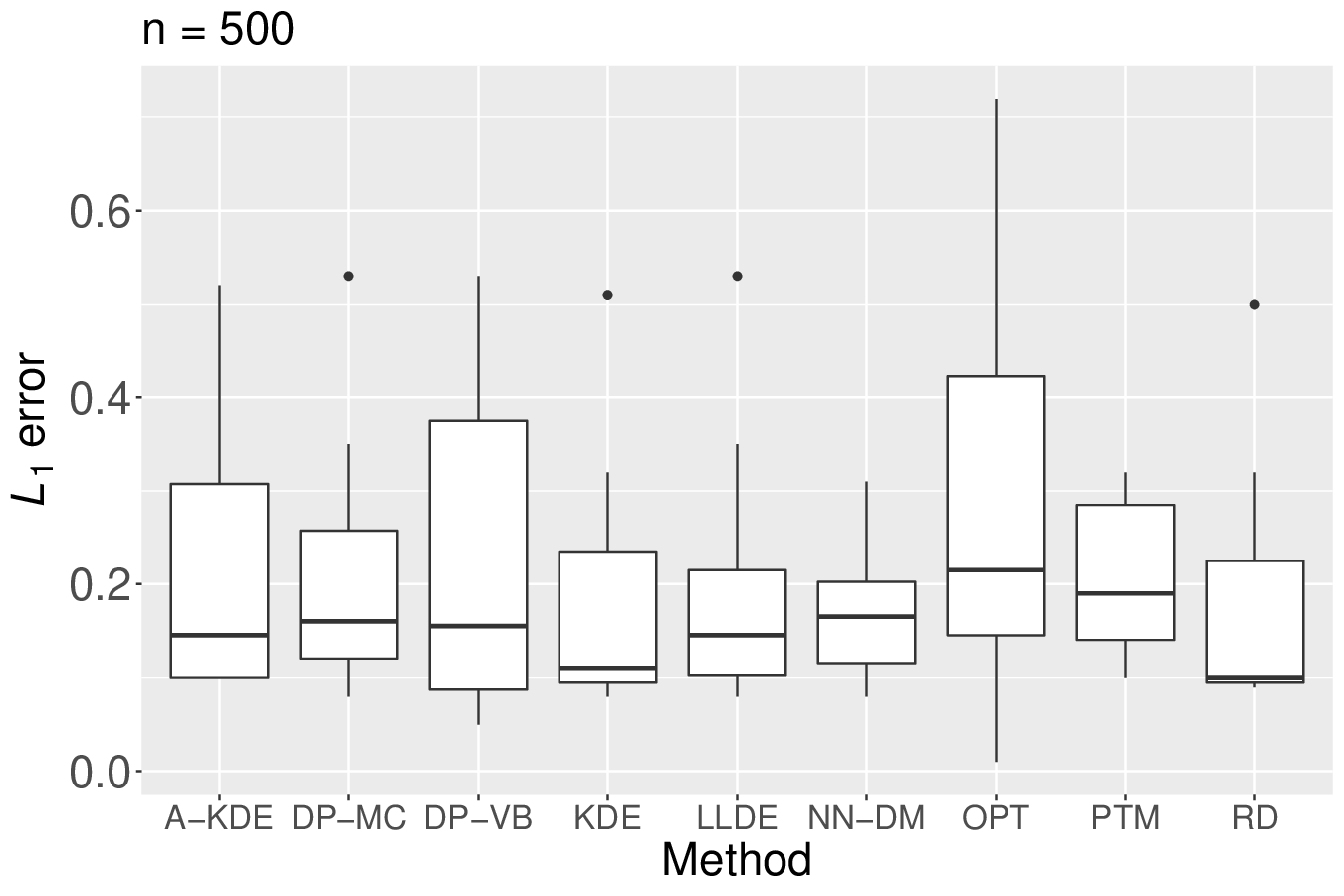}
\end{subfigure}
\caption{Box plots of $\hat{\mathcal{L}}_1(f_0, \hat{f})$ for the 10 different choices of the true density $f_0$ and different estimators $\hat{f}$ for univariate data. The box plots for KDE and RD exclude the heavy-tailed cases CA, IE, and SP.}
\label{fig:univariate_results}
\end{figure}

Overall, a major advantage of the proposed method is its versatility among the considered methods. The Bayesian nonparametric methods DP-MC, DP-VB, PTM, OPT, and RD are often close to NN-DM in terms of their performance when the true densities are smooth and do not display locally spiky behavior. However, the NN-DM performs better than other methods in densities where such local behavior is present and performs very close to the best estimator for either the smooth heavy-tailed or thin-tailed densities. 
The KDE and RD perform well when data are generated from a smooth underlying density. However, there are some cases where the error for KDE and RD is very high. For instance, when $n = 500$ and $f_0$ is the standard Cauchy (CA) density, the estimated $\mathcal{L}_1$ error for the KDE is 38501.85 and the algorithm for the RD estimate did not converge. Both the KDE and RD also perform poorly in very spiky multi-modal densities such as the ST. Compared to the LLDE and the A-KDE, the NN-DM displays similar performance in heavy-tailed and smooth densities when $n=200$, with the NN-DM performing better for the spiky densities. However, when $n=500$, the NN-DM shows significant improvements over the LLDE and the A-KDE for spiky densities such as the CW and the ST.  

In Figure \ref{fig:plotuniv}, we show the performance of the NN-DM estimator $\hat{f}_n$
(with hyperparameters chosen as described earlier) relative to the posterior mean under a DP-MC with default or hand-tuned hyperparameters, when $500$ data points are generated from the sawtooth (ST) density. The Dirichlet process mixture with default hyperparameters is unable to detect the multiple spikes, merging adjacent modes to form larger clusters, perhaps due to inadequate mixing of the Markov chain Monte Carlo sampler or to the Gaussian kernels used in the mixture. As a result, we had to hand-tune the hyperparameters for the Dirichlet process mixture to obtain comparable performance with the NN-DM (without hand-tuning). We obtained the best results when changing the hyperparameters of the base measure of the DP-MC to $\text{NIG}(0,0.01, 1, 1)$ while keeping the prior on $\alpha$ the same as before. This illustrates the deficiency of the DP-MC in estimating densities with spiky local behavior unless we hand-tune the hyperparameters, which requires knowledge of the true density.
We also compare the performance of the two methods with a smoother test density in Figure \ref{fig:plotsbimodal}, where the data are generated from a skewed bimodal (SB) distribution. Both the estimates are comparable, but the nearest neighbor-Dirichlet mixture provides better uncertainty quantification. Similar results are obtained for $n=1000$, and hence are omitted.
\begin{figure}
\begin{subfigure}{.35\textwidth} 
\includegraphics[height = 5cm, width = 7cm]{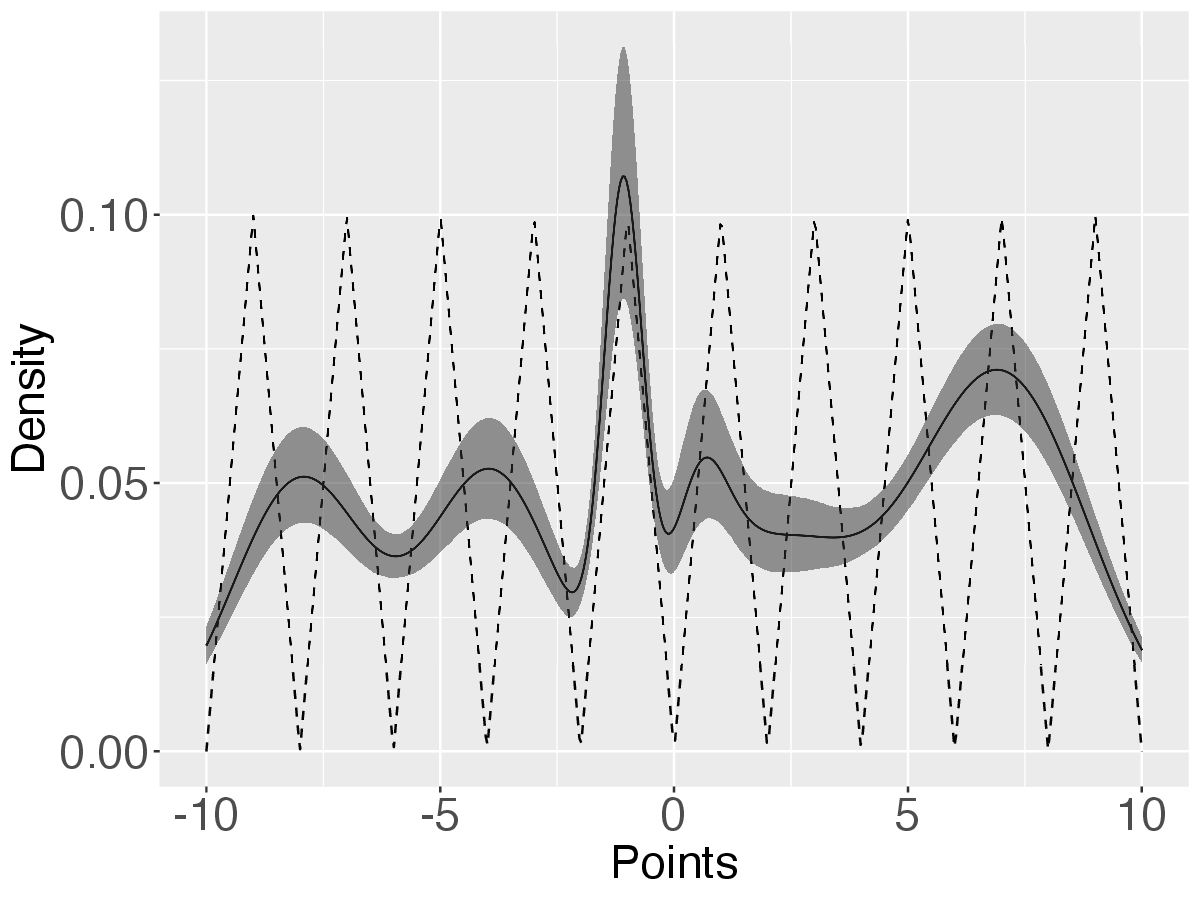}
\end{subfigure}
\hspace{0.85in}
\begin{subfigure}{.35\textwidth}
\includegraphics[height = 5cm, width = 7cm]{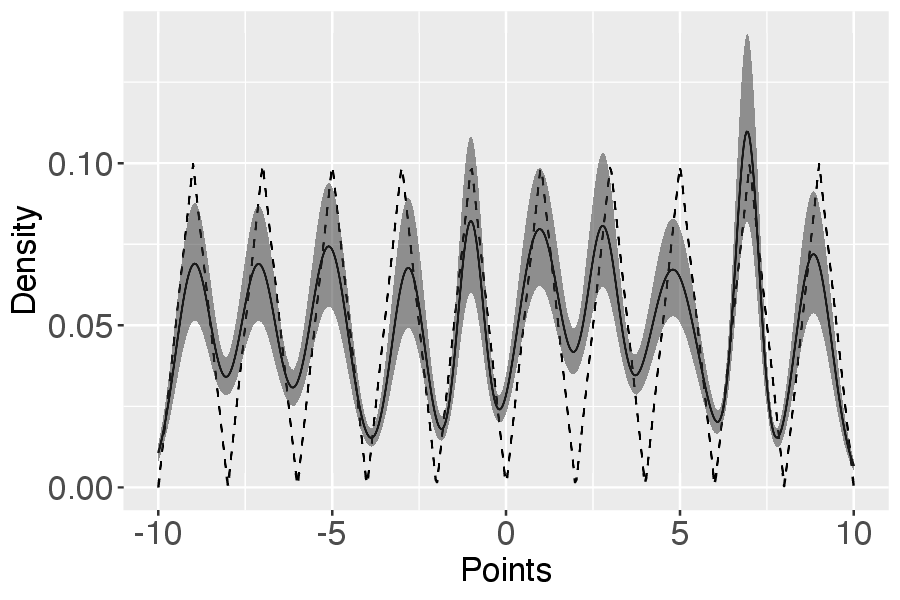}
\end{subfigure}
\begin{center}
    \begin{subfigure}{.35\textwidth}
\hspace{-1cm}\includegraphics[height = 5cm, width = 7cm]{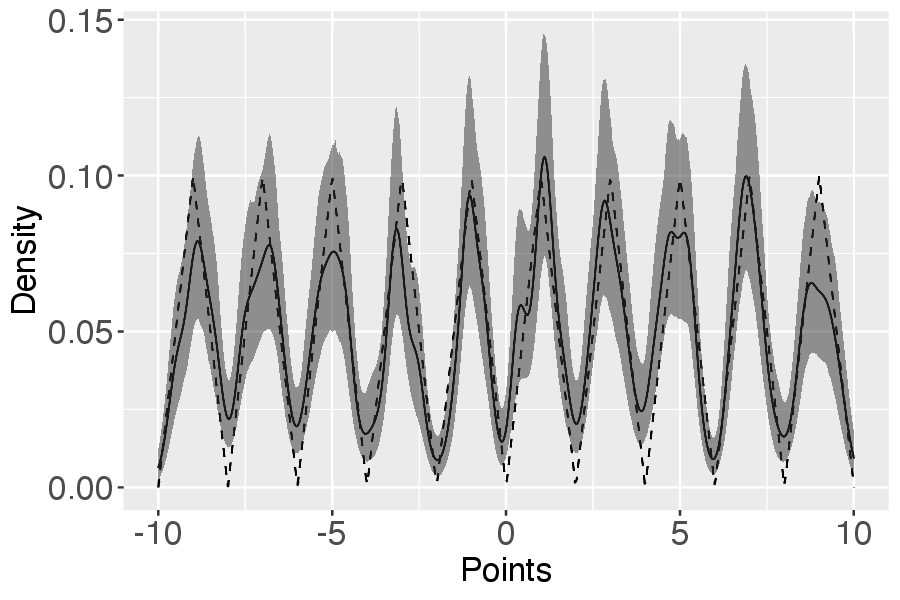}
\end{subfigure}
\end{center}
\caption{Plot comparing density estimates for the NN-DM and DP-MC for  $n=500$ samples generated from the 
	 sawtooth (ST) density. Shaded regions correspond to $95\%$ (pseudo) posterior credible intervals. The true density is displayed using dotted lines. The top panel shows the performance of DP-MC with default hyperparameters on the left and with hand-tuned hyperparameters on the right. The bottom panel shows the performance of the NN-DM.}
\label{fig:plotuniv}
\end{figure}

\begin{figure}
\begin{subfigure}{.35\textwidth}
\includegraphics[height = 5cm, width = 7cm]{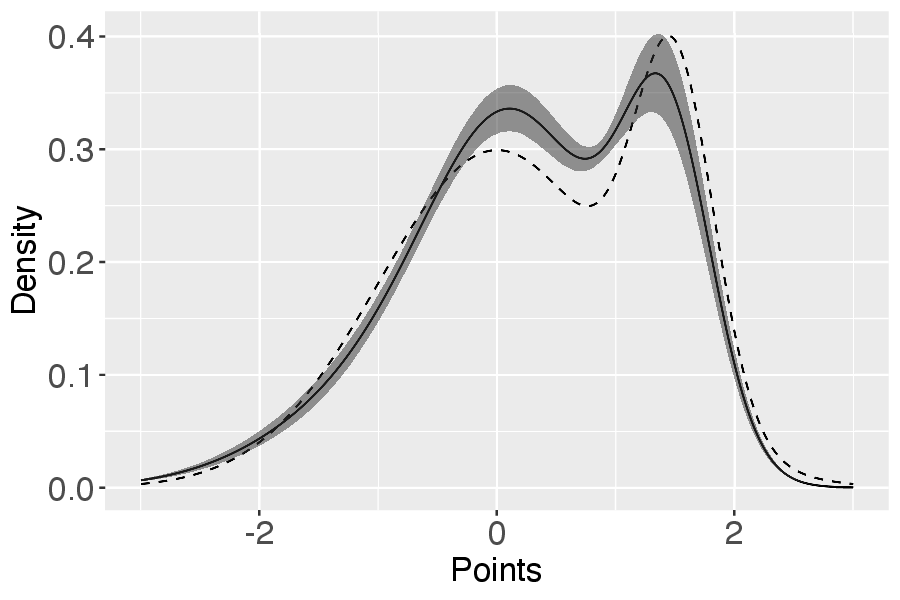}
\end{subfigure}
\hspace{0.85in}
\begin{subfigure}{.35\textwidth}
\includegraphics[height = 5cm, width = 7cm]{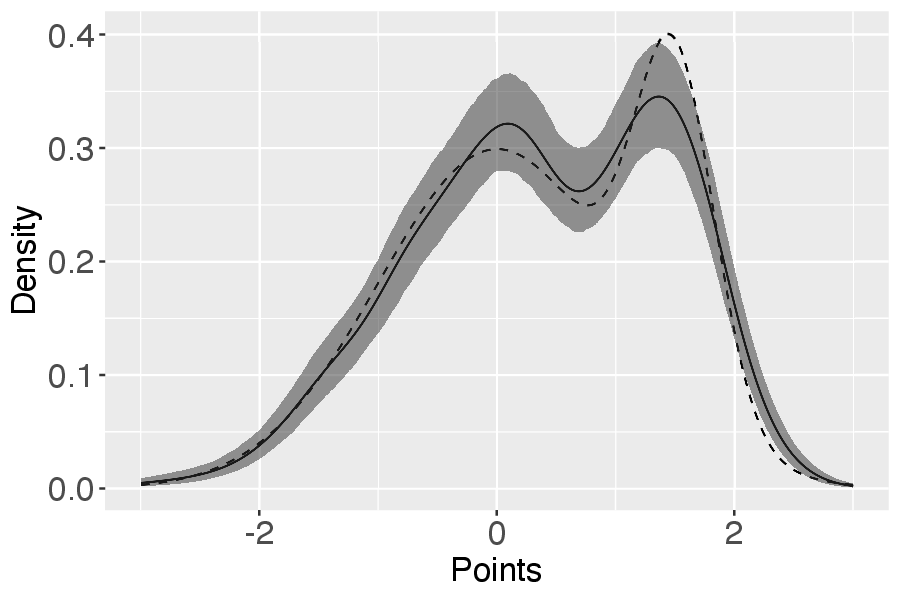}
\end{subfigure}
\caption{Similar to Figure \ref{fig:plotuniv}, with data of sample size $n=500$ generated from the skewed bimodal (SB) density. Left panel shows the DP-MC fit and the right panel shows the NN-DM fit. }
\label{fig:plotsbimodal}
\end{figure} 
\subsection{Multivariate Cases}\label{sec:mvt_simulations}

For the multivariate cases, we consider $n = 200$ and $1000$. The number of neighbors is set to $k = 10$ and the dimension $p$ is chosen from $\{2, 3, 4, 6\}$. Recall the definition of $\phi_p(x ; \mu, \Sigma)$ from Section \ref{sec:multivariate_gaussian} and let $\Phi(x)$ be the cumulative distribution function of the standard Gaussian density. Let $S_0 = \rho \, \mathbbm{1}_{p} \mathbbm{1}_{p}^{\T} + (1-\rho) \, \mathbbm{I}_p$ with $\rho = 0.8$. Let $x = (x_1, \ldots, x_p)^{\T}$.
We consider the following cases. \\ 
    (1) {\em Mixture of Gaussians (MG)}: $f_0(x) = 0.4 \, \phi_{p}(x ; m_1, S_0) + 0.6 \, \phi_{p}(x ; m_2, S_0)$, where $m_1 = -2\times \mathbbm{1}_{p}, m_2 = 2 \times \mathbbm{1}_p$.  \\
    (2) {\em Skew normal (SN)}: $f_0(x) = 2 \phi_{p}(x ; m_0, S_0) \Phi\{s_0^{\T }W^{-1}(x - m_0)\}$ \citep{azzalini2005skew}, where $W$ is the diagonal matrix with diagonal entries $W_{ii}^2 = S_{0,\, ii}$ for $i=1,\ldots,p$. We choose $m_0 = 0_{p}$ and the skewness parameter vector $s_0 = 0.5 \times \mathbbm{1}_{p}$. \\
    (3) {\em Multivariate t-distribution (T)}: 
     $f_0(x) = t_{d_0}(x ; m_*, S_0)$ is the density of the $p$-dimensional multivariate Student's t-distribution. We set $d_0 = 10$ and $m_* = \mathbbm{1}_{p}$. \\  
    (4) {\em Mixture of multivariate skew t-distributions (MST)}: 
    $f_0(x) = 0.25 \, t_{d_0}(x ; m_1, S_0, s_0) + 0.75 \, t_{d_0}(x ; m_2, S_0, s_0)$. Here, $t_{d}(\cdot \, ; \mu, S, s)$ is the skew t-density \citep{azzalini2005skew} with parameters $d, \mu, S, s$, with $d_0, s_0$ defined as before and $m_1, m_2$ the same as in the first case. \\ 
    (5) {\em Multivariate Cauchy (MVC)}: $f_0(x) \propto \{1 + (x - \mu_*)^\T S_0^{-1}(x - \mu_*)\}$ where $\mu_* = 0_p$. \\
    (6) {\em Multivariate Gamma (MVG)}: $f_0(x) \propto c_{\Phi} (F_1(x_1), \ldots, F_p(x_p) \mid S_0)\,\prod_{j=1}^{p} f_j(x_j;\gamma_{j1}, \gamma_{j2})$ where $f_j$ and $F_j$ denote the density and distribution function of the univariate gamma distribution with shape parameter $\gamma_{j1}$ and rate parameter $\gamma_{j2}$, respectively, for $j=1,\ldots,p$ and $c_{\Phi}(\cdot \mid \Gamma)$ is as described in \cite{xue2000multivariate}. This is a Gaussian copula based construction of the  multivariate gamma distribution. 
    We set $\gamma_{j1}=\gamma_{j2}=1$ for $j=1,\ldots,p$.
    
    The hyperparameters for the nearest neighbor-Dirichlet mixture are chosen as $\mu_0 = 0_{p}, \nu_0 = 0.001, \gamma_0 = p$, and $\Psi_0 = \{(\gamma_{0}-p+1)\delta_{0}^2\}\mathbbm{I}_{p} = \delta_{0}^2 \, \mathbbm{I}_p$, where the optimal $\delta_0^2$ is chosen via cross-validation as described in Section \ref{sec:CV}. Default hyperparameters as described in Section \ref{subsection:simulation_preliminaries} are chosen for the MCMC and VB implementations of the DPM.

\begin{figure}
\centering
\begin{subfigure}{.4\textwidth} 
  \centering
\includegraphics[height = 4cm, width = 6cm]{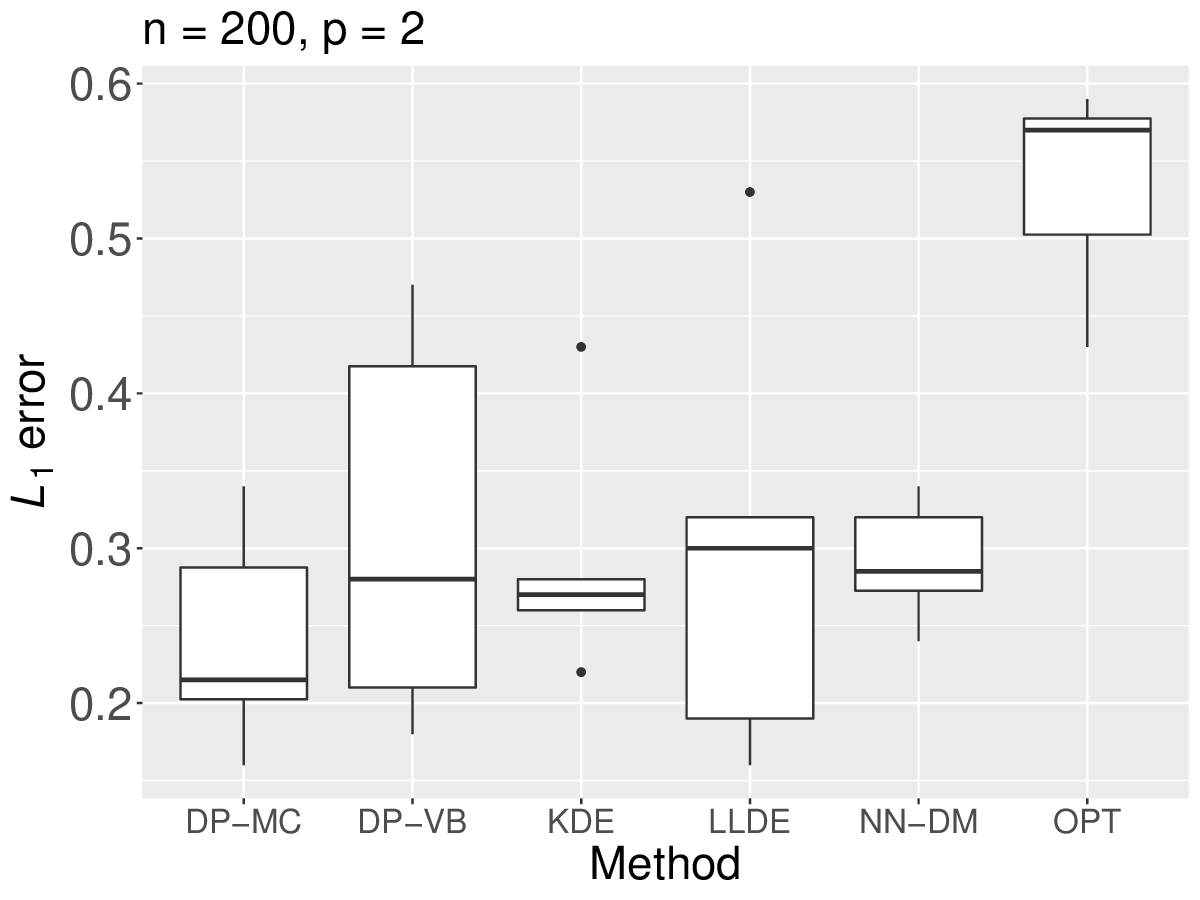}
\end{subfigure}
\begin{subfigure}{.4\textwidth} 
  \centering
\includegraphics[height = 4cm, width = 6cm]{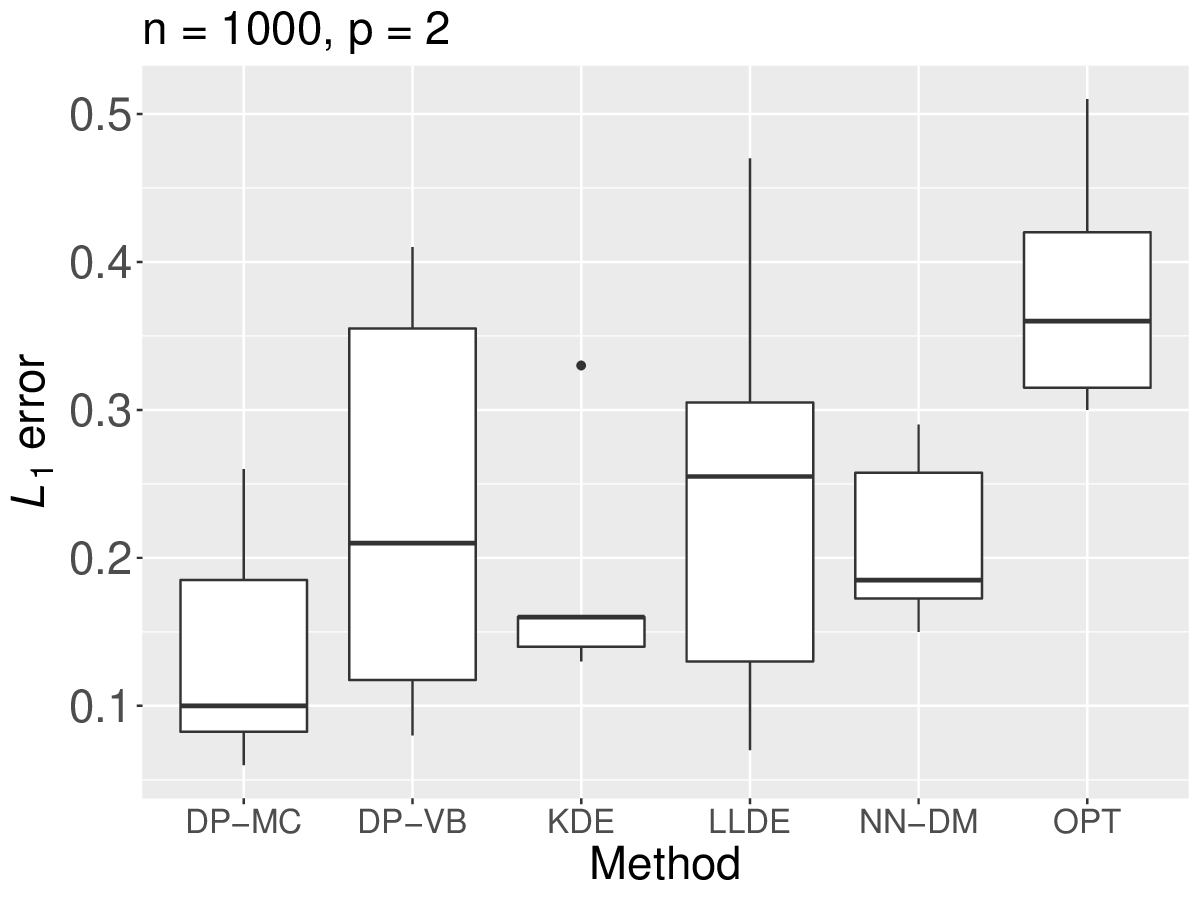}
\end{subfigure}
\begin{subfigure}{.4\textwidth} 
\includegraphics[height = 4cm, width = 6cm]{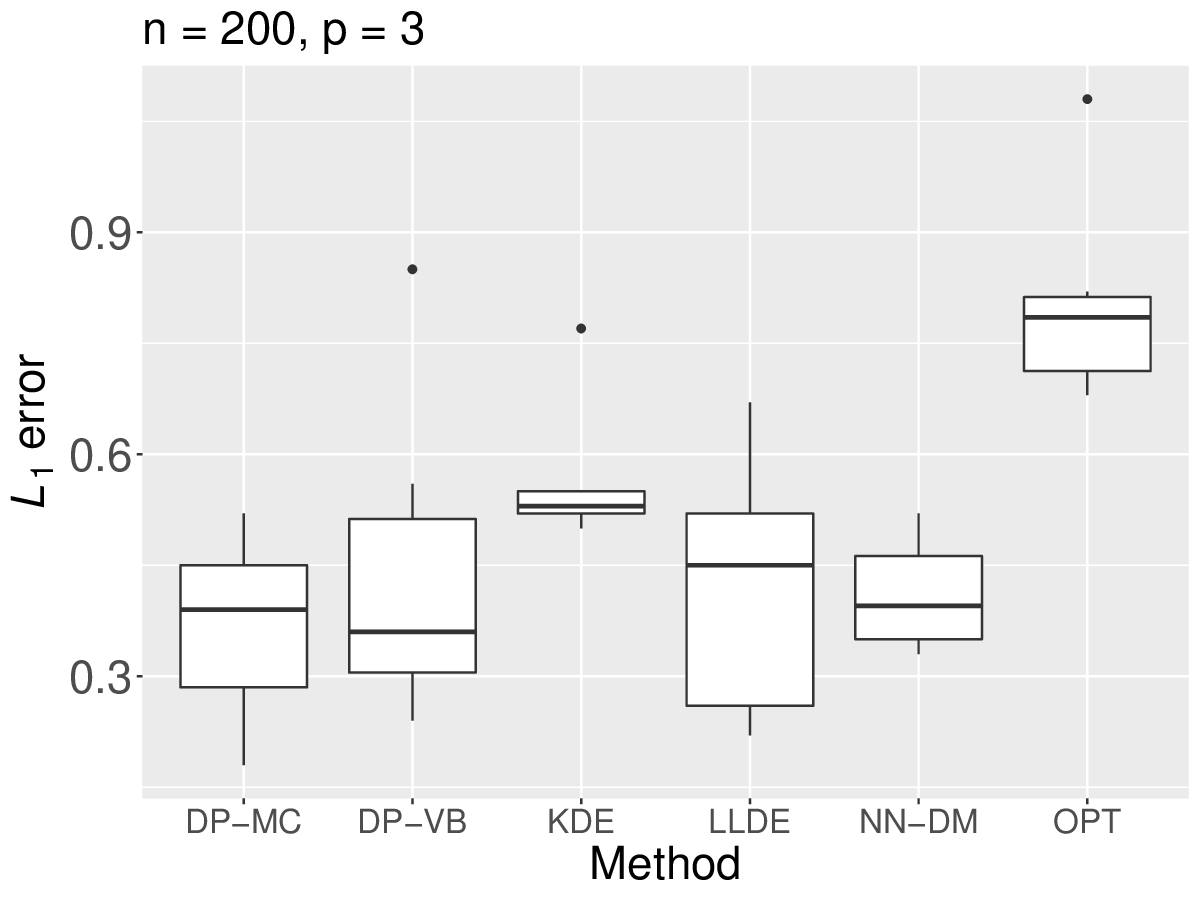}
\end{subfigure}
\begin{subfigure}{.4\textwidth} 
  \centering
\includegraphics[height = 4cm, width = 6cm]{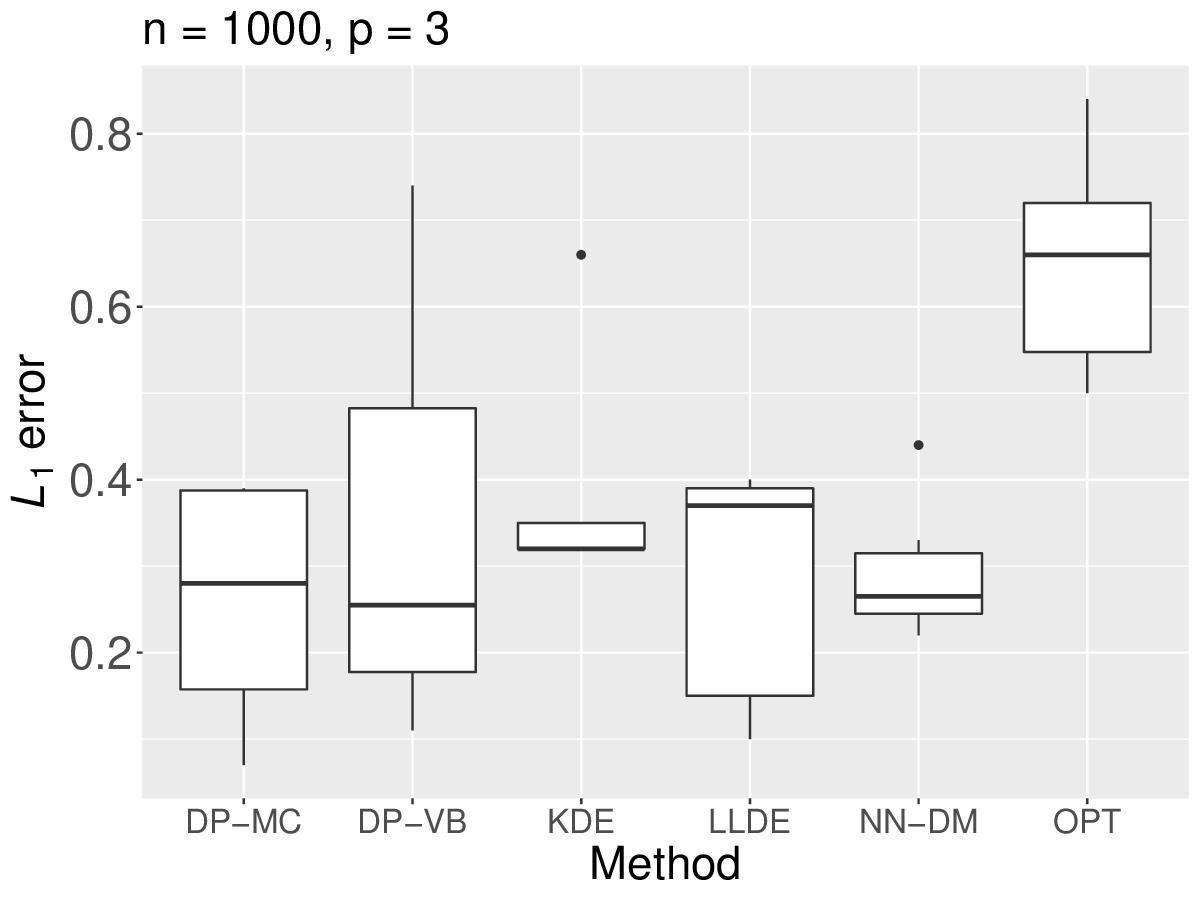}
\end{subfigure}
\begin{subfigure}{.4\textwidth} 
  \centering
\includegraphics[height = 4cm, width = 6cm]{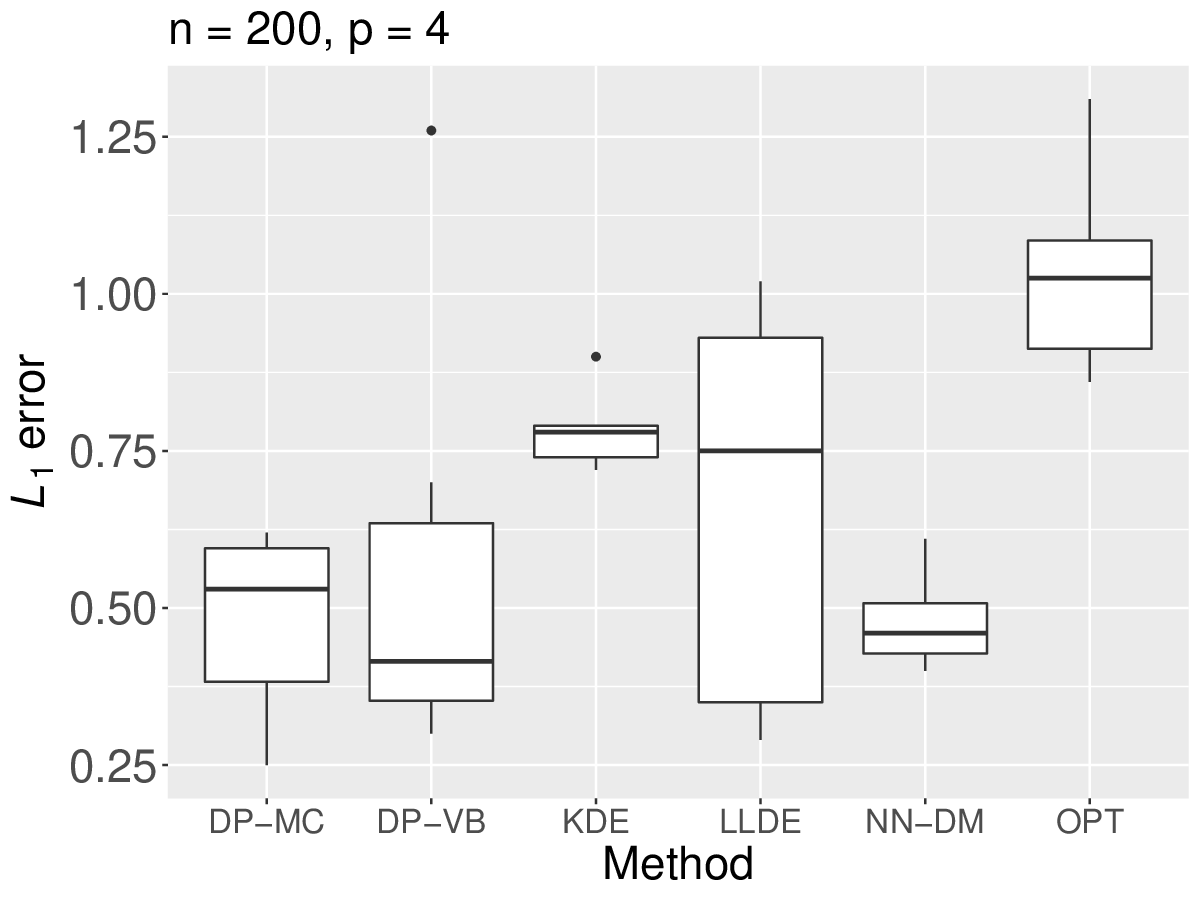}
\end{subfigure}
\begin{subfigure}{.4\textwidth} 
  \centering
\includegraphics[height = 4cm, width = 6cm]{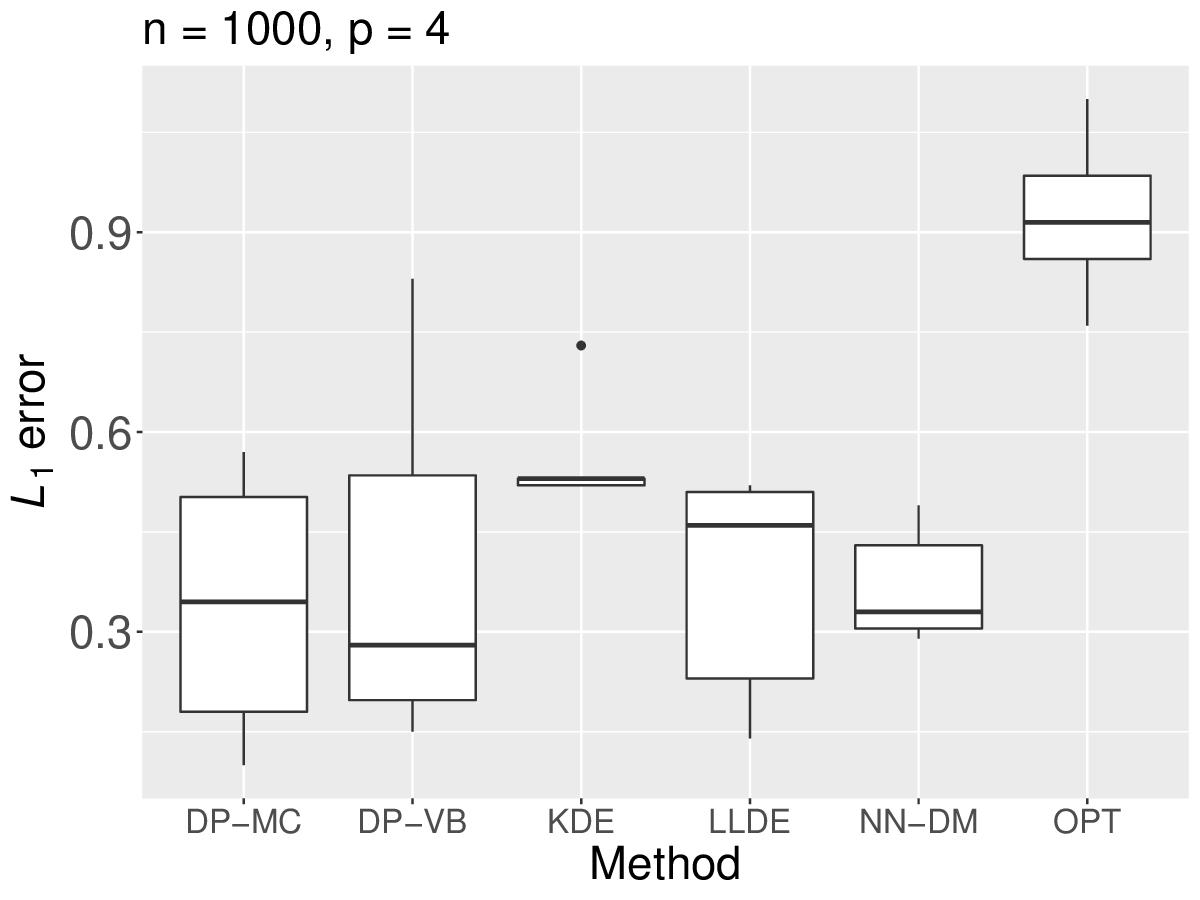}
\end{subfigure}
\begin{subfigure}{.4\textwidth} 
  \centering
\includegraphics[height = 4cm, width = 6cm]{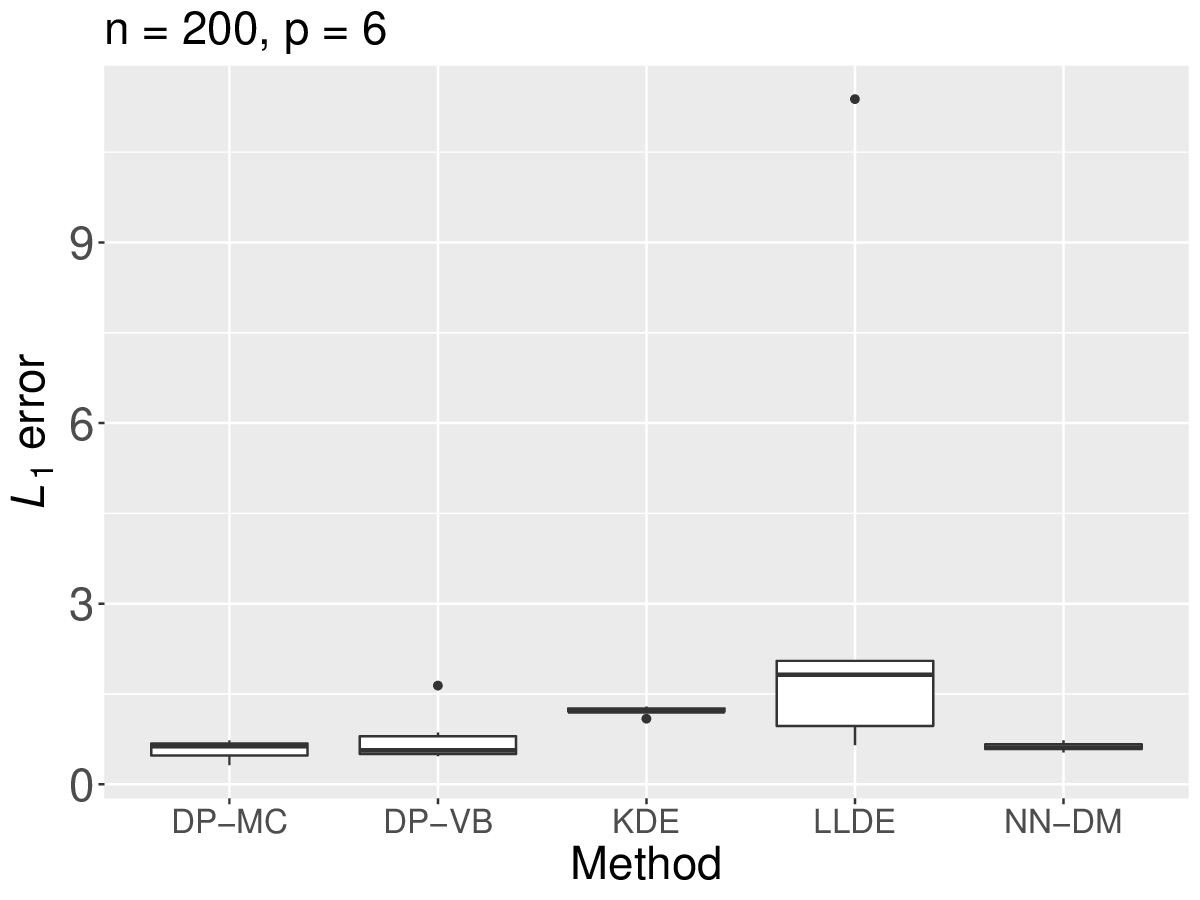}
\end{subfigure}
\begin{subfigure}{.4\textwidth} 
  \centering
\includegraphics[height = 4cm, width = 6cm]{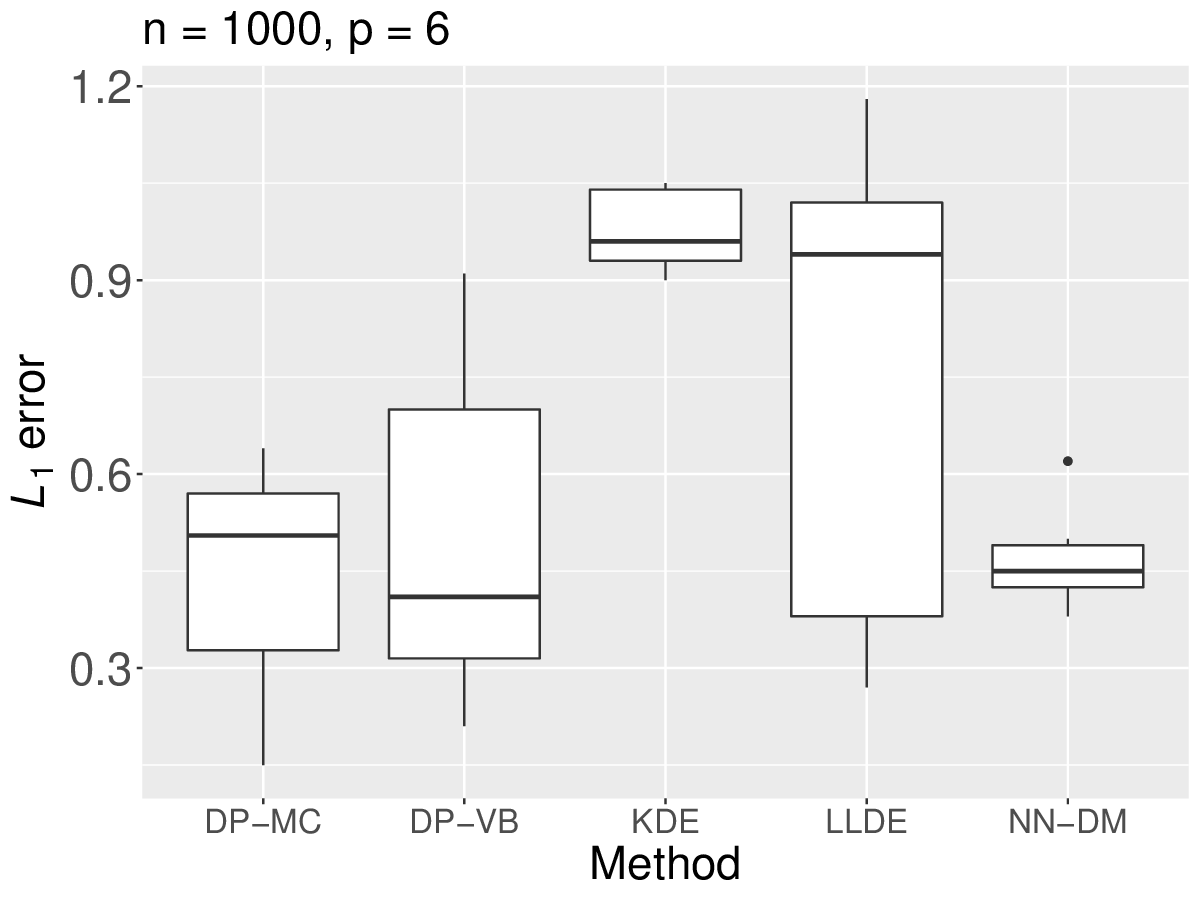}
\end{subfigure}

\caption{Box plots of $\hat{\mathcal{L}}_1(f_0, \hat{f})$ for the 6 different choices of the true density $f_0$ and different estimators $\hat{f}$ for multivariate data. The box plots for KDE and LLDE exclude the MVC density. The box plots for $p=6$ exclude results from OPT.}
\label{fig:multivariate_results}
\end{figure}

Similar to the univariate case, we defer the numerical results to Table \ref{tab:multivariate_tab} in the Appendix and in Figure \ref{fig:multivariate_results} display a visual summary consisting of box plot of estimated $\mathcal{L}_1$ errors over the densities considered. The proposed method is very robust against a wide selection of true distributions, with its $\mathcal{L}_1$ error scaling nicely with the dimension. The KDE shows a noticeably sharp decline in performance - when the dimension is changed from $2$ to $6$, the average increase in $\mathcal{L}_{1}$ error is by factors of about $5$ and $7$ for sample sizes $200$ and $1000$, respectively. 
This is possibly due to lack of adaptive density estimation in higher dimensions using a single bandwidth matrix, since data in $\mathbb{R}^p$ become increasingly sparse with increasing $p$. As in the univariate case, we had to exclude the MVC density for the KDE due to the algorithm not converging. 
The performances of NN-DM, DP-MC, and DP-VB are quite competitive across densities, with NN-DM faring better than the DP-VB when estimating densities such as the MVC and the MVG. Furthermore, the NN-DM is hit the least significantly by the curse of dimensionality out of the three. This is particularly prominent for the DP-MC when the true density is either MG or MST with $n = 200$ and $p = 6$, and for the DP-VB when the true density is MVC. It is also important to keep in mind that the NN-DM provides similar results compared to the DP-MC while being at least an order of magnitude faster, as illustrated in Section \ref{sec:runtime}.
The performance of the OPT  is hit quite significantly as the number of dimensions increases, along with the algorithm not converging for $p=6$. The LLDE  provides competitive results with the NN-DM in lower dimensions. However, in higher dimensions, the LLDE often does not converge, indicating lack of stability of the algorithm. We reported the average of the replicates for which the algorithm did converge. The results suggest that the performance of the LLDE is also affected quite drastically with increasing dimensions. 
When compared across all data generating cases considering the variation in densities, dimensions and sample sizes, the proposed method is seen to be more versatile than its competitors.

\subsection{Accuracy of Uncertainty Quantification}
\label{sec:coverage}

In this section, we assess frequentist coverage of $95\%$ pseudo-posterior credible intervals for the NN-DM and compare with coverage based on the $95 \%$ posterior credible intervals obtained from DP-MC and DP-VB. \cite{ghosal2017fundamentals} recommend investigating the frequentist coverage of Bayesian credible intervals. We do not include frequentist coverage for Polya tree mixtures (PTMs) and the optional Polya tree (OPT) due to the lack of available code. We consider the cases $p \in \{1,2\}$ in our experiments with sample size $n = 500$. For each choice of density $f_0$, we fix $n_t = 200$ test points $\mathcal{X}_t = \{X_{t1}, \ldots, X_{tn_{t}}\}$ generated from the density $f_0$. With these fixed test points, we generate $n = 500$ data points in our sample for $R_{cov} = 200$ times and check the coverage of posterior/pseudo-posterior credible intervals obtained from the three methods. We implement the DP-MC with base measure $\text{NIW}_p(0_p, 0.01, p, \mathbbm{I}_p)$ and a $\text{Gamma}(2,4)$ prior on the concentration parameter as in \cite{west1992hyperparameter}. These choices of hyperparameters were seen to give better frequentist coverage results than using the default values used in Sections \ref{sec:univ_simulations} and \ref{sec:mvt_simulations}. Same choices of hyperparameters are maintained for DP-VB. For the NN-DM, we take $k = 8$ in the univariate case and $k = 5$ in the bivariate case, $\alpha = 0.001$,
and other hyperparameters chosen as before. We report the average coverage probability and average length of the (pseudo) credible intervals across all the points in the test data $\mathcal{X}_t$ in Tables \ref{tab:univariate_coverage} and \ref{tab:mvt_coverage} for the univariate and bivariate cases, respectively.

For univariate densities, both the DP-MC and DP-VB display severe under-coverage. In most of the cases, the DP-VB and NN-DM have similar width of (pseudo) credible intervals but the DP-VB displays dramatically lower coverage than the NN-DM. The under-coverage displayed by the DP-MC may be due to MCMC mixing issues. The NN-DM shows near nominal coverage in the smooth Gaussian (GS) and lognormal (LN) densities, while also attaining near nominal coverage in the skewed bimodal (SB), claw (CW), and sawtooth (ST) densities which are multi-modal. The shortcomings of DP-MC and DP-VB are especially noticeable when dealing with spiky densities such as the claw or sawtooth. For bivariate cases considered in Table \ref{tab:mvt_coverage}
we see a similar trend; the NN-DM method provides uniformly better uncertainty quantification across all the densities considered. It is clear that in terms of frequentist uncertainty quantification, the NN-DM displays vastly superior coverage to the DP-MC and the DP-VB without inflating the interval width.
%
\begin{center}
\begin{table}
\huge
\centering
\scalebox{0.50}{
\begin{tabular}{cccccc}\toprule
Method & CA & CW & DE & GS & IE \\
\cmidrule{1-6}\\								
\multirow{3}{*}{} 
NN-DM & 0.75 (0.05) & 0.89 (0.21) & 0.75 (0.06) & 0.92 (0.08) & 0.81 (0.11)\\
 DP-MC & 0.48 (0.02) &	0.06 (0.01) & 0.35 (0.02) & 0.37 (0.01) & 0.39 (0.04) 								 \\
 DP-VB & 0.33 (0.05) & 0.18 (0.07) & 0.28 (0.07) & 0.79 (0.05) & 0.14 (0.04) \\
 \hline \\
Method  & LN & LO & SB & SP & ST\\
 \hline\\
 NN-DM & 0.92 (0.17)& 0.81 (0.03) & 0.88 (0.10) & 0.72 (0.01) & 0.91 (0.05) \\
 DP-MC & 0.31 (0.05) & 0.55 (0.03) & 0.46 (0.03) & 0.46 (0.01) & 0.64 (0.03) \\
 DP-VB & 0.19 (0.15) & 0.10 (0.03) & 0.40 (0.10) & 0.20 (0.01) & 0.07 (0.01) \\
\cmidrule{1-6}
\end{tabular}
}
\caption{Comparison of the frequentist coverage of $95\%$ (pseudo) posterior credible intervals of the nearest neighbor-Dirichlet mixture and the MCMC and variational implementations of the Dirichlet process mixture for univariate data. Average length of the intervals are also provided for each case within parentheses. Number of replications and sample size are $R_{cov} = 200$ and $n_{cov} = 500$, respectively.}
\label{tab:univariate_coverage}
\end{table}
\end{center}
\begin{center}
\begin{table}
\huge
\centering
\scalebox{0.50}{
\begin{tabular}{ccccccc}\toprule
Method & MG & MST & MVC & MVG & SN & T \\
\cmidrule{1-7}\\
  NN-DM 					 & 0.92 (0.04) & 0.88 (0.03) & 0.69 (0.03) & 0.80 (0.31) & 0.92 (0.06)& 0.88 (0.03) \\			
 DP-MC 	& 0.53 (0.01) & 0.56 (0.01) & 0.47 (0.01) & 0.41 (0.16) & 0.39 (0.02) & 0.55 (0.01) \\								
  DP-VB & 0.56 (0.03) & 0.58 (0.03) & 0.18 (0.02) & 0.55 (0.26) & 0.49 (0.05) & 0.57 (0.02)\\
  \cmidrule{1-7}
\end{tabular}
}
\caption{Comparison of the frequentist coverage of $95\%$ (pseudo) posterior credible intervals of the nearest neighbor-Dirichlet mixture and the MCMC and variational implementations of the Dirichlet process mixture for bivariate data. Average length of the intervals are also provided for each case within parentheses. Number of replications and sample size are $R_{cov} = 200$ and $n_{cov} = 500$, respectively.}
\label{tab:mvt_coverage}
\end{table}
\end{center}
\subsection{Comparison for high dimensional data}
\label{sec:high-dim}
In addition to the above experiments, we performed a simulation experiment for high-dimensional data. Specifically, we set $n=1000$, $p = 50$, and consider the same set of true densities in Section \ref{sec:mvt_simulations}. We compared results from the proposed NN-DM method and the DP-VB. Due to severe computational time, we did not consider the DP-MC in this scenario. We also tried optional Polya trees \citep{wong2010optional} using the \texttt{PTT} package; however, the current implementation of the method breaks down in this high-dimensional setup. Due to numerical instability in estimating the $\mathcal{L}_1$ error in higher dimensions, we evaluate the methods in terms of their out-of-sample log-likelihood (OOSLL) instead \citep{gneiting2007strictly}, on a test set of 500 data points. We report the average OOSLL over 30 replications in Table \ref{tab:large_p_table}. The results indicate that both methods perform very similarly in terms of out-of-sample fit to the data,
with the NN-DM outperforming the DP-VB when the true density is MVC. We also observed that for this experiment, the NN-DM methods with default choice of hyperparameters and with cross-validated choice of $\Psi_0$ have almost identical performance. For the NN-DM, we set $k = 12$ after carrying out a sensitivity analysis on $k$ by considering $k = 5, 7, 10, 15,$ and $20$. The best results for the NN-DM were obtained for $k \in \{7, 10, 12\}$ with negligible difference in out-of-sample log-likelihoods between these three choices, with $k=12$ performing the best.
\begin{table}[]
    \centering
    \begin{tabular}{ccccccc}
    \hline
        Method & MG & SN & T & MST & MVC & MVG \\
        \hline\\
        NN-DM & $-1.75$ & $-1.74$ & $-1.84$ & $ -1.84$ &  $-1.31$ & $ -1.36$ \\
        DP-VB & $ -1.75$ &  $ -1.74$ &  $ -1.86$ & $ -1.84$ &  $-1.34$ &  $ -1.36$ \\
        \hline
    \end{tabular}
    \caption{Out-of-sample log-likelihood ($ \times 10^4$) of NN-DM and DP-VB on a test set of 500 points for 6 different multivariate densities considered in Section \ref{sec:mvt_simulations}, for $n = 1000$ and $p = 50$. 
    Greater out-of-sample log-likelihood is better.}
\label{tab:large_p_table}
\end{table}

\subsection{Runtime Comparison} \label{sec:runtime}


With $n$ data points in $p$ dimensions, the initial nearest neighbor allocation into $n$ neighborhoods can be carried out in $\mathcal{O}(n\log n)$ steps \citep{knn_nlogn, ma2019true}. Once the neighborhoods are determined with $k_n$ points in each neighborhood, obtaining the neighborhood specific empirical means and covariance matrices has $\mathcal{O}(nk_n p+nk_n p^2) = \mathcal{O}(nk_n p^2)$ complexity. Obtaining the pseudo-posterior mean \eqref{eq:multivariate_posterior_mean} then requires inversion of $n$ such $p \times p$ matrices to evaluate the multivariate t-density, with a runtime of $\mathcal{O}(np^3)$. Therefore, the total runtime to obtain the pseudo-posterior mean is of the order $\mathcal{O}(nk_n p^2 + np^3)$. When we are interested in uncertainty quantification, we require Monte Carlo samples of the NN-DM, which are independently drawn from its pseudo-posterior. This involves sampling the Dirichlet weights, the neighborhood specific unknown mean and covariance matrix parameters of the Gaussian kernel, and evaluating a Gaussian density for each neighborhood, as outlined in Algorithm \ref{algo:NNDP_multivariate}. To obtain $M$ Monte Carlo samples, the combined complexity of this step is thus $\mathcal{O}(Mn + Mnp^3) = \mathcal{O}(Mnp^3)$. Overall the runtime complexity to obtain NN-DM samples is therefore $\mathcal{O}(Mnp^3 + nk_n p^2 + np^3)$. For high dimensional scenarios, this runtime can be greatly improved by using a low rank matrix factorization of both the neighborhood specific empirical covariance matrices and the sampled covariance matrix parameters to make matrix inversion more efficient \citep{golub2012matrix}. 
We now provide a detailed simulation study of runtimes of the proposed method, with all the simulations carried out on an M1 MacBook Pro with $16$ GB of RAM.

We first focus on some runtime experiments comparing NN-DM and DP-MC. In our experiments, we focus on $p=1$ and $p=4$. The runtime for NN-DM consists of the time to estimate $\delta_0^2$ by cross-validation as in Section \ref{sec:CV} and then drawing samples from its pseudo-posterior.  For both dimensions, the sample size is varied from $n = 200$ to $n = 1500$ in increments of 100. Data are generated from the standard Gaussian density (GS) for $p=1$ and from a mixture of skew t-distributions with the parameters as described for the case MST in Section \ref{sec:mvt_simulations} for $p=4$. For $p=1$, we evaluate the two methods at $500$ test points, while for $p=4$ we evaluate the methods at $200$ test points. The hyperparameters are kept the same as in Sections \ref{sec:univ_simulations} and \ref{sec:mvt_simulations}. We took $1000$ Monte Carlo samples for the NN-DM and $2500$ MCMC samples for the DP-MC with a burn-in of $1500$ samples. We provide a figure summarising the results in Figure \ref{fig:time_mvt}. In the top panel of Figure \ref{fig:time_mvt}, we plot the average of the logarithm (base $10$) of the run times of each approach for $10$ independent replications. Corresponding $\mathcal{L}_1$ errors of the two methods is  included in the bottom panel of Figure \ref{fig:time_mvt}.

In Figure \ref{fig:time_mvt}, the NN-DM is at least an order of magnitude faster than DP-MC. The time saved becomes more pronounced in the multivariate case, where for sample size $1500$ the NN-DM is $\sim 50$ times faster.
The gain in computing time does not come at the cost of accuracy as can be seen from the right panel; the proposed method maintains the same order of $\mathcal{L}_1$ error as the DP-MC in the univariate case and often outperforms the DP-MC in the multivariate case. We did not implement the Monte Carlo sampler for the proposed algorithm in parallel, but such a modification would substantially improve runtime. Bypassing cross-validation and choosing default hyperparameters instead as outlined in Section \ref{sec:CV}, NN-DM took $3.3$ seconds and $16.4$ seconds when $p=1$ and $p=4$, respectively, with sample size $n = 1500$. In the same scenario, DP-MC took $99.4$ seconds and $1618.1$ seconds for $p=1$ and $p=4$, respectively. Thus the NN-DM with default hyperparameters is about $30$ times faster when $p=1$ and almost $100$ times faster when $p=4$. 

We also compare the runtime of the proposed method with three recent implementations of the DPM, namely the packages \texttt{bnpy} \citep{hughes2014bnpy}, \texttt{DPMMSubClusters} \citep{dinari2019distributed}, and \texttt{vdpgm} \citep{NIPS2006_2bd235c3} available for download at \url{https://kenichikurihara.com/variational-dirichlet-process-gaussian-mixture-model/}. These three packages implement variational approximations of the DPM posterior with different modifications.  
We also include the DP-MC and OPT for comparison. All the runtime results are comparable only up to machine and coding language differences. Amongst the competitor package implementations, the \texttt{NNDM} and \texttt{dirichletprocess} packages are the only ones providing (pseudo) posterior samples of the density estimate at a test point. We consider the average runtime of $R = 10$ replicates to fit a training data set of iid $N(0,1)$ entries with $n=1500$ and $p=4$. For the NN-DM, DP-MC, and OPT, we consider $1000$ (pseudo) posterior samples. Table \ref{tab:runtimecompare} provides the runtimes for the different packages considered. Overall, the fastest implementation is observed for the \texttt{PTT} package. The next fastest implementations are the \texttt{NNDM} without cross-validation (CV), \texttt{DPMMSubClusters}, and \texttt{bnpy}. The runtime for \texttt{NNDM} with CV  closely follows the previous implementations, with both \texttt{NNDM} with and without CV providing (pseudo) posterior samples. The major improvement in runtime for NN-DM is mainly due to the fact that neighborhood allocations are fixed here which is not the case for DP-MC. 

\begin{figure}
\centering
\begin{subfigure}{.4\textwidth} 
  \centering
\includegraphics[height = 4cm, width = 6cm]{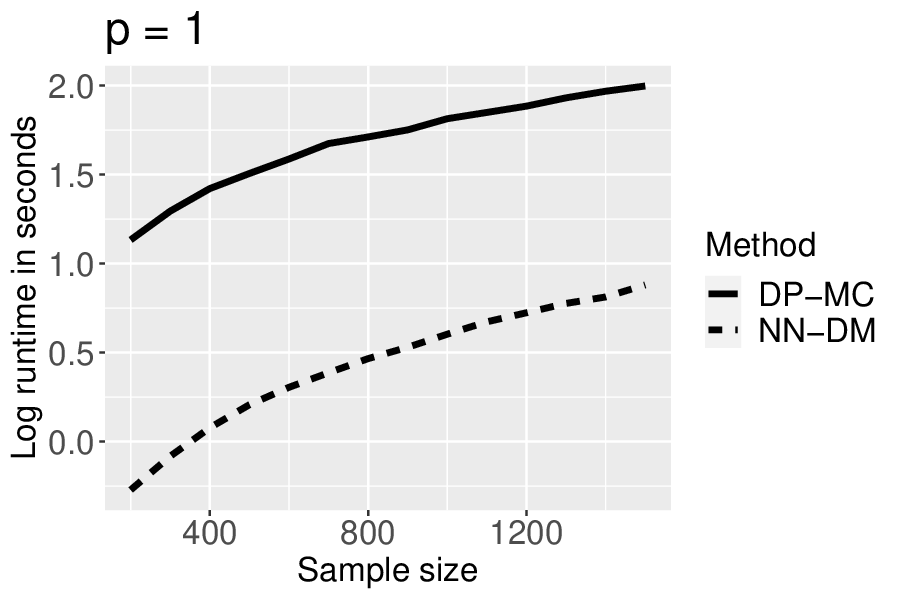}
\end{subfigure}
\begin{subfigure}{.4\textwidth} 
  \centering
\includegraphics[height = 4cm, width = 6cm]{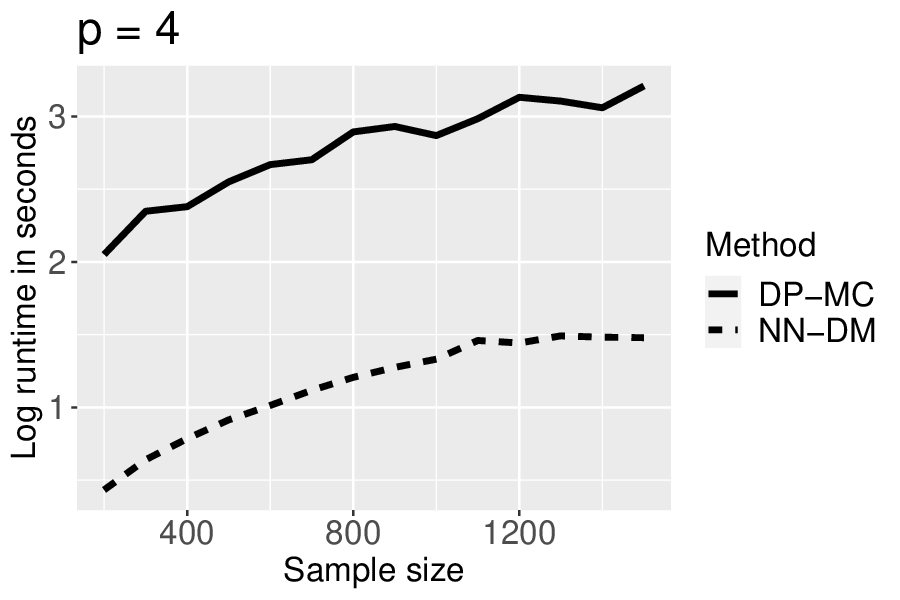}
\end{subfigure}
\begin{subfigure}{.4\textwidth} 
\includegraphics[height = 4cm, width = 6cm]{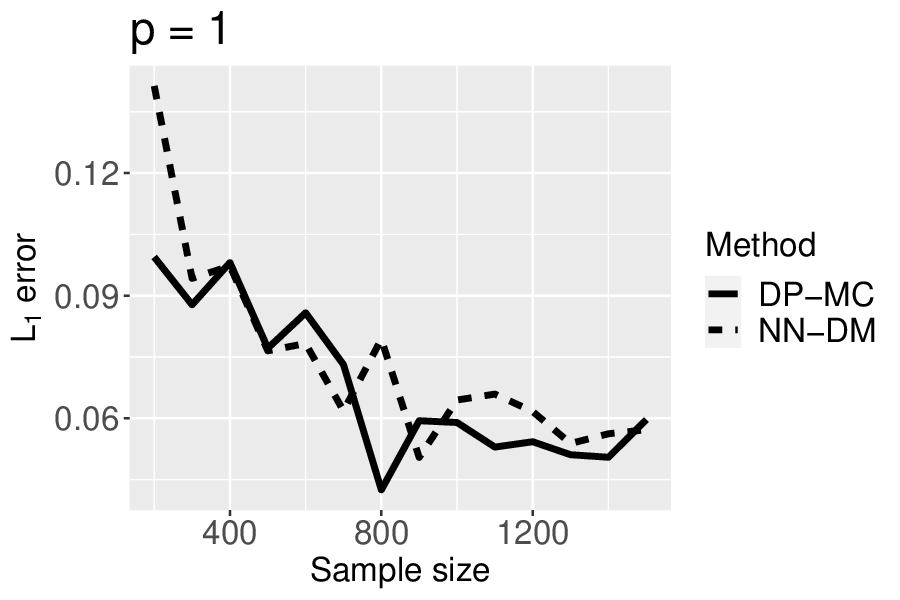}
\end{subfigure}
\begin{subfigure}{.4\textwidth} 
  \centering
\includegraphics[height = 4cm, width = 6cm]{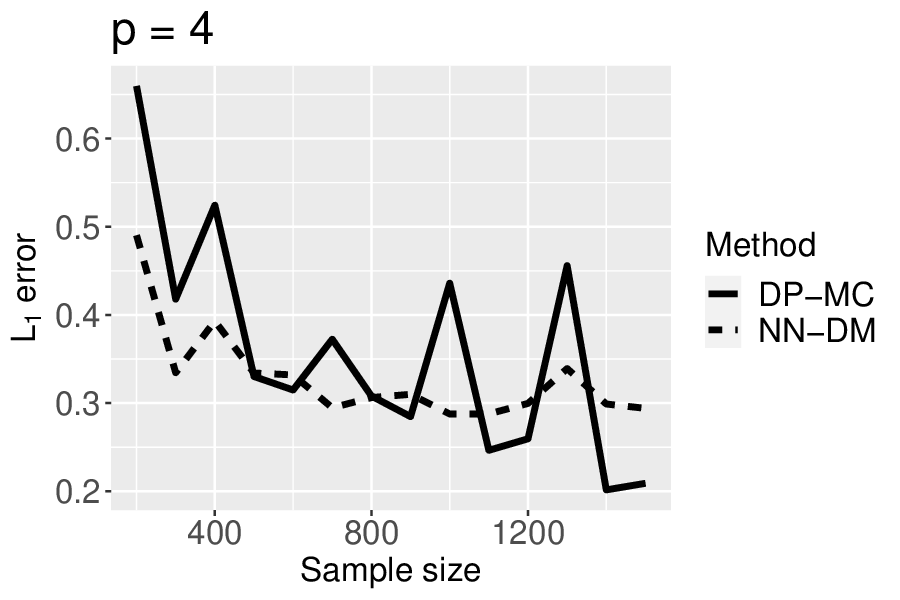}
\end{subfigure}
\caption{Runtime comparison of DP-MC and NN-DM in univariate case and for 4-dimensional data. Top panel shows runtimes in $\log_{10}$ scale whereas bottom panel shows corresponding $\mathcal{L}_1$ error. Sample size $n$ is varied from $200$ to $1500$ in increments of $100$.}
\label{fig:time_mvt}
\end{figure}

\begin{table}[h]
\begin{tabular}{|l|l|l|}
\hline
\textbf{Package (Language)}            & \textbf{Average Runtime (s)} & \textbf{Provides Samples?} \\ \hline
\texttt{bnpy} (Python)                 & 5.79                & No                      \\ \hline
\texttt{DPMMSubClusters} (Julia)       & 4.33                & No                      \\ \hline
\texttt{vdpgm} (MATLAB)                & 58.38               & No                      \\ \hline
\texttt{NNDM} (RCpp and R, with CV)    & 18.96               & Yes                     \\ \hline
\texttt{NNDM} (Rcpp and R, without CV) & 3.52                & Yes                     \\ \hline
\texttt{dirichletprocess} (R)          & 1068.48             & Yes                     \\ \hline
\texttt{PTT} (Rcpp and R)              & 0.59                & No                     \\ \hline
\end{tabular}
\caption{Table comparing the average runtimes of different packages for $n=1500$, $p = 4$. NN-DM runtimes are provided both with and without cross-validation (CV), using the package \texttt{NNDM} developed by the authors.}
\label{tab:runtimecompare}
\end{table}

\subsection{Sensitivity to the choice of $k$}
\label{sec:k-cv-results}

In this subsection, we investigate the role of $k_n = k$ in finite samples for the proposed method. We consider $n=200$ samples from the SP density in the univariate case and the MG density in the bivariate case. In each case, we fix a test set of $n_t = 500$ points, and evaluate the out-of-sample log-likelihood (OOSLL) of the test points for $20$ different integer values of $k$ ranging from $2$ to $50$. Finally, we report results averaged from $10$ independent replicates of this setup. We note that for each considered value of $k$, the parameter $\delta_0^2$ was estimated using leave-one-out cross-validation. Figure \ref{fig:k-cv-results} shows how the OOSLL averaged over replicates changes as a function of $k$ for each density considered. The original OOSLL values of the test data points were scaled by the number of test points $n_t = 500$ for better representability.

For the univariate SP density, the optimal value of $k$ which maximizes the average OOSLL is $\widehat{k} = 9$. This is close to the choice of $k = 6$ as taken in Section \ref{sec:univ_simulations}. For the bivariate MG density, we observe that the choice of $k$ maximizing the OOSLL is $\widehat{k} = 12$, which is also close to the choice of $k = 10$ as taken in Section \ref{sec:mvt_simulations}. For both the univariate and the bivariate case, the out-of-sample log-likelihood of the test set shows little variation with changing $k$. This indicates that the estimates obtained from the proposed method are quite robust to the particular choice of $k$. 

\begin{figure}
\begin{subfigure}{.30\textwidth}
\includegraphics[height = 5cm, width = 7cm]{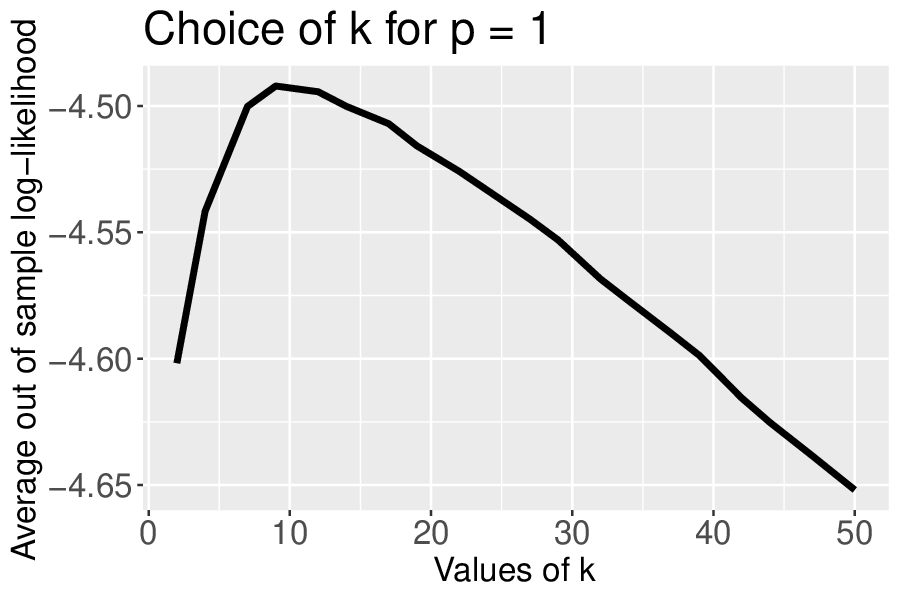}
\end{subfigure}
\hspace{1.2in}
\begin{subfigure}{.30\textwidth}
\includegraphics[height = 5cm, width = 7cm]{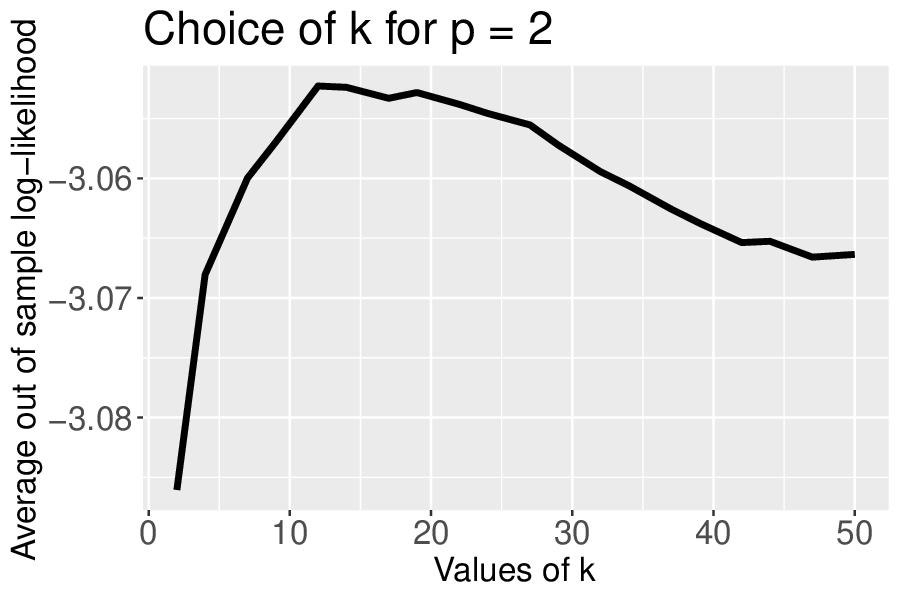}
\end{subfigure}
\caption{Average out-of-sample log-likelihood of $500$ test points for the NN-DM as a function of $k$ for one-dimensional and two-dimensional data. Number of samples and number of replications are $n = 200$ and $R = 10$, respectively.}
\label{fig:k-cv-results}
\end{figure}

\section{Application} 
\label{sec:application-pulsar}

We apply the proposed density estimator to binary classification. Consider data $\mathcal{D} = \{(X_i, Y_i) : i = 1, \ldots, n\}$, where $X_i \in \mathbb{R}^p$ are $p$-dimensional feature vectors and $Y_i \in \{0, 1\} $ are binary class labels. To predict the probability that $y_0 = 1$ for a test point $x_0$, we use Bayes rule:
\begin{equation}
    \mbox{pr}(y_0 = 1 \mid x_0) = \dfrac{\tilde{f}_1(x_0) \, \mbox{pr}(y_0 = 1)}{ \tilde{f}_0(x_0) \, \mbox{pr}(y_0 = 0) + \tilde{f}_1(x_0) \, \mbox{pr}(y_0 = 1)},
    \label{eq:bayes_classifier}
\end{equation}
where $\tilde{f}_j(x_0)$ is the feature density at $x_0$ in class $j$ and $\mbox{pr}(y_0 = j)$ is the marginal probability of class $j$, for $j=0,1$. Based on $n_t$ test data, we let $\widehat{\mbox{pr}}(y_0=1) = (1/n_t)\sum_{i=1}^{n_t} Y_i$, with $\widehat{\mbox{pr}}(y_0=0) = 1 - \widehat{\mbox{pr}}(y_0=1)$. We use either the NN-DM pseudo-posterior mean $\hat{f}_n(\cdot)$, the DP-MC posterior mean $\hat{f}_{\text{DP}}(\cdot)$, or the DP-VB posterior mean $\hat{f}_{\text{VB}}(\cdot)$ for estimating the within class densities, and compare their classification performances in terms of sensitivity, specificity, and probabilistic calibration. We omit the KDE as to the best of our knowledge, no routine \texttt{R} implementation is available for data having more than $6$ dimensions. 

The high time resolution universe survey data \citep{keith2010high} contain information on sampled pulsar stars. Pulsar stars are a type of neutron stars and their radio emissions are detectable from the Earth. These stars have gained considerable interest from the scientific community due to their several applications \citep{lorimer2012handbook}. The data are publicly available from the University of California at Irvine machine learning repository. Stars are classified into pulsar and non-pulsar groups according to 8 attributes \citep{lyon2016pulsars}. There are a total of $17898$ instances of stars, among which $1639$ are classified as pulsar stars. 

We create a test data set of $200$ stars, among which $23$ are pulsar stars. The training size is then varied from $300$ to $1800$ in increments of $300$, each time adding $300$ training points by randomly sampling from the entire data leaving out the initial test set. In Figure \ref{fig:htru_data1}, we plot the sensitivity and specificity of the three methods in consideration.
All the methods exhibit similar sensitivity across various training sizes; the DP-MC has marginally better specificity for training sizes $1200$ and $1500$, while the NN-DM has better specificity for training sizes $300$ and $600$. Both the NN-DM and the DP-MC exhibit higher specificity and sensitivity than the DP-VB across all training sample sizes considered.

\begin{figure}
\begin{subfigure}{.37\textwidth} 
\includegraphics[height = 5.5cm, width = 7cm]{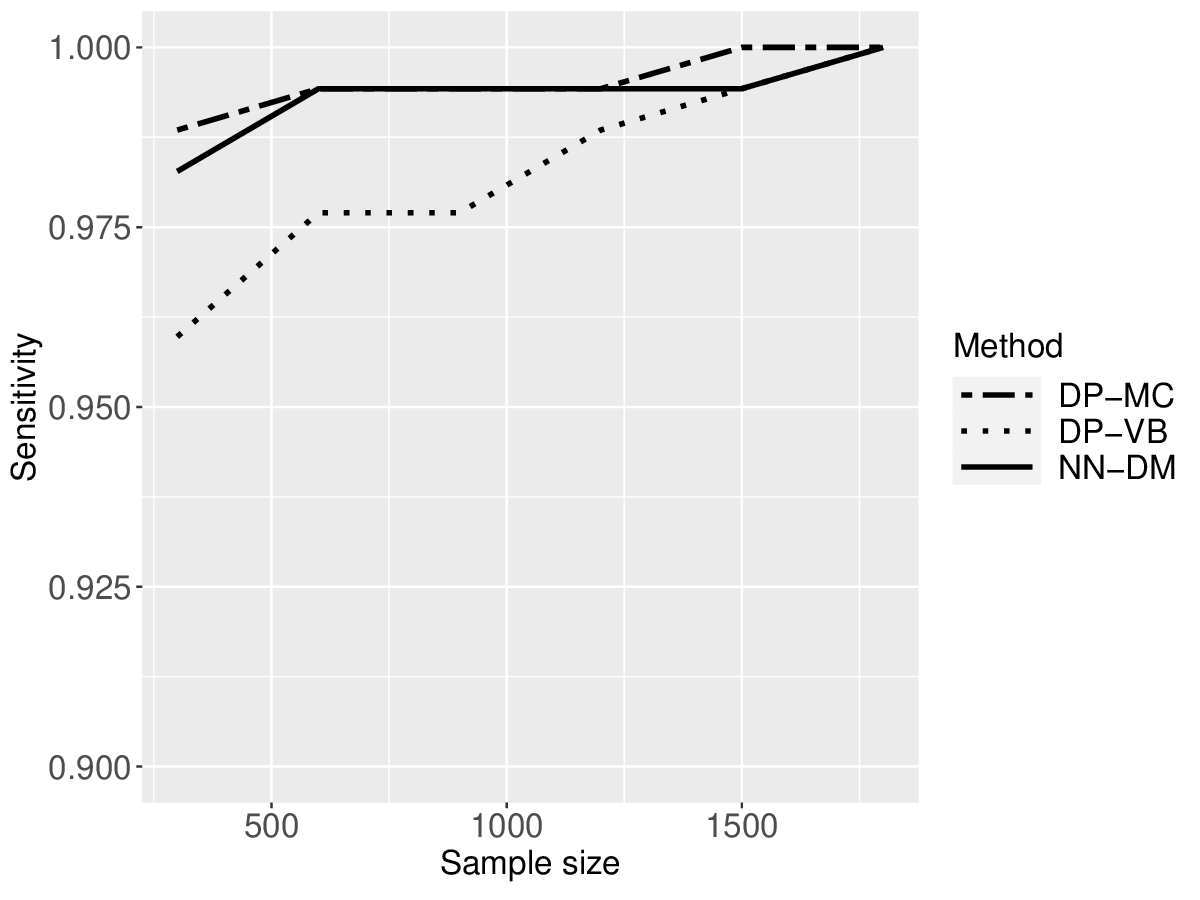} 
\end{subfigure}
\hspace{0.85in}
\begin{subfigure}{.37\textwidth}
\includegraphics[height = 5.5cm, width = 7cm]{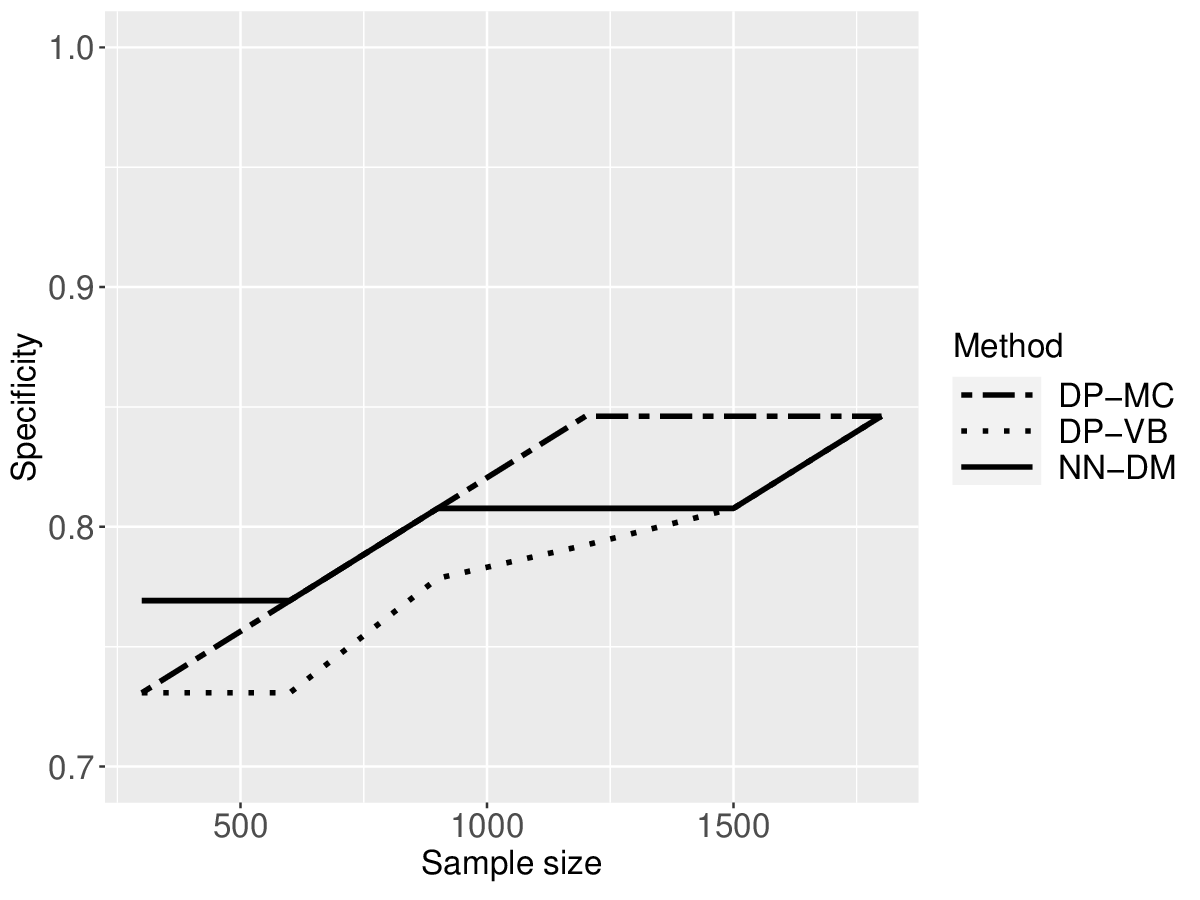}
\end{subfigure}
\caption{Sensitivity and specificity of the NN-DM, DP-MC, and DP-VB for the high time resolution universe survey data.}  
\label{fig:htru_data1}
\end{figure}

We also compare the methods using the Brier score, a proper scoring rule \citep{gneiting2007strictly} for probabilistic classification. Suppose for $n_t$ test points and the $i$th Monte Carlo sample, $p_{i}$ denotes the sampled $n_t \times 1$ probability vector for a generic method. We compute the normalized Brier score for the $i$th sample as $(1/n_t)\,||p_i - Y_t||_{2}^2$, where $Y_t$ is the vector of class labels in the test set. Then with $T$ samples of $p_{i}$, $i=1,\ldots,T$, we compute the mean Brier score for the three methods considered. The mean Brier score for each training size is shown in the right panel of Figure \ref{fig:htru_data2}, which naturally shows a declining trend with increasing training size.  There is little to choose between the three classifiers in terms of mean Brier score; the proposed method fairs equally well in terms of calibration of estimated test set probabilities with the MCMC implementation of the Dirichlet process. In the left panel of Figure \ref{fig:htru_data2}, the receiver operating characteristic curve of the methods is shown for $1800$ training samples. The area under the curve (AUC) for the NN-DM, the DP-MC and the DP-VB are $0.96$, $0.95$ and $0.96$, respectively. For $1800$ training samples, the computation time for the proposed method is about $13$ minutes while for the DP-MC it is approximately $5$ hours.

\begin{figure}
\begin{subfigure}{.4\textwidth}
\includegraphics[height = 5.5cm, width = 7cm]{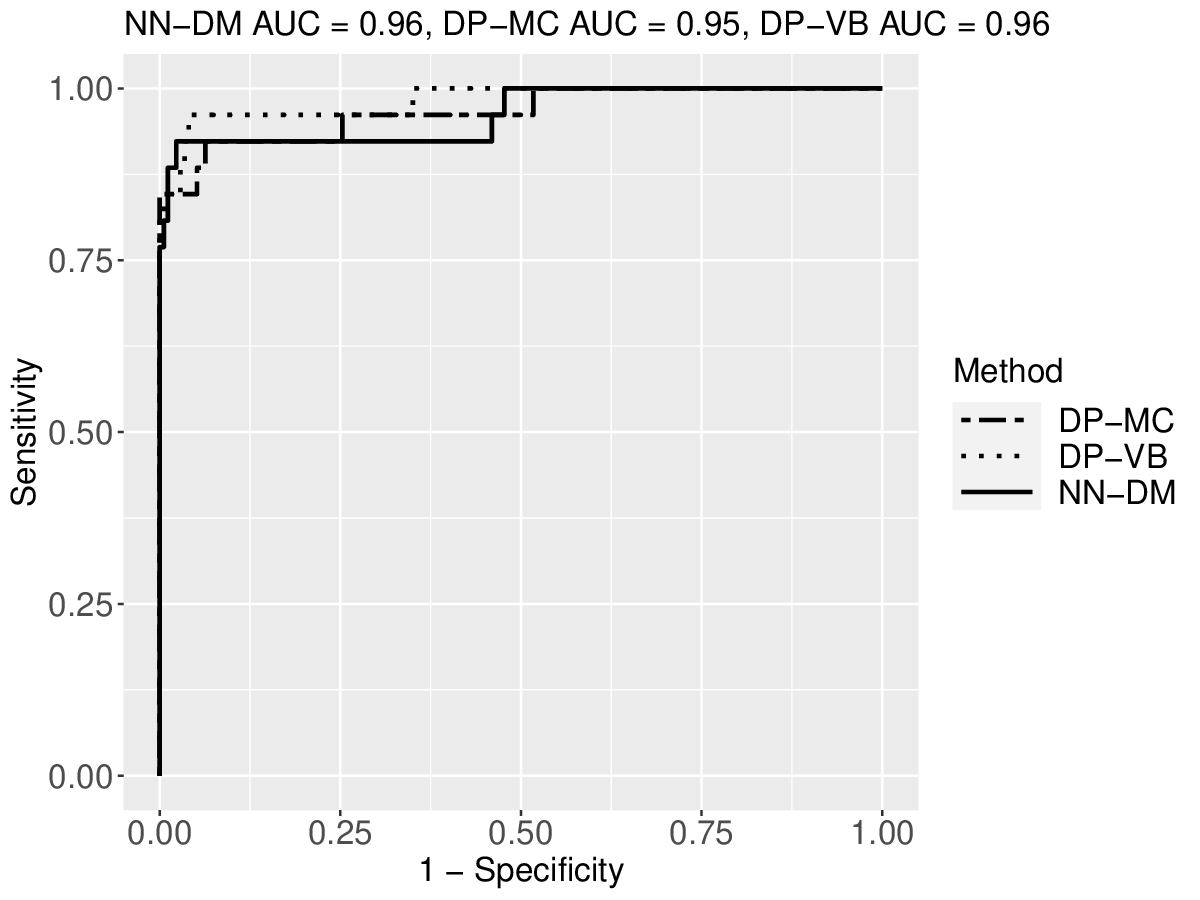}
\end{subfigure}
\hspace{0.8in}
\begin{subfigure}{.4\textwidth}
\includegraphics[height = 5.5cm, width =7cm]{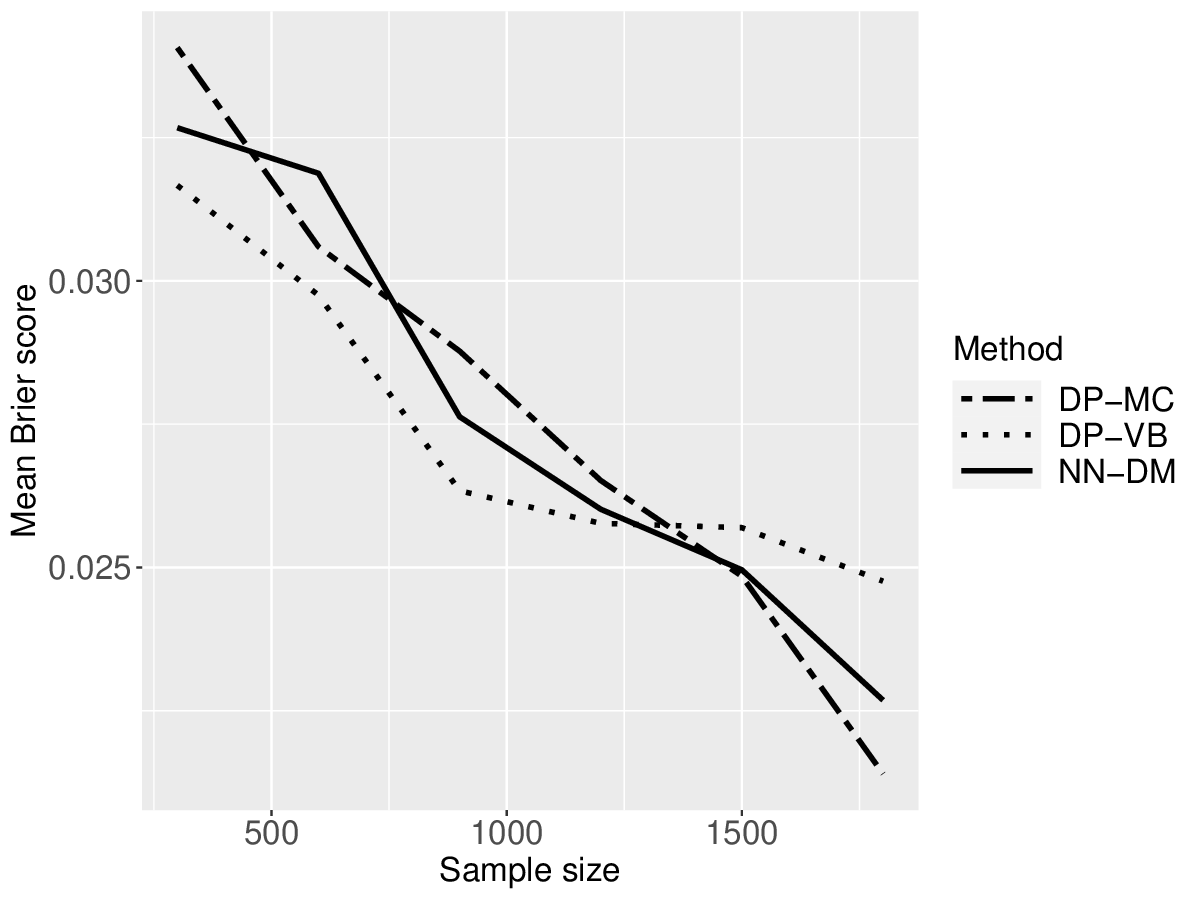}
\end{subfigure}
\caption{Left plot shows the receiver operating characteristic curve of the NN-DM, DP-MC, and DP-VB with $1800$ training samples. Area under the curve is abbreviated as AUC. Right plot shows normalized Brier scores for the methods with varying training sample size.}
\label{fig:htru_data2}
\end{figure} 

Hence, the proposed method is much faster, even without exploiting parallel computation. We also fitted the proposed method using the training set of all $17698$ points; DP-MC was too slow in this case. The sensitivity and specificity of the proposed method increased to $0.99$ and $0.91$, respectively. We additionally evaluated the methods in terms of the out-of-sample log-likelihood. The results are displayed in Figure \ref{fig:pulsar_oosll}. While the methods perform comparably in terms of their classification performance, NN-DM achieves a better fit overall, especially for the significantly less prevalent pulsar star type.

\begin{figure}
\begin{subfigure}{.35\textwidth}
\includegraphics[height = 5.5cm, width = 7cm]{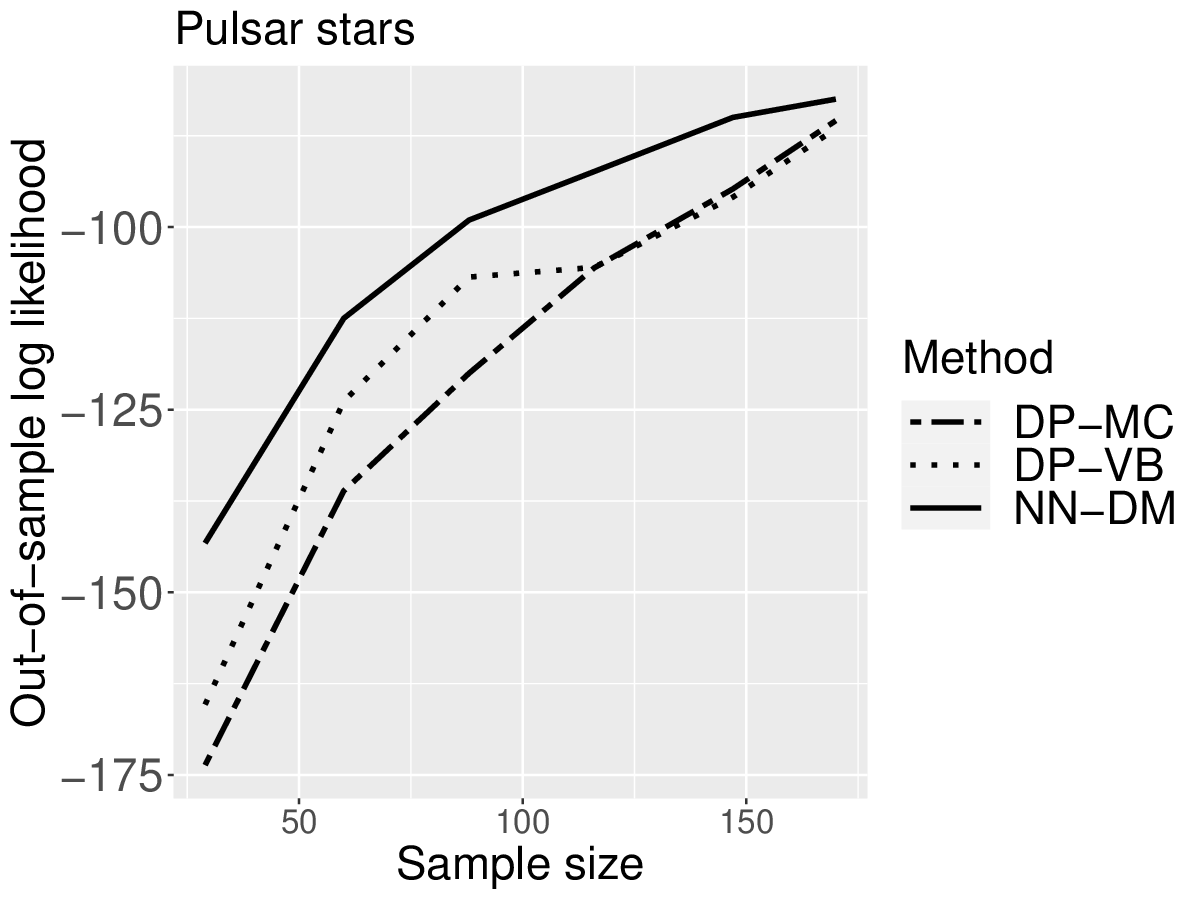}
\end{subfigure}
\hspace{0.85in}
\begin{subfigure}{.35\textwidth}
\includegraphics[height = 5.5cm, width =7cm]{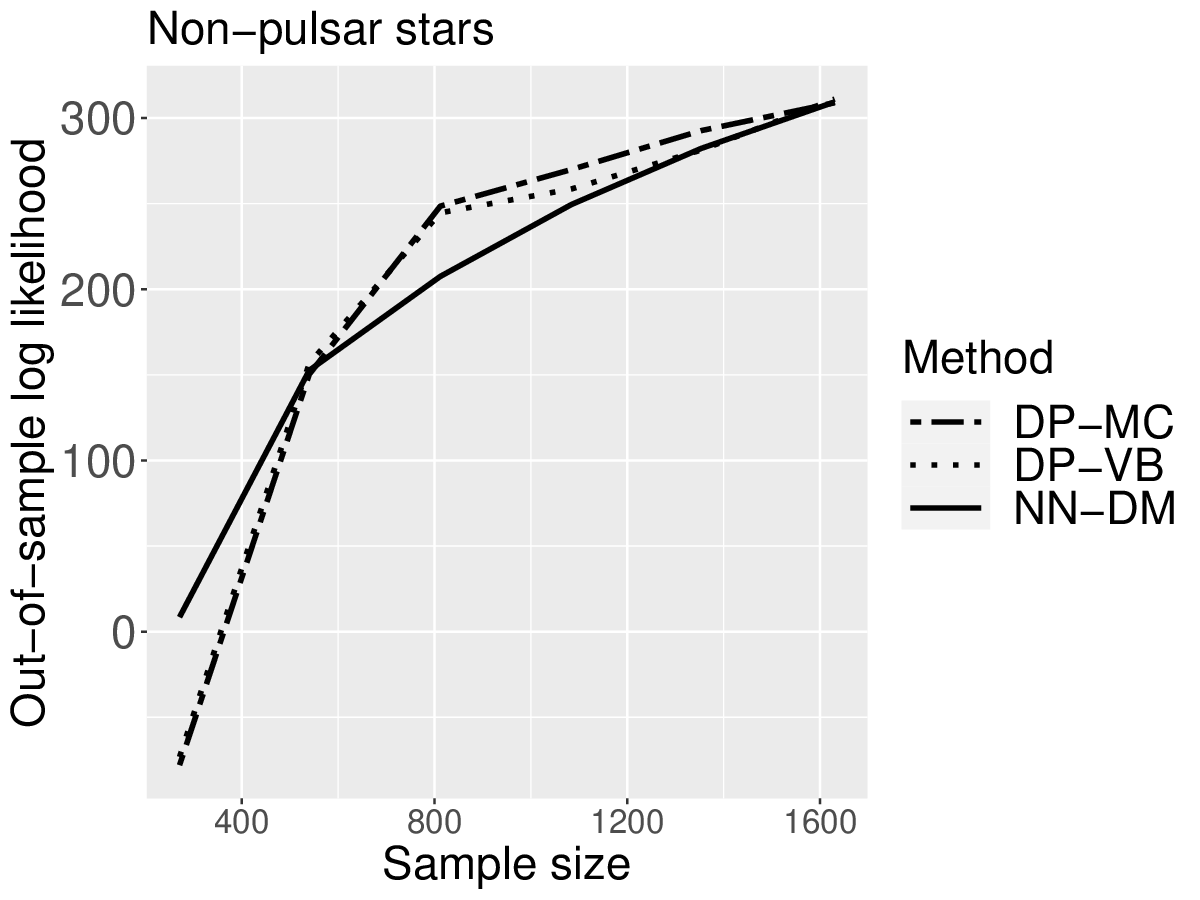}
\end{subfigure}
\caption{Left and right plots show the out-of-sample log-likelihoods of NN-DM, DP-MC, and DP-VB for the two different star types.}
\label{fig:pulsar_oosll}
\end{figure} 

\section{Discussion}
\label{sec:discussion}

The proposed nearest neighbor-Dirichlet mixture provides a useful alternative to 
Bayesian density estimation based on Dirichlet mixtures
with much faster computational speed and stability in avoiding MCMC. MCMC can have very poor performance in mixture models and other multimodal cases, due to difficulty in mixing, and hence can lead to posterior inferences that are unreliable. There is a recent literature attempting to scale up MCMC-based analyses in model-based clustering contexts including for Dirichlet process mixtures; refer, for example to \cite{song2020distributed} and \cite{ni2020consensus}.  However, these approaches are complex to implement and are primarily focused on the problem of clustering, while we are instead focused on flexible modeling of unknown densities.

The main conceptual disadvantage of the proposed approach is the lack of a coherent Bayesian posterior updating rule. However, we have shown that nonetheless the resulting pseudo-posterior can have appealing behavior in terms of frequentist asymptotic properties, finite sample performance, and accuracy in uncertainty quantification. In addition, it is important to keep in mind that Bayesian kernel mixtures have key disadvantages that are difficult to remove within a fully coherent Bayesian modeling framework.  These include a strong sensitivity to the choice of kernel and prior on the weights on these kernels; refer, for example to \cite{miller2019robust}.

There are several important next steps. The first is to develop fast and robust algorithms for using the nearest neighbor-Dirichlet mixture not just for density estimation but also as a component of more complex hierarchical models. For example, one may want to model the residual density in regression nonparametrically or treat a random effects distribution as unknown. In such settings, one can potentially update other parameters within a Bayesian model using Markov chain Monte Carlo, while using algorithms related to those proposed in this article to update the nonparametric part conditionally on these other parameters.  

\section{Acknowledgements}
\texttt{R} package \texttt{NNDM}  available at  
 \href{https://github.com/shounakchattopadhyay/NN-DM}{\texttt{https://github.com/shounakchattopadhyay/NN-DM}} was used for the numerical experiments.
 This research was partially supported by grants R01ES027498 and R01ES028804 of the United States National Institutes of Health and grant N00014-16-1-2147 of the Office of Naval Research.


\newpage

\appendix

\section*{Appendix}




\section{Prerequisites}\label{app:lemma_setup}

We first introduce some notation with accompanying technical details which will be used hereafter. We denote the Frobenius norm and determinant of $A \in \mathbb{R}^{p \times p}$ by $||A||_{F} \, = \{\mbox{tr}(A^{\T}A)\}^{1/2}$ and $|A|$, respectively. For $v \in \mathbb{R}^p$, one has $|| v v^{\T} ||_{F} \, = \, ||v||_{2}^2$ where $||a||_{2}\, = (a^{\T}a)^{1/2}$ is the Euclidean norm of $a$. For two symmetric matrices $A, B \in \mathbb{R}^{p \times p}$, we say that $A \geq B$ if $A - B$ is positive semi-definite, that is $x^{\T}(A-B)x \geq 0$ for all $x \in \mathbb{R}^p, x \neq 0_{p}$ where $0_{p} = (0,\ldots,0)^{\T}$. For a real symmetric matrix $A_{*}$, let the eigenvalues of $A_{*}$ be $e_{1}(A_*),\ldots,e_{p}(A_*)$, arranged such that $e_{1}(A_*)\geq \ldots \geq e_{p}(A_*)$. If $A \geq B$, then it follows by the min-max theorem \citep{teschl2009mathematical} that for each $j=1,\ldots,p$, we have $e_{j}(A) \geq e_{j}(B)$. In particular, we have $| A | \, \geq \, | B |$ and $|| A ||_{F}\, \geq \, || B ||_{F}$.

Now consider a true data generating density $X_1, \ldots, X_n \overset{iid}{\sim} f_0$  satisfying Assumptions \ref{assump:a1}-\ref{assump:a3} as in Section \ref{sec:mean_and_variance}. Let $\mathcal{X}^{(n)} = (X_1,\ldots,X_n)$ and suppose $f_0$ induces the measure $P_{f_0}$ on the Borel $\sigma$-field on $\mathbb{R}^p$, denoted by $\mathcal{B}(\mathbb{R}^p)$. We form the $k$-nearest neighborhood of $X_i$ using the Euclidean norm for $i=1,\ldots,n$. We also let $k$ depend on $n$ and express this dependence as $k_n$ when required. However, we routinely drop this dependence for notational simplicity. For $X_i$, let $Q_i$ be its $k$th nearest neighbor in $\mathcal{X}^{(n)}$ (for $k=1$, $Q_i = X_i$) and let $R_i$ be the distance between $X_i$ and $Q_i$, given by $R_i =  || X_i - Q_i ||_2$. Define the ball 
$$B_i = \{y \in [0,1]^p: 0 <||y - X_i||_2< R_i\}$$ 
and the probability 
$$G(X_i, R_i) = \int_{B_i} f_0(u)\, du$$
of the ball $B_i$ under $P_{f_0}$. Let $Y^{(i)}_1 = X_i$ and  $Y^{(i)}_{2}, \ldots, Y^{(i)}_{k-1}$ denote the rest of the interior points in $B_i$. Let the mean $\bar{X}_i$ and covariance matrix $S_i$ of the $i$th neighborhood be $$\bar{X}_i = \dfrac{1}{k_n}\left\{\sum_{j=1}^{k-1} Y^{(i)}_{j} + Q_i\right\},$$ $$S_{i} = \dfrac{1}{k_n} \left\{\sum_{j = 1}^{k-1} (Y^{(i)}_j - \bar{X}_i)(Y^{(i)}_j - \bar{X}_i)^{\T} + (Q_i - \bar{X}_i)(Q_i - \bar{X}_i)^{\T} \right\}.$$
We observe that $(Y_2^{(i)}, \ldots, Y_{k-1}^{(i)}, Q_i)$ is identically distributed for $i=1,\ldots,n$ since $X_1, \ldots, X_n$ are independent and identically distributed. Thus we only consider the case $i=1$ from here on. For sake of brevity, denote $Y_{u}^{(1)}$ by $Y_u$ for $u=2,\ldots,k-1$ and $Q_1$ by $Q$.  

Conditional on $X_1 = x_1 \in [0,1]^p$ and $R_1 = r_1 > 0$, following \cite{mack1979multivariate}, the conditional joint density of $Y_2, \ldots, Y_{k-1}$ and $Q$ is
\begin{equation*}
p(y_2, \ldots, y_{k-1}, q \mid x_1, r_1) = \left\lbrace\prod_{j=2}^{k-1} \dfrac{f_0(y_j)}{G(x_1, r_1)}\mathbb{I}\left(y_j \in B_1 \right)\right\rbrace \dfrac{f_0(q)}{G^{'}(x_1, r_1)} \mathbb{I}\left(|| q - x_1 || \, = r_1\right),   
\end{equation*} 
where $G^{'}(x_1, r_1) = \partial G(x_1, r_1) / \partial r_1$ and $\mathbb{I}(A)$ denotes the indicator function of the event $A \in \mathcal{B}(\mathbb{R}^p)$. Thus conditional on $X_1$ and $R_1$, the random variables $Y_2, \ldots,  Y_{k-1}$ are independent and identically distributed, and independent of $Q$. 

Let the function $\rho(x_1, r_1) =  r_1^{\kappa_1}$ where $\kappa_1$ is a non-negative integer. This function can be identified with $\phi(\cdot)$ in equation (11) of \cite{mack1979multivariate}. In the results that follow, we will require the expectation of $\rho(x_1, r_1)$ under $P_{f_0}$ for different choices of $\kappa_1$. To that end, we shall repeatedly make use of the equation (12) from \citet{mack1979multivariate} adapted to our setting:
\begin{eqnarray}\label{eq:mack_result}
\lefteqn{ E_{P_{f_0}} \{R_1^{\kappa_1}\mid X_1 = x_1\} = } \nonumber \\
&& \frac{(n-1)!}{(k-2)!(n-k)!} \int_{0}^{1} \left\lbrace \left( \frac{t}{C_p f_0(x_1)}\right)^{\kappa_1/p} + o(t^{\kappa_1/p})\right\rbrace t^{k-2} (1-t)^{n-k} dt. 
\end{eqnarray} 

Finally, we let $\widetilde{E}$ and $ \widetilde{\mbox{var}}$ denote the expectation and variance, respectively, of the NN-DM estimator $f(x)$ under the pseudo-posterior density $\widetilde{\Pi}$, described in \eqref{eq:NNDM-PP}. Conditioning notation under $\widetilde{\Pi}$ is as usual; for example, the conditional expectation $$\widetilde{E}\{f(x) \mid \pi_1, \ldots, \pi_n\} = \sum_{i=1}^{n} \pi_i \widetilde{E}\{\phi_p(x; \eta_i, \Sigma_i)\},$$ where the expectation $\widetilde{E}\{\phi_p(x; \eta_i, \Sigma_i)\}$ is with respect to the pseudo-posterior density of $(\eta_i, \Sigma_i)$ as described in Section \ref{sec:multivariate_gaussian}.

\section{Proof of Theorem \ref{theorem:post_mean}} \label{app:multproof}


Suppose $X_1,\ldots,X_n$ are independent and identically distributed random variables generated from the density $f_0$ supported on $[0,1]^p$ satisfying Assumptions \ref{assump:a1}-\ref{assump:a3}. For $i=1,\ldots,n$, recall the definitions of $\mu_i$ and $\Lambda_i$ from \eqref{eq:multivariate_posterior_mean}: 
$$\mu_{i} = \frac{\nu_0}{\nu_n}\mu_0 + \frac{k}{\nu_n}\bar{X}_i, \quad \Lambda_{i} = \frac{\nu_{n}+1}{\nu_{n}(\gamma_n - p + 1)}\Psi_{i}.$$
We want to show that $\hat{f}_{n}(x) = (1/n) \sum_{i=1}^{n} t_{\gamma_n - p + 1}(x; \mu_i, \Lambda_i) \to f_{0}(x)$ in $P_{f_0}$-probability as $n \to \infty$ for any $x \in [0,1]^{p}$, where $\hat{f}_{n}(x)$ is as described in \eqref{eq:multivariate_posterior_mean}. We first prove two propositions involving successive mean value theorem type approximations to $\hat{f}_{n}(x)$, which will imply the final result. We now state the two propositions, with accompanying proofs, before stating the final theorem.

\begin{proposition}\label{prop:mult1}
Fix $x \in [0,1]^p$. Let $f_{A}(x) = (1/n)\sum_{i=1}^{n}t_{\gamma_n - p + 1}(x;X_i, \Lambda_i)$. Also, let $k = o(n^{i_1})$ with $i_1 = 2/(p^2+p+2)$ and $\nu_0 = o(n^{-1/p} k^{(1/p)+1})$. Then, we have $\E(\,| \hat{f}_{n}(x) - f_{A}(x) | \,) \to 0$ as $n \to \infty$.
\end{proposition}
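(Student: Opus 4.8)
The plan is to reduce the proposition to a first-order mean value theorem expansion of the Student-t kernel in its location argument, and then to bound the two resulting factors — the gradient of the kernel and the location displacement $\mu_i - X_i$ — separately. Note that $\hat{f}_n(x)$ and $f_A(x)$ differ only in that the former centers the $i$-th kernel at $\mu_i$ and the latter at $X_i$, sharing the common random scale $\Lambda_i$; so the whole task is to show that replacing $\mu_i$ by $X_i$ is negligible on average. First I would fix $i$, regard $\mu \mapsto t_{\gamma_n - p + 1}(x; \mu, \Lambda_i)$ as a function of the location with $\Lambda_i$ held fixed, and apply the mean value theorem to obtain, for some $\xi_i$ on the segment from $X_i$ to $\mu_i$,
\[
|t_{\gamma_n - p + 1}(x;\mu_i,\Lambda_i) - t_{\gamma_n - p + 1}(x;X_i,\Lambda_i)| \leq ||\nabla_\mu t_{\gamma_n - p + 1}(x;\xi_i,\Lambda_i)||_2 \, ||\mu_i - X_i||_2 .
\]
Averaging over $i$ and taking $\E$ then reduces everything to bounding a uniform gradient constant times $\E(||\mu_1 - X_1||_2)$, since the $X_i$ are identically distributed.

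The second step is a uniform gradient bound. Differentiating the t-density gives $\nabla_\mu t \propto |\Lambda_i|^{-1/2}(1+Q)^{-(\gamma+p)/2 - 1}(\gamma \Lambda_i)^{-1}(x-\mu)$, with $Q = (x-\mu)^{\T}(\gamma \Lambda_i)^{-1}(x-\mu)$ and $\gamma = \gamma_n - p + 1$. Substituting $v = \gamma^{-1/2}\Lambda_i^{-1/2}(x-\mu)$ shows that the scalar factor $(\gamma+p)\gamma^{-1/2}||v||_2(1+||v||_2^2)^{-(\gamma+p)/2 - 1}$ is bounded uniformly both in $\mu$ and in the diverging degrees of freedom, so that $||\nabla_\mu t||_2 \lesssim |\Lambda_i|^{-1/2}\, ||\Lambda_i^{-1/2}||_{\mathrm{op}}$. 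Since $\Psi_i \geq \Psi_0$ (a sum of outer products is added to $\Psi_0$) and $\Psi_0 = (\gamma_0 - p + 1)\delta_0^2 \mathrm{I}_p$, every eigenvalue of $\Lambda_i$ is at least $h_n^2$ by \eqref{eq:h-n-sq}, that is $e_p(\Lambda_i) \geq h_n^2$. Hence $|\Lambda_i|^{-1/2} \leq h_n^{-p}$ and $||\Lambda_i^{-1/2}||_{\mathrm{op}} = \{e_p(\Lambda_i)\}^{-1/2} \leq h_n^{-1}$, yielding the deterministic bound $||\nabla_\mu t||_2 \leq C h_n^{-(p+1)}$ valid for all $\mu$.

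For the displacement I would write $\mu_i - X_i = (\nu_0/\nu_n)(\mu_0 - X_i) + (k/\nu_n)(\bar{X}_i - X_i)$. Compact support bounds $||\mu_0 - X_i||_2$, while $||\bar{X}_i - X_i||_2 \leq R_i$ because every neighbor lies within $R_i$ of $X_i$. Applying \eqref{eq:mack_result} with $\kappa_1 = 1$, together with Beta/Gamma asymptotics and the positive density condition $f_0 \geq a_1 > 0$, gives $\E(R_1) \lesssim (k/n)^{1/p}$, so that $\E(||\mu_1 - X_1||_2) \lesssim \nu_0/\nu_n + (k/n)^{1/p}$. Combining this with the gradient bound and using $h_n^2 \sim 1/k$, so $h_n^{-(p+1)} \sim k^{(p+1)/2}$, gives
\[
\E(|\hat{f}_n(x) - f_A(x)|) \lesssim k^{(p^2+p+2)/(2p)}\, n^{-1/p} + k^{(p-1)/2}\nu_0 ,
\]
where the exponent $(p^2+p+2)/(2p) = (p+1)/2 + 1/p$ arises from $h_n^{-(p+1)}(k/n)^{1/p}$. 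The first term is $o(1)$ exactly when $k = o(n^{i_1})$ with $i_1 = 2/(p^2+p+2)$, and the hypothesis $\nu_0 = o(n^{-1/p}k^{(1/p)+1})$ forces the second term to $o(1)$ by the same computation, completing the argument.

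The main obstacle is the uniform gradient bound of the second step. I must control $||\nabla_\mu t||_2$ simultaneously over all $\mu$ (including the unknown mean value point $\xi_i$), over the random scale matrix $\Lambda_i$, and as the degrees of freedom $\gamma_n - p + 1$ diverge. This is precisely where the lower bound $\Psi_i \geq \Psi_0$ (to keep the eigenvalues of $\Lambda_i$ away from zero) and the boundedness of the rescaled t-score are essential; without them the $h_n^{-(p+1)}$ scaling could not be extracted as a deterministic constant and pulled outside the expectation.
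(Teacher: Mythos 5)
Your proposal is correct and follows essentially the same route as the paper's proof: a mean value theorem expansion in the location argument, a uniform gradient bound extracted via $\Lambda_i \geq H_n$ (from $\Psi_i \geq \Psi_0$), the bound $E_{P_{f_0}}(R_1) \lesssim (k/n)^{1/p}$ for the displacement $\|\mu_1 - X_1\|_2$, and the same rate bookkeeping yielding the exponent $(p^2+p+2)/(2p)$. The only differences are cosmetic: you bound the unstandardized gradient with the operator norm where the paper standardizes first and uses the Frobenius norm, and you source the radius moment from equation \eqref{eq:mack_result} with $\kappa_1 = 1$ for all $p$, whereas the paper invokes Theorem 2.4 of \cite{biau2015lectures} for $p \geq 2$ and \eqref{eq:mack_result} only for $p=1$.
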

\begin{proof}
Since the $(\Lambda_{i})_{i=1}^{n}$ are identically distributed and $(\mu_{i})_{i=1}^{n}$ are identically distributed, we have $\E(\,| \hat{f}_{n}(x) - f_{A}(x) | \,) \leq \E\{\, | t_{\gamma_n-p+1}(x;\mu_1,\Lambda_1) - t_{\gamma_n-p+1}(x;X_1,\Lambda_1)| \, \}$. The multivariate mean value theorem now implies that 
\begin{equation}\label{eq:prop1_breakup}
\E(\, | \hat{f}_{n}(x) - f_{A}(x) |\,) \leq \E \left\{\, | \Lambda_{1}|^{-1/2}\, || \nabla t_{\gamma_n - p +1}(\xi; 0_p, I_{p}) ||_{2} \, || \Lambda_{1}^{-1/2}(X_1 - \mu_1) ||_2  \right\},
\end{equation}
where $\nabla t_{\gamma_n - p +1}(\xi; 0_p, I_{p}) = [\partial t_{\gamma_n - p+1}(x; 0_p, \mathbbm{I}_p)/ \partial x]_{\xi}$ for some $\xi$ in the convex hull of $\Lambda_{1}^{-1/2}(x-X_1)$ and $\Lambda_{1}^{-1/2}(x-\mu_1)$. 

Using standard results and the min-max theorem, we have $$|| \Lambda_{1}^{-1/2}(X_1 - \mu_1) ||_{2} \, \leq \, ||\Lambda_{1}^{-1/2}||_{F} \, ||X_{1} - \mu_1 ||_{2}.$$ If we let $H_{n} = H = \{\nu_{n}(\gamma_n-p+1)\}^{-1}(\nu_n+1)\Psi_0 = h^2 I_{p}$ where $h^{2} = h_n^2 = \{\nu_{n}(\gamma_n-p+1)\}^{-1}\{(\nu_n+1)(\gamma_0-p+1)\}\, \delta_{0}^2$ following the choice of $\Psi_0$ from Section \ref{sec:CV}, then it is clear that $\Lambda_{1} \geq H$. Therefore, we have $|| \Lambda_{1}^{-1/2}(X_1 - \mu_1) ||_{2} \, \leq \, || H^{-1/2} ||_{F} \, ||X_{1}-\mu_1 ||_{2}.$ Straightforward calculations show that $||H^{-1/2} \|\|_{F} \, = h^{-1} p^{1/2}$ and $|| X_{1}-\mu_1 ||_{2} \, \leq R_{1} + \{\nu_{n}^{-1}(1 + || \mu_0 ||_{2}\,)\nu_{0}\}$ where $R_{1} = \, ||X_{1} - X_{1[k]} ||_{2}$. Using Theorem $2.4$ from \cite{biau2015lectures} for $p \geq 2$ and  \eqref{eq:mack_result} for $p=1$, one gets
\begin{equation}\label{eq:rsq_bound}
    E_{P_{f_0}}(R_1^2) \leq d_p^2 \left(\frac{k}{n}\right)^{2/p},
\end{equation}
for an appropriate constant $d_p > 0$. Thus, we have $\E(R_{1}) \leq \{\E(R_{1}^2)\}^{1/2} \leq d_{p} (k/n)^{1/p}$ for sufficiently large $n$. This implies that
\begin{equation}\label{eq:mult_meanapprox}
E(||X_1 - \mu_1||_2) \leq d_p \left(\frac{k}{n}\right)^{1/p} + o\left(\frac{k}{n}\right)^{1/p}.
\end{equation}
We also have $| \Lambda_{1} | ^{-1/2} \, \leq \, | H |^{-1/2} = h^{-p}$. Finally, simple calculations yield that $$||\nabla t_{\gamma_n - p +1}(\xi; 0_p, I_{p})||_{2} \leq L_{1,n,p}$$ where $L_{1,n,p} > 0$ satisfies $L_{1,n,p} \to (2\pi)^{-p/2} e^{-1/2}$ as $n \to \infty$. 
Plugging all these back in \eqref{eq:prop1_breakup}, we obtain a finite constant $L_{2,n,p} > 0$ such that
\begin{equation}
\E(\, | \hat{f}_{n}(x) - f_{A}(x) |\,) \leq L_{2,n,p}(n^{-i_1}k)^{(p^2+p+2)/(2p)} + o\{(n^{-i_1}k)^{(p^2+p+2)/(2p)}\},
\end{equation}
which goes to $0$ as $n \to \infty$, completing the proof.
\end{proof}
We now provide the second mean value theorem type approximation which approximates the random bandwidth matrix $\Lambda_{i}$ in $f_{A}(x)$ by $H = H_n$ for each $i=1,\ldots,n$.
\begin{proposition}\label{prop:mult2}
Fix $x \in [0,1]^p$. Let $f_{K}(x) = (1/n)\sum_{i=1}^{n}t_{\gamma_n - p + 1}(x;X_i, H)$. Also, let $k = o(n^{i_2})$ with $i_2 = 4/(p+2)^2$ and $\nu_0 = o\{n^{-2/p} k^{(2/p)+1}\}$. Then, we have $\E(\,|{f}_{A}(x) - f_{K}(x)| \,) \to 0$ as $n \to \infty$.
\end{proposition}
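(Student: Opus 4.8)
The plan is to follow the template of Proposition~\ref{prop:mult1}, but to run the mean-value approximation in the scale-matrix argument rather than the location. First I would invoke that the $(\Lambda_i)_{i=1}^n$ and $(X_i)_{i=1}^n$ are identically distributed to collapse the average onto a single representative term, so that $\E(\,|f_A(x) - f_K(x)|\,) \leq \E\{\,|t_{\gamma_n-p+1}(x;X_1,\Lambda_1) - t_{\gamma_n-p+1}(x;X_1,H)|\,\}$. To compare the two scale matrices I would introduce the segment $A(s) = H + s(\Lambda_1 - H)$, $s\in[0,1]$, and write the difference as $\int_0^1 \langle \nabla_\Lambda t_{\gamma_n-p+1}(x;X_1,A(s)),\,\Lambda_1 - H\rangle\,ds$. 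Because $\Psi_1 \geq \Psi_0$ gives $\Lambda_1 \geq H$, the whole segment satisfies $A(s) \geq H$ and stays positive definite, and a Cauchy--Schwarz bound yields $|t_{\gamma_n-p+1}(x;X_1,\Lambda_1) - t_{\gamma_n-p+1}(x;X_1,H)| \leq \sup_{s\in[0,1]}\norm{\nabla_\Lambda t_{\gamma_n-p+1}(x;X_1,A(s))}_F\,\norm{\Lambda_1 - H}_F$.

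I would then control the two factors separately. Differentiating $\log t_{\gamma}(x;X_1,A) = \mathrm{const} - \tfrac12\log|A| - \tfrac{\gamma+p}{2}\log\{1 + (x-X_1)^\T A^{-1}(x-X_1)/\gamma\}$ in $A$ produces terms in $A^{-1}$ and in $A^{-1}(x-X_1)(x-X_1)^\T A^{-1}$. Using $A(s)\geq H = h_n^2\mathrm{I}_p$ together with eigenvalue monotonicity (so $|A(s)|^{-1/2}\leq h_n^{-p}$ and $\norm{A(s)^{-1}}$ is controlled by $h_n^{-2}$), the compact-support bound $\norm{x-X_1}_2\leq\sqrt p$, and the crude tail bounds $(1+q)^{-(\gamma+p)/2}\leq 1$ and $(1+q)^{-1}\leq 1$, I expect a \emph{deterministic} bound $\sup_{s}\norm{\nabla_\Lambda t_{\gamma_n-p+1}(x;X_1,A(s))}_F \leq c_{n,p}\,h_n^{-(p+4)}$, uniform in $X_1$, with $c_{n,p}$ converging to a finite constant. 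For the perturbation I would use $\Lambda_1 - H = \{\nu_n(\gamma_n-p+1)\}^{-1}(\nu_n+1)(\Psi_1 - \Psi_0)$ with $\Psi_1 - \Psi_0 = \sum_{j\in\mathcal{N}_1}(X_j - \bar X_1)(X_j - \bar X_1)^\T + k\nu_0\nu_n^{-1}(\bar X_1 - \mu_0)(\bar X_1 - \mu_0)^\T$; since every neighbour lies within distance $R_1$ of $X_1$, the first sum is $O(kR_1^2)$ in Frobenius norm, so \eqref{eq:rsq_bound} yields $\E(\norm{\Lambda_1 - H}_F) = O\{(k/n)^{2/p}\} + O(\nu_0/k)$, and the hypothesis $\nu_0 = o\{n^{-2/p}k^{(2/p)+1}\}$ renders the second term negligible next to the first.

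Since the gradient bound is deterministic it factors out of the expectation, leaving $\E(\,|f_A(x) - f_K(x)|\,) \lesssim h_n^{-(p+4)}(k/n)^{2/p} \asymp k^{(p+4)/2 + 2/p}\,n^{-2/p}$, where I have used $h_n^2 \asymp 1/k$. This vanishes exactly when $k = o(n^{i_2})$ with $i_2 = 4/(p+2)^2$, which is the stated hypothesis. The main obstacle is the gradient step: one must confirm that discarding the exponential localization of the $t$-kernel and retaining only the information $A(s)\geq H$ still delivers the sharp rate $h_n^{-(p+4)}$, since a looser estimate of the quadratic-form term would inflate the exponent and shrink $i_2$, whereas these crude bounds are precisely tight enough to reproduce $4/(p+2)^2$.
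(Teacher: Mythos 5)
Your proposal is correct and takes essentially the same route as the paper's proof: the same reduction to a single representative term by identical distribution, a mean-value bound in the scale-matrix argument exploiting $\Lambda_1 \geq H$ so that every intermediate matrix dominates $H$, a deterministic gradient bound of order $h_n^{-(p+4)}$ (the paper phrases it as $c_{p,\gamma_n-p+1}(\gamma_n+1)h_n^{-(p+2)}$, the same order since $\gamma_n \asymp h_n^{-2}$), and $E_{P_{f_0}}(\norm{\Lambda_1 - H}_F) = O\{(k/n)^{2/p}\}$ via the bound on $E_{P_{f_0}}(R_1^2)$ together with the $\nu_0$ condition. Your closing worry about the gradient step is unfounded: these crude bounds reproduce exactly the paper's rate $k^{(p+2)^2/(2p)}n^{-2/p}$, and hence the stated threshold $i_2 = 4/(p+2)^2$.
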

\begin{proof}
Using the identically distributed properties of $(\Lambda_{i})_{i=1}^{n}$ and $(X_{i})_{i=1}^{n}$, we obtain $\E(\,|{f}_{A}(x) - f_{K}(x)| \,) \leq \E(\, | t_{\gamma_n-p+1}(x;X_1,\Lambda_1) - t_{\gamma_n-p+1}(x;X_1,H) | \,)$. Using the multivariate mean value theorem, we obtain that 
\begin{equation}\label{eq:prop2_breakup}
\E\left(\, |t_{\gamma_n-p+1}(x;X_1,\Lambda_1) - t_{\gamma_n-p+1}(x;X_1,H)| \,\right) \leq \E(\, ||M_1||_{F} \, || \Lambda_1 - H ||_{F} \,),
\end{equation}
where $M_1 = [\partial\{t_{\gamma_n-p+1}(x;X_1, \Sigma)\}/\partial \Sigma]_{\Sigma_0}$ for some $\Sigma_0$, with $\Sigma_0$ in the convex hull of $\Lambda_{1}$ and $H$. Since $\Lambda_1 \geq H$, we immediately have $\Sigma_0 \geq H$ as well. Using the definitions of $\Lambda_1$ and $H$, we have $$|| \Lambda_{1} - H ||_{F} \, \leq \frac{(\nu_n + 1)}{\nu_n (\gamma_n - p + 1)} \, \left\{ \left|\left| \sum_{j \in \mathcal{N}_1} (X_j - \bar{X}_1)(X_j - \bar{X}_1)^{\T} \right|\right|_{F} \, + \frac{k\nu_0}{\nu_n} \left|\left| \bar{X}_1 \bar{X}_1^{\T} \right|\right|_{F} \, \right\}.$$ Since $||\sum_{j \in \mathcal{N}_1} (X_j - \bar{X}_1)(X_j - \bar{X}_1)^{\T}||_F \leq \sum_{j \in \mathcal{N}_1} ||(X_j - \bar{X}_1)(X_j - \bar{X}_1)^{\T}||_F = \sum_{j \in \mathcal{N}_1} ||X_j - \bar{X}_1||_{2}^2 \leq \sum_{j \in \mathcal{N}_1} R_1^2 = kR_1^2$, we get for sufficiently large $n$ the following: 
\begin{align}\label{eq:mult_bwapprox}
   \E (\, || \Lambda_{1} - H ||_{F} \,) & \leq \E(R_{1}^2) + o\left(\frac{k}{n}\right)^{2/p},\\
   & \leq d_p^2 \left(\frac{k}{n}\right)^{2/p} + o\left(\frac{k}{n}\right)^{2/p},
   \label{eq:mult_bwapprox_part2}
\end{align}
using \eqref{eq:rsq_bound} and $\nu_0 = o\left\{n^{-2/p}k^{(2/p)+1}\right\}$.
Taking partial derivatives of $\mbox{log}\{t_{\gamma_n-p+1}(x;X_1,\Sigma)\}$ with respect to $\Sigma$ evaluated at $\Sigma_0$ and taking Frobenius norm of both sides, we obtain $$||t_{\gamma_n-p+1}^{-1}(x;X_1,\Sigma_0) \, M_1 ||_{F}\, \leq h^{-2}(\gamma_n+1)$$ for sufficiently large $n$. We now observe that $$t_{\gamma_n-p +1}(x;X_1, \Sigma_0) \leq c_{p, \gamma_n - p + 1} | \Sigma_0|^{-1/2}\, \leq c_{p, \gamma_n - p + 1} | H |^{-1/2}\, = h^{-p} c_{p, \gamma_n - p + 1},$$ where $c_{p, \beta} = (\pi \beta)^{-p/2} \{\Gamma(\beta/2)\}^{-1} \Gamma\{(\beta+p)/2\}$ for $p \geq 1, \beta > 0$. Note that $c_{p, \beta} \to (2\pi)^{-p/2}$ as $\beta \to \infty$ for any $p \geq 1$. This immediately implies that $|| M_1 ||_{F} \, \leq h^{-(p+2)} c_{p, \gamma_n-p+1}(\gamma_n+1)$ for sufficiently large $n$. Plugging all these back in equation \eqref{eq:prop2_breakup}, we obtain for sufficiently large $n$, a finite $L_{3,n,p}>0$ such that
\begin{equation}
\E(\, |{f}_{A}(x) - f_{K}(x)| \,) \leq L_{3,n,p}(n^{-i_2}k)^{(p+2)^2/(2p)} + o\{(n^{-i_2}k)^{(p+2)^2/(2p)}\},
\end{equation}
which goes to $0$ as $n \to \infty$, proving the proposition.
\end{proof}

We now prove Theorem \ref{theorem:post_mean}.
\\
\begin{proof}[Theorem 4]
$\E(\, |\hat{f}_{n}(x) - f_{K}(x)| \,) \leq \E(\, | \hat{f}_{n}(x) - f_{A}(x) | \,) + \E(\, | f_{A}(x) - f_{K}(x) | \,)$ by the triangle inequality. Using Propositions \ref{prop:mult1} and \ref{prop:mult2}, we obtain that $E_{P_{f_0}}(\, |\hat{f}_n(x) - f_K(x)| \,) \to 0$ as $n \to \infty$. From Section \ref{app:kdeconsistency} of the Appendix, we obtain $f_{K}(x) \to f_{0}(x)$ in $P_{f_0}$-probability. This immediately implies that given the conditions on $k, \nu_0,$ and for any $x \in [0,1]^p$, we have $\hat{f}_n(x) \to f_0(x)$ in $P_{f_0}$-probability.
\end{proof}
\section{Proof of Theorem \ref{theorem:post_var}}
\label{app:proof_post_var}
\begin{proof}
Fix $x \in [0,1]^p$. For $i=1,\ldots,n$, let $z_{i} = \phi_p(x \, ; \eta_{i}, \Sigma_{i})$ and suppose $z^{(n)} = (z_{1},\ldots,z_{n})^{\T}$. Then, we have $f(x) = \sum_{i=1}^{n} \pi_{i} z_{i} = z^{(n)\T} \pi^{(n)}$ where $\pi^{(n)} = (\pi_{1},\ldots,\pi_{n})^{\T}$. We begin with the identity
\begin{equation}\label{eq:post_var_breakdown}
\widetilde{\mbox{var}}\{f(x) \} = \widetilde{\mbox{var}}[\widetilde{E}\{f(x) \mid z^{(n)}\} ] + \widetilde{E}[\widetilde{\mbox{var}}\{f(x) \mid z^{(n)}\} ].
\end{equation}

We start with the first term on the right hand side of \eqref{eq:post_var_breakdown}. Observe that $z_{1},\ldots,z_{n}$ are independent under $\widetilde{\Pi}$ and $\widetilde{E}(\pi_i) = 1/n$ for $i=1,\ldots,n$. Thus, we have
\begin{align*}
\widetilde{\mbox{var}}[\widetilde{E}\{f(x) \mid z^{(n)}\} ] 
& = \widetilde{\mbox{var}}\left(\frac{1}{n} \sum_{i=1}^{n} z_{i} \right)\\
& = \frac{1}{n^2} \sum_{i=1}^{n} \widetilde{\mbox{var}}(z_{i} )\\
& \leq \frac{1}{n^2} \sum_{i=1}^{n} \widetilde{E}(z_{i}^2 )\\
& = \frac{1}{n^2} \sum_{i=1}^{n} R_n |B_i|^{-1/2} t_{\gamma_n - p + 2}(x ; \mu_i, B_i), 
\end{align*} 
since for $i=1,\ldots,n$, we have
\begin{equation}\label{eq:z_sq}
\widetilde{E}(z_{i}^2 ) = R_n |B_i|^{-1/2} t_{\gamma_n - p + 2}(x ; \mu_i, B_i),
\end{equation}
where $$R_{n} = \dfrac{\Gamma\{(\gamma_n - p +2)/2 \}}{\Gamma\{(\gamma_n - p +1)/2 \}} \left[\dfrac{\nu_n+2}{4 \pi \nu_n (\gamma_n - p + 2)}\right]^{p/2}, \quad B_i = D_n \Lambda_i,$$ and $D_{n} = \{2(\gamma_n - p + 2) (\nu_n + 1)\}^{-1} (\gamma_n - p + 1) (\nu_n + 2)$. To obtain \eqref{eq:z_sq}, we integrate over the pseudo-posterior distribution of $(\eta_{i}, \Sigma_i)_{i=1}^{n}$, namely $\mbox{NIW}(\mu_{i}, \nu_n, \gamma_n, \Psi_i)$. 
For $i=1,\ldots,n$, since $|\Lambda_{i}| \geq |H_n|$, we have $|B_i| \geq D_n^p |H_n|$. Letting $\widehat{f}_{var}(x) = (1/n) \sum_{i=1}^{n} t_{\gamma_n - p + 2}(x; \mu_i, B_i)$, we have
\begin{equation}\label{eq:post_var_part1}
\widetilde{\mbox{var}}[\widetilde{E}\{f(x) \mid z^{(n)}\} ] \leq \frac{R_n D_n^{-p/2} \widehat{f}_{var}(x)}{n|H_n|^{1/2}}.
\end{equation}

We now analyze the second term on the right hand side of \eqref{eq:post_var_breakdown}. Recall that $\pi^{(n)}$ is independent of $z^{(n)}$ under $\widetilde{\Pi}$. Let $\Sigma_{\pi}$ denote the pseudo-posterior covariance matrix of $\pi^{(n)}$. Standard results yield $\Sigma_{\pi} = V_{n} \{(1-C_{n})\mathbbm{I}_{n} + C_{n} \mathbbm{1}_n \mathbbm{1}_n^{\T}\}$, where $V_{n} = (n-1)/[n^2 \{n(\alpha+1) + 1\}],$ and $C_{n} = -1/(n-1)$. Then, we have
\begin{equation}\label{eq:post_var_part2_lemma1}
\widetilde{E}[\widetilde{\mbox{var}}\{f(x) \mid z^{(n)}\} ] = \widetilde{E}[z^{(n)\T}\, \Sigma_{\pi}\, z^{(n)} ].
\end{equation}
Using the expression for $\Sigma_{\pi}$ along with \eqref{eq:post_var_part2_lemma1}, we obtain,
\begin{equation}\label{eq:post_var_part2_lemma2}
\widetilde{E}[\widetilde{\mbox{var}}\{f(x) \mid z^{(n)}\} ] = \frac{1}{n(\alpha+1)+1} \widetilde{E}\left\{\frac{1}{n} \sum_{i=1}^{n} (z_i - \bar{z})^2 \right\},
\end{equation}
where $\bar{z} = (1/n) \sum_{i=1}^{n} z_{i}$. We now have
\begin{align*}
\widetilde{E}[\widetilde{\mbox{var}}\{f(x) \mid z^{(n)}\} ] 
& = \frac{1}{n\{n(\alpha+1) + 1\}} \left\{\sum_{i=1}^{n} \widetilde{E}(z_{i}^2) - n \widetilde{E}(\bar{z}^2 ) \right\}\\
& \leq \frac{1}{n\{n(\alpha+1) + 1\}} \sum_{i=1}^{n} \widetilde{E}(z_{i}^2 )\\
& = \frac{1}{n\{n(\alpha+1) + 1\}} \sum_{i=1}^{n} R_n |B_i|^{-1/2} t_{\gamma_n - p + 2}(x ; \mu_i, B_i),
\end{align*} 
using \eqref{eq:z_sq}. Using $|B_i| \geq D_n^p |H|$ for $i=1,\ldots,n$ as before, we have
\begin{equation}\label{eq:post_var_part2}
\widetilde{E}[\widetilde{\mbox{var}}\{f(x) \mid z^{(n)}\} ] 
\leq \frac{R_n D_n^{-p/2} \widehat{f}_{var}(x)}{\{n(\alpha+1)+1\}|H_n|^{1/2}}.
\end{equation}
Combining \eqref{eq:post_var_part1} and \eqref{eq:post_var_part2} and putting the results back in  \eqref{eq:post_var_breakdown}, we have the desired result. If we let $n \to \infty$, we immediately obtain that $\widetilde{\mbox{var}}\{f(x) \} \to 0$ in $P_{f_0}$-probability.
\end{proof}

\section{Proof of Theorem \ref{theorem:asymptotic-normality}}
\label{app:surrogate-AN}
\begin{proof}
We have iid data $\mathcal{X}^{(n)} = (X_1, \ldots, X_n)$ such that $X_1, \ldots, X_n \overset{iid}{\sim} f_0$, with $f_0$ satisfying Assumptions \ref{assump:a1}-\ref{assump:a3} for $p=1$. Given the NN-DM estimator $f(x) = \sum_{i=1}^{n} \pi_i \phi(x; \eta_i, \sigma_i^2)$, we define the simplified NN-DM density estimator to be $$g(x) = \frac{1}{n}\sum_{i=1}^n \phi(x; \eta_i, \sigma^2_i),$$ 
The simplified estimator $g(x)$ can be interpreted as a version of $f(x)$ with the Dirichlet weights being replaced by their pseudo-posterior mean. That is, $g(x) = \widetilde{E}\{f(x) \mid (\eta_1, \sigma_1^2), \ldots, (\eta_n, \sigma_n^2)\}$. The pseudo-posterior distribution of $g(x)$ is induced through the pseudo-posterior distributions of $\{(\eta_i, \sigma_i^2)\}_{i=1}^{n}$. The pseudo-posterior mean is of the form $$\hat{f}_n(x) = \dfrac{1}{n} \sum_{i=1}^n \dfrac{1}{\lambda_i} t_{\gamma_n}\left(\dfrac{x - \mu_i}{\lambda_i}\right),$$ where $\lambda_{i} = \{(\nu_n+1)/\nu_n\}^{1/2} \delta_i$. Let $h_n = (\nu_n\gamma_n)^{-1/2}(\nu_n + 1)^{1/2}(\gamma_0\delta_0^2)^{1/2}$. Then
\begin{equation}
\label{eq:slutsky-PPM}
    {(nh_n)}^{1/2} \, E_{P_{f_0}}|\hat{f}_n(x) - f_K(x)| \to 0,
\end{equation}
for $k_n = o(n^{2/7})$ and $k_n \to \infty$ as $n \to \infty$ from Section \ref{app:multproof} of the Appendix, where $$f_K(x) = \frac{1}{nh_n} \sum_{i=1}^{n} t_{\gamma_n}\left(\frac{x - X_i}{h_n}\right).$$
We want to investigate the asymptotic distribution of $f(x)$ as $n \to \infty$. For that, we first investigate the asymptotic distribution of the simplified NN-DM estimator $g(x)$, and then show that $f(x)$ and $g(x)$ are asymptotically close in $P_{f_0}$-probability.

To derive the asymptotic distribution of $g(x)$, we begin with the asymptotic distribution of $f_{K}(x)$, which can be expressed as $f_{K}(x) = n^{-1}\sum_{i=1}^n u_{in}$, where $u_{in} = h_n^{-1}t_{\gamma_n}\{(x - X_i)/h_n\}$. Using Lyapunov's central limit theorem and denoting convergence in distribution under $f_0$ by $d_0$, we have $$\frac{f_{K}(x) - E_{P_{f_0}}\{f_{K}(x)\}}{[\text{var}_{P_{f_0}}\{f_K(x)\}]^{1/2}} \overset{d_0}{\to} \Gauss(0,1)$$ if
\begin{equation}
    \dfrac{(\sum_{i=1}^n \rho_{in})^{1/r}}{(\sum_{i=1}^n \tau^2_{in})^{1/2}} \to 0, \text{ as } n \to \infty,
\end{equation}
for some $r>2$, where $\rho_{in} = E|u_{in} - E(u_{in})|^r$ and $\tau^2_{in} = E\{u_{in} - E(u_{in})\}^2$ for $i=1,\ldots,n$. By standard calculations, we have $$\tau^2_{in} = \frac{f_0(x)}{h_n} \int t_{\gamma_n}^2(u)du + o\left(\frac{1}{h_n}\right).$$ For $r=3$, $$\rho_{in} \leq \frac{8 f_0(x)}{h_n^2} \int t_{\gamma_n}^3(u) du + o\left(\frac{1}{h_n^2}\right).$$ It is straightforward to see that $\int t_{\gamma_n}^r(u)du / \int t_{\gamma_n}(u) du = \mathcal{O}(1)$ for any $r\geq 1$. So,  Lyapunov's condition is satisfied as the ratio in this case satisfies $\mathcal{O}\{(nh_n)^{-1/6}\}$ and $nh_n \to \infty$. Additionally, $|\tau^2_{in} - \{f_0(x)/h_n\} \int \phi^2(u) \, du| \to 0$. So by a combination of Lyapunov's central limit theorem and Slutsky's theorem, we have 
\begin{equation}
    (nh_n)^{1/2}\left[f_K(x) - E_{P_{f_0}}\{f_K(x)\}\right] \overset{d_0}{\to} \Gauss\left(0, \frac{f_0(x)}{2\pi^{1/2}}\right),
\end{equation}
since $\int \phi^2(u) \, du = (2 \pi^{1/2})^{-1}$. From the calculations in Section \ref{app:kdeconsistency} of the Appendix, we can expand the Taylor series to two more terms to obtain $$E_{P_{f_0}}\left\{f_K(x) - f_0(x) - \frac{h_n^2 f_0^{''}(x)}{2}\right\} = \mathcal{O}(h_n^{4}),$$ since $|f_0^{(4)}(x)| \leq C_0$ for all $x \in [0,1]$. Thus,
\begin{equation}
    \label{eq:slutsky-kde}
    (nh_n)^{1/2}\left[f_K(x) - \left\{ f_0(x) + \frac{h_n^2 f_0^{(2)}(x)}{2}\right\}\right] \overset{d_0}{\to} \Gauss\left(0, \frac{f_0(x)}{2\pi^{1/2}}\right),
\end{equation}
provided $n^{-2/9}k_n \to \infty$ as $n \to \infty$, implying $(nh_n)^{1/2} h_n^4 \to 0$.

We now argue that $(nh_n)^{1/2}|g(x) - \hat{f}_n(x)| \to 0$ in $P_{f_0}$-probability. For this, we first look at 
\begin{align*}
    E_{P_{f_0}}\left[nh_n \{g(x) - \hat{f}_n(x)\}^2\right] & = nh_n \, E_{P_{f_0}}\left[\widetilde{E}\left\{(g(x) - \hat{f}_n(x))^2 \right\} \right]\\
    & = nh_n \, E_{P_{f_0}}\left[\widetilde{\text{var}}\{g(x) \}\right],
\end{align*}
since $\widetilde{E}\{g(x) \} = \hat{f}_n(x)$. The pseudo-posterior variance of $g(x)$ is given by $$\widetilde{\text{var}}\{g(x) \} = \frac{1}{n^2} \sum_{i=1}^{n} \widetilde{\text{var}}\{Z_i(x) \},$$
where $Z_i(x) = \phi(x; \eta_i, \sigma_i^2)$ for $i=1,\ldots,n$. It is straightforward to show that
\begin{equation}
\label{eq:varZ}
    \widetilde{\text{var}}\{Z_i(x) \} \sim |\Delta_n| \frac{1}{\widetilde{\lambda}_i^2} \, t_{2\gamma_n + 1} \left(\frac{x - \mu_i}{\widetilde{\lambda}_i}\right),
\end{equation}
as $n \to \infty$, where $\widetilde{\lambda}_i^2 = \lambda_i^2 / 2$ for $i=1,\ldots,n$ and $$\Delta_n = \frac{u_{\gamma_n}^2}{u_{2\gamma_n+1}} - \frac{1}{(2\pi)^{1/2}},$$ with $u_d = \Gamma\{(d+1)/2\}/\{(d\pi)^{1/2} \Gamma(d/2)\}$ being the normalizing constant of the Student's t-density with degrees of freedom $d > 0$. Using Stirling's approximation, $\Delta_n \to 0$ as $n \to \infty$. This immediately implies
$$nh_n \, \widetilde{\text{var}}\{g(x) \} \leq |\Delta_n| \, v_{g}(x),$$
where 
$$v_g(x) = \frac{1}{n} \sum_{i=1}^{n} \frac{1}{\widetilde{\lambda}_i} t_{2\gamma_n+1}\left(\frac{x - \mu_i}{\widetilde{\lambda}_i}\right).$$
Using the techniques of Section \ref{app:multproof} of the Appendix, it can be shown that $E_{P_{f_0}}\{v_g(x)\} \to f_0(x)$ as $n \to \infty$. Therefore, we have
\begin{align*}
    E_{P_{f_0}} \left[nh_n \{g(x) - \hat{f}_n(x)\}^2\right] & = nh_n \, E_{P_{f_0}} \left[\widetilde{E}\left\{(g(x) - \hat{f}_n(x))^2 \right\}\right]\\
    & = nh_n \,  E_{P_{f_0}}\left[\widetilde{\text{var}}\left\{g(x) \right\}\right]\\
    & \leq E_{P_{f_0}}\{|\Delta_n| v_g(x)\}\\
    & \to 0,
\end{align*}
as $n \to \infty$. A simple application of Chebychev's inequality implies  $(nh_n)^{1/2} |g(x) - \hat{f}_n(x)| \to 0$ in $P_{f_0}$-probability as $n \to \infty$. Combining this with \eqref{eq:slutsky-PPM} and \eqref{eq:slutsky-kde} and using Slutsky's theorem, we obtain the desired result for $g(x)$.

We now demonstrate that $f(x)$ and $g(x)$ are asymptotically close to derive the same result for $f(x)$. 
We start out with
\begin{equation}
\label{eq:fulluq-1}
    \begin{aligned}
\mbox{var}_{P_{f_0}}\left[(nh_n)^{1/2} \{f(x) - g(x)\} \right] & = nh_n \, E_{P_{f_0}} \left[\widetilde{\mbox{var}} \left\{f(x) - g(x) \right\}\right]\\
    & = nh_n \, E_{P_{f_0}} \left[\widetilde{\mbox{var}}\left\{\sum_{i=1}^{n} \left(\pi_i - \dfrac{1}{n}\right) Z_i(x) \right\} \right]
    \end{aligned}
\end{equation}
We now focus on the term inside $E_{P_{f_0}}$ in \eqref{eq:fulluq-1} and further decompose it as
\begin{equation}
\label{eq:fulluq-2}
\begin{aligned}
    \widetilde{\mbox{var}}\left\{\sum_{i=1}^{n} \left(\pi_i - \dfrac{1}{n}\right) Z_i(x) \right\} & = \widetilde{E}\left[\widetilde{\mbox{var}}\left\{\sum_{i=1}^{n} \left(\pi_i - \dfrac{1}{n}\right) Z_i(x) \mid \pi^{(n)} \right\} \right]\\
    & + \widetilde{\mbox{var}}\left[\widetilde{\mbox{E}}\left\{\sum_{i=1}^{n} \left(\pi_i - \dfrac{1}{n}\right) Z_i(x) \mid \pi^{(n)}\right\} \right],
    \end{aligned}
\end{equation}
where $\pi^{(n)} = (\pi_1, \ldots, \pi_n)^{\T}$. In \eqref{eq:fulluq-2},
let $\Xi_{1n}$ and $\Xi_{2n}$ be as follows:
$$\Xi_{1n} = \widetilde{E}\left[\widetilde{\mbox{var}}\left\{\sum_{i=1}^{n} \left(\pi_i - \dfrac{1}{n}\right) Z_i(x) \mid \pi^{(n)} \right\} \right],$$
$$\Xi_{2n} = \widetilde{\mbox{var}}\left[\widetilde{\mbox{E}}\left\{\sum_{i=1}^{n} \left(\pi_i - \dfrac{1}{n}\right) Z_i(x) \mid \pi^{(n)}\right\} \right].$$
Thus, we can write \eqref{eq:fulluq-1} as
\begin{equation}
    \label{eq:fulluq-breakup}
\mbox{var}_{P_{f_0}}\left[(nh_n)^{1/2} \{f(x) - g(x)\} \right] = nh_n E_{P_{f_0}}(\Xi_{1n}) + nh_n E_{P_{f_0}}(\Xi_{2n}).
\end{equation}
It is straightforward to see that $\Xi_{1n} = \sum_{i=1}^{n} \widetilde{\mbox{var}}(\pi_i ) \, \widetilde{\mbox{var}}\{Z_i(x)\}$. As $n \to \infty$, we use the fact that $\pi_i \sim \mbox{Beta}(\alpha_n + 1, (n-1)(\alpha_n + 1))$ under $\widetilde{\Pi}$ and \eqref{eq:varZ}, to get 
\begin{equation}
    \label{eq:fulluq-breakup-part1}
    nh_n E_{P_{f_0}}(\Xi_{1n}) \sim \dfrac{n \, \Delta_n}{n(\alpha_n+1)+1} \widetilde{\Xi}_{1n},
\end{equation}
for some $\widetilde{\Xi}_{1n}$ satisfying $\widetilde{\Xi}_{1n} \to f_0(x)$ and $\Delta_n \to 0$ as $n \to \infty$. Therefore, $nh_n E_{P_{f_0}}(\Xi_{1n}) \to 0$ as $n \to \infty$. For the second part, let $d_i(x) = (1/\lambda_i) \, t_{\gamma_n}\{(x-\mu_i) / \lambda_i\}$ so that the pseudo-posterior mean  $\hat{f}_n(x) = (1/n) \sum_{i=1}^{n} d_i(x)$. We first observe that
\begin{align*}
    \Xi_{2n} & = \widetilde{\mbox{var}}\left[\sum_{i=1}^{n} \left(\pi_i - \dfrac{1}{n}\right) \dfrac{1}{\lambda_i} t_{\gamma_n}\left(\dfrac{x-\mu_i}{\lambda_i}\right) \right]\\
    & = \widetilde{\mbox{var}}\left[\sum_{i=1}^{n} \pi_i  d_i(x) \right]\\
    & = \dfrac{1}{n(\alpha_n+1) + 1} \left[\dfrac{1}{n}\sum_{i=1}^{n}\{d_i(x) - \hat{f}_n(x)\}^2\right]\\
    & \leq \dfrac{1}{n(\alpha_n+1) + 1} \left[\dfrac{1}{n} \sum_{i=1}^{n} d_i^2(x)\right].
\end{align*}
It now follows from some algebra that
$$\dfrac{1}{n} \sum_{i=1}^{n} d_i^2(x) \leq \dfrac{d_0(n) \gamma_n}{2\gamma_n + 1} \left(\dfrac{2\gamma_n+1}{\gamma_n}\right)^{1/2} \dfrac{1}{h_n} \left[\dfrac{1}{n} \sum_{i=1}^{n} \dfrac{1}{\tilde{\lambda}_i}t_{2\gamma_n+1}\left(\dfrac{x-\mu_i}{\tilde{\lambda}_i}\right)\right],$$
where $d_0(n) \to (2\pi)^{-1/2}$ as $n \to \infty$. Therefore, we have
\begin{equation}
    \label{eq:fulluq-breakup-part2}
    nh_n E_{P_{f_0}}(\Xi_{2n}) \leq \mathcal{O}\left\{ \dfrac{1}{2 \pi^{1/2}} \dfrac{n}{n(\alpha_n+1)+1} \widetilde{\Xi}_{2n}\right\},
\end{equation}
for some $\widetilde{\Xi}_{2n}$ satisfying $\widetilde{\Xi}_{2n} \to f_0(x)$ as $n \to \infty$. By the conditions of the theorem, we have $nh_n E_{P_{f_0}}(\Xi_{2n}) \to 0$ as $n \to \infty$. This, along with \eqref{eq:fulluq-breakup-part1} substituted in \eqref{eq:fulluq-breakup} provides $(nh_n)^{1/2} \, |f(x) - g(x)| \to 0$ in $P_{f_0}$-probability. This implies the desired result for $f(x)$ using Slutsky's theorem. As a result, we can interpret pseudo-credible intervals to be  frequentist confidence intervals, on average, asymptotically.
\end{proof}
\section{Proof of Theorem \ref{theorem:alpha-plus-one}}
\label{app:alpha-plus-one-proof}
\subsection{A property of the $k_n$-nearest neighbor distance}
Suppose $X_1, \ldots, X_n \overset{iid}{\sim} f_0$ with $f_0$ a density on $\mathbb{R}^p$ satisfying Assumptions \ref{assump:a1}-\ref{assump:a3}. We denote the induced probability measure $P_{f_0}$ by $P_0$ in this section for the sake of convenience. We define the smoothed $k$-nearest neighborhood of $X_i$ as $\mathcal{B}_i = \{y \in \mathbb{R}^p : || X_i - y ||_2 \leq R_i \}$, where $R_i = || X_i - X_{i[k_n]} ||_2$ is the Euclidean distance between $X_i$ and the $k_n$-nearest neighbor of $X_i$ for $i = 1, \ldots, n$. By symmetry, $R_1, \ldots, R_n$ are identically distributed. Suppose $r_n = (k_n/n)^{1/p}$ and define the quasi-neighborhood $\tilde{\mathcal{B}}_i(r) = \{y \in \mathbb{R}^p : || X_i - y ||_2 \leq r \}$, where the random variables $R_i$ have been replaced by $r \geq 0$. Let $$\omega_x(r) = \underset{\{y \, : \, ||y - x||_2 \, \leq \, r\}}{\int} f_0(y) \, dy.$$ The positive density condition on $f_0$ obtained from Assumptions \ref{assump:a1} and \ref{assump:a3} \citep{evans2002asymptotic, evans2008law} ensures the existence of $A > 1$ and $\rho > 0$ such that for all $0 \leq r \leq \rho$ and for all $x \in [0,1]^p$,
\begin{equation}
    \label{eq:positive-density-condition}
    \dfrac{r^p}{A} \leq \omega_x(r) \leq A r^p.
\end{equation}
We first state a Lemma proving some important properties of $R_1$. We next use this Lemma to prove Theorem \ref{theorem:alpha-plus-one}. Recall that two non-negative sequences $(a_n)$ and $(b_n)$ are said to be asymptotically equivalent if $|a_n / b_n| \to c_0$ for some $c_0 > 0$, denoted by $a_n \sim b_n$.

\begin{lemma}
\label{lemma:radius-asymptotic}
Define $i_0 = \{2/(p^2 + p + 2)\} \wedge \{4/(p+2)^2\}$ as in Theorem \ref{theorem:post_mean}. Assume $k_n \sim n^{i_0 - \epsilon}$ for some $\epsilon \in (0, i_0)$. Suppose $\delta > 0$ satisfies $$\delta < \left(1 - i_0 + \epsilon\right)^{-1} - 1.$$ Define $$r_n = \left(\frac{k_n}{n}\right)^{1/p}, \quad c_n = \frac{1}{(Ae)^{1/p}} \, r_n^{1+\delta}, \quad \textrm{{\em and}} \quad n_0 = \left\lfloor \left\{\frac{1 - \frac{i_0 - \epsilon}{2}}{\delta(1 - i_0 + \epsilon)} + 1\right\}^{\frac{1}{i_0 - \epsilon}} \right\rfloor + 1.$$
Then, the following results hold:
\begin{enumerate}
    \item[(i)] $p_n = P_0(R_1 \leq c_n) = \mathcal{O}\left[k_n^{-1/2}\left(\dfrac{k_n}{n}\right)^{(k_n-1)\delta}\right]$ as $n \to \infty$. 
    \item[(ii)] $\displaystyle \sum_{n=n_0+1}^{\infty} p_n < \infty$. That is, $\{p_n\}_{n=1}^{\infty}$ is summable.
    \item[(iii)] $P\left[ \underset{n \to \infty}{ \mbox{\em limsup}} \, \{R_1 \leq c_n\} \right] = 0$.
    \item[(iv)] $n c_n^p \to \infty$ as $n \to \infty$.
\end{enumerate}
\end{lemma}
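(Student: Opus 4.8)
\section*{Proof proposal for the Lemma}

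The plan is to treat part (i) as the analytic core and then obtain (ii)--(iv) as comparatively routine consequences. For (i) I would condition on $X_1 = x \in [0,1]^p$. Given $X_1 = x$, the remaining points $X_2,\ldots,X_n$ are iid and each lands in $\tilde{\mathcal{B}}_1(c_n)$ independently with probability $\omega_x(c_n)$. The event $\{R_1 \le c_n\}$ is exactly the event that at least $k_n$ of these $n-1$ points fall inside $\tilde{\mathcal{B}}_1(c_n)$, i.e. that a $\text{Binomial}(n-1,\omega_x(c_n))$ count is at least $k_n$ (equivalently, via the order-statistic/Beta representation of $\omega_x(R_1)$ recalled in Section~\ref{app:lemma_setup}). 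A union bound over the $\binom{n-1}{k_n}$ choices of which $k_n$ points lie in the ball then gives, uniformly in $x$,
$$P_0(R_1 \le c_n \mid X_1 = x) \le \binom{n-1}{k_n}\,\omega_x(c_n)^{k_n}.$$

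Next I would insert the density bound. Since $c_n \to 0$, for large $n$ we have $c_n \le \rho$, so the upper inequality in \eqref{eq:positive-density-condition} yields $\omega_x(c_n) \le A c_n^p$, uniformly in $x$. By the definition of $c_n$, $A c_n^p = e^{-1}(k_n/n)^{1+\delta}$, so integrating over $x$ removes the conditioning and
$$p_n \le \binom{n-1}{k_n}\Big\{e^{-1}(k_n/n)^{1+\delta}\Big\}^{k_n}.$$
Because $k_n \sim n^{i_0-\epsilon}$ with $i_0 < 1/2$, we have $k_n^2 = o(n)$, so Stirling's approximation gives $\binom{n-1}{k_n} \sim (2\pi k_n)^{-1/2}(ne/k_n)^{k_n}$. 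Substituting and cancelling the $(n/k_n)^{k_n}$ and $e^{-k_n}$ factors collapses the bound to $p_n = \mathcal{O}\{k_n^{-1/2}(k_n/n)^{\delta k_n}\}$; since $(k_n/n)^{\delta}\le 1$ this is dominated by $\mathcal{O}\{k_n^{-1/2}(k_n/n)^{(k_n-1)\delta}\}$, which is exactly the claim in (i).

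For (ii) I would observe that the bound in (i) decays super-polynomially: writing $\beta = i_0-\epsilon \in (0,1)$, the logarithm of $(k_n/n)^{(k_n-1)\delta}$ is of order $-(1-\beta)\delta\,n^{\beta}\log n$, so beyond the explicit threshold $n_0$ the inequality $(k_n-1)\delta\,\tfrac{1-\beta}{2}\log n \ge 2\log n$ holds and forces $p_n \le n^{-2}$; the stated form of $n_0$ simply records when the requisite inequality on $k_n$ (hence on $n^{\beta}$) first holds, and summability of $\sum_{n>n_0}p_n$ follows by comparison with $\sum n^{-2}$. Part (iii) is then immediate from the first Borel--Cantelli lemma, applied on the common probability space of an infinite iid sequence: $\sum_n p_n < \infty$ (the finitely many terms $n \le n_0$ being irrelevant) gives $P(\limsup_n\{R_1 \le c_n\}) = 0$. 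Finally, for (iv) a direct computation gives $n c_n^p = (Ae)^{-1} n^{-\delta}k_n^{1+\delta}$, which is of order $n^{\beta-\delta(1-\beta)}$ and diverges precisely because the hypothesis $\delta < (1-i_0+\epsilon)^{-1}-1 = \beta/(1-\beta)$ guarantees $\beta-\delta(1-\beta) > 0$; this is exactly why the admissible range of $\delta$ is stated as it is.

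The main obstacle is the constant bookkeeping in (i): obtaining the sharp $k_n^{-1/2}$ prefactor requires the two-sided Stirling estimate for $\binom{n-1}{k_n}$ (valid here only because $i_0<1/2$ forces $k_n^2 = o(n)$), and one must check that all the exponential-in-$n$ and exponential-in-$k_n$ factors cancel exactly so that only the harmless $(k_n/n)^{\delta k_n}$ term survives. Once (i) holds with the correct order, parts (ii)--(iv) are essentially calculus together with Borel--Cantelli.
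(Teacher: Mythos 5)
Your proposal is correct and follows essentially the same route as the paper: the paper bounds $P_0(R_1 \leq c_n \mid X_1 = x)$ via the order-statistic Beta integral (Lemma 4.1 of Evans et al.) by $\binom{n-1}{k_n-1}\,\omega_x(c_n)^{k_n-1}$, then applies the positive density condition \eqref{eq:positive-density-condition} and Stirling exactly as you do, with (ii)--(iv) handled by the same super-polynomial decay estimate, Borel--Cantelli, and the direct computation showing $\delta < (1-i_0+\epsilon)^{-1}-1$ forces $nc_n^p \to \infty$. Your binomial-tail union bound $\binom{n-1}{k_n}\,\omega_x(c_n)^{k_n}$ is a cosmetic variant yielding the slightly stronger exponent $\delta k_n$, which you correctly note is dominated by the claimed $\delta(k_n-1)$ bound.
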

\begin{proof}
\begin{enumerate}
    \item[(i)] Note that $c_n \leq \rho$ for sufficiently large $n$. From Lemma 4.1 of \cite{evans2002asymptotic} we have 
    \begin{align*}
        P_0(R_1 \leq c_n \mid X_i = x) & = \overset{\omega_x(c_n)}{\underset{0}{\int}} (k_n-1) \binom{n-1}{k_n-1} y^{k_n-2} (1 - y)^{n-k_n} dy\\
        & \leq \binom{n-1}{k_n-1} \omega_x(c_n)^{k_n-1}\\
        & \sim k_n^{-1/2}\left(\frac{k_n}{n}\right)^{(k_n-1)\delta},
    \end{align*}
    for any $x \in [0,1]^p$, using \eqref{eq:positive-density-condition}. This immediately implies $$P_0(R_1 \leq c_n) = \int_{[0,1]^p} P_0(R_i \leq c_n \mid X_i = x) \, f_0(x) \, dx \leq \mathcal{O}\left[k_n^{-1/2}\left(\frac{k_n}{n}\right)^{(k_n-1)\delta}\right].$$
    \item[(ii)] For $n > n_0$, we have $p_n = \mathcal{O}\{n^{-(1+\Theta_n)}\}$ for a sequence $\Theta_n \to \infty$, $\Theta_n > 0$. This ensures that $\sum_{n=n_0+1}^{\infty} p_n < \infty$.
    \item[(iii)] Since $\sum_{n=1}^{\infty} p_n < \infty$, a direct application of the first Borel-Cantelli lemma proves the statement.
    \item[(iv)] We have, using the condition on $\delta$, \begin{align*}
        n c_n^p & = \frac{n}{Ae} \left(\frac{k_n}{n}\right)^{1+\delta}\\
        & = \frac{1}{Ae} \frac{k_n^{1+\delta}}{n^{\delta}}\\
        & \to \infty, \quad \text{as $n \to \infty$.}
    \end{align*}
\end{enumerate}
\end{proof}
We now use the above Lemma to prove Theorem \ref{theorem:alpha-plus-one}. The key idea is to leverage the fact that  $R_i > c_n$ for all $i=1,\ldots,n$ with probability $1$ for all but finite $n$.

\subsection{Number of effective member points in each neighborhood}
We now prove Theorem \ref{theorem:alpha-plus-one}.
\begin{proof}
Using (iii) from Lemma \ref{lemma:radius-asymptotic}, for $i=1,\ldots, n$, we have an integer $\widetilde{N}_i$ such that for all $n \geq \widetilde{N}_i$, $P_0(R_i > c_n) = 1$. However, since $R_1, \ldots, R_n$ are identically distributed, $\widetilde{N}_1 = \ldots = \widetilde{N}_n = \widetilde{N}$, say. Thus, for all $i=1,\ldots,n$, we have $P_0(R_i > c_n) = 1$ for all $n \geq \widetilde{N}$. This immediately implies that $P_0\left[\bigcap_{i=1}^{n}\{R_i > c_n\}\right] = 1 - P_0\left[\bigcup_{i=1}^{n}\{R_i \leq c_n\}\right] \geq 1 - \sum_{i=1}^{n} P_0[R_i \leq c_n] = 1$ for all $n \geq \widetilde{N}$, which shows $P_0\left[\bigcap_{i=1}^{n}\{R_i > c_n\}\right] = 1$. Therefore, we have 
\begin{align*}
    n Q_n & = n P_0\left[X_2 \in \mathcal{B}_1, \bigcap_{i=3}^{n} \{X_2 \notin \mathcal{B}_i\}, \bigcap_{i=1}^{n}\{R_i > c_n\}\right]\\
    & \leq n P_0\left[ \bigcap_{i=3}^{n} \{X_2 \notin \tilde{\mathcal{B}}_i(c_n)\}\right]\\
    & = n P_0\left[ \bigcap_{i=3}^{n} \{|| X_2 - X_i || > c_n\}\right]\\
    & = n \int \, \theta_n^{n-2}(x) \, f_0(x) \, dx,
\end{align*}
where $\theta_n(x) = 1 - \omega_x(c_n)$, since $X_1, \ldots, X_n \overset{iid}{\sim} f_0$. Using \eqref{eq:positive-density-condition}, we have $\theta_n(x) \leq 1 - (c_n^p/A)$ for all $x \in [0,1]^p$. Given the conditions on $k$, it follows that as $n \to \infty$, $$\log n + n \log\left(1 - \frac{c_n^p}{A}\right) \sim  \log n - \frac{n^{\xi}}{A^2 e} \to -\infty,$$ for all $x \in [0,1]^p$, where $\xi = 1 - (1 + \epsilon - i_0)(1 + \delta) > 0$. Therefore, we have
\begin{align*}
    nQ_n & = n \int  \theta_n^{n-2}(x) f_0(x) \, dx\\
    & \leq n  \left[1 - \frac{c_n^p}{A}\right]^{n-2} \int f_0(x) \, dx\\
    & = \mathcal{O}\left[\exp\left\{\log n + n \log\left(1 - \frac{c_n^p}{A}\right)\right\}\right]\\
    & \to 0,
\end{align*}
as $n \to \infty$. This proves the result.\end{proof}

\section{Proof of Consistency of $f_K(x)$}\label{app:kdeconsistency}
Define the standard multivariate t-density with $d > 0$ degrees of freedom to be $g_d(x) = t_{d}(x; 0_p, \mathbbm{I}_p)$. Since $H = H_n = h_n^2  \,\mathbb{I}_p$ as defined in Section \ref{sec:mean_and_variance} is diagonal, it immediately follows that  $t_{\gamma_n-p+1}(x;\mu, H) = h_n^{-p} \,g_{\gamma_n-p+1}\{h_n^{-1} (x-\mu)\}$. The following lemma proves the consistency of any such generic kernel density estimator with t kernel depending on $n$, say $$f_K(x) = \frac{1}{nw^p} \sum_{i=1}^{n} g_{\gamma_n-p+1}\left(\frac{x-X_{i}}{w}\right),$$ where the bandwidth $w = w_n$ satisfies $w_n \to 0$ and $nw_n^p \to \infty$ as $n \to \infty$, with independent and identically distributed data $X_1, \ldots, X_n \sim f_0$ satisfying Assumptions \ref{assump:a1}-\ref{assump:a3}.
\begin{lemma}
Suppose $w_{n}$ is a sequence satisfying $w_n \xrightarrow{} 0$ and $nw_n^p \xrightarrow{} \infty$ as $n \xrightarrow{} \infty$. Let $f_{K}(x) = (nw_n^p)^{-1} \sum_{i=1}^{n} g_{\gamma_n-p+1}\{w_n^{-1}(x-X_{i})\}$. Then $f_{K}(x) \to f_{0}(x)$ in $P_{f_0}$-probability for each $x \in [0,1]^p$.
\label{lemma:t_kde}
\end{lemma}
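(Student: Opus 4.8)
The plan is to prove $L^2$ convergence of $f_K(x)$ to $f_0(x)$ under $P_{f_0}$, which by Chebyshev's inequality immediately gives convergence in $P_{f_0}$-probability. I would start from the bias--variance decomposition
\[
\E[\{f_K(x) - f_0(x)\}^2] = \mbox{var}\{f_K(x)\} + [\E\{f_K(x)\} - f_0(x)]^2,
\]
and show both terms vanish. The only real departure from a textbook kernel consistency argument is that the kernel $g_{\gamma_n - p + 1}$ depends on $n$ through its degrees of freedom $\gamma_n - p + 1 = \gamma_0 + k_n - p + 1$, which diverge as $k_n \to \infty$; so the crux is to control the relevant kernel integrals uniformly along this sequence.

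For the bias, the change of variables $u = w^{-1}(x - y)$ gives $\E\{f_K(x)\} = \int g_{\gamma_n - p + 1}(u)\, f_0(x - wu)\, du$, and since $g_{\gamma_n - p + 1}$ integrates to one I would bound
\[
|\E\{f_K(x)\} - f_0(x)| \leq \int g_{\gamma_n - p + 1}(u)\, |f_0(x - wu) - f_0(x)|\, du.
\]
Assumption \ref{assump:a2} and the mean value theorem give $|f_0(x - wu) - f_0(x)| \leq L w\, ||u||_2$, so the bias is at most $L w \int g_{\gamma_n - p + 1}(u)\, ||u||_2\, du$. The first absolute moment of a Student's t-density is finite once its degrees of freedom exceed one and converges, as the degrees of freedom diverge, to that of the standard Gaussian; hence this integral is bounded uniformly for large $n$ and the bias is $\mathcal{O}(w) \to 0$.

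For the variance I would use the iid structure to write $\mbox{var}\{f_K(x)\} = n^{-1}\mbox{var}[w^{-p} g_{\gamma_n - p + 1}\{w^{-1}(x - X_1)\}] \leq n^{-1}\E[w^{-2p} g_{\gamma_n - p + 1}^2\{w^{-1}(x - X_1)\}]$, and after the same substitution this equals $(n w^p)^{-1}\int g_{\gamma_n - p + 1}^2(u)\, f_0(x - wu)\, du$. Bounding $f_0 \leq a_2 < \infty$ through the positive density condition implied by Assumptions \ref{assump:a1} and \ref{assump:a3}, and noting that $\int g_{\gamma_n - p + 1}^2(u)\, du$ is finite and converges to the corresponding Gaussian integral as the degrees of freedom diverge, the variance is $\mathcal{O}\{(n w^p)^{-1}\} \to 0$ because $n w^p \to \infty$. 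Combining the two bounds yields $\E[\{f_K(x) - f_0(x)\}^2] \to 0$, which gives the claim.

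The step I expect to be most delicate is the uniform control of the kernel moments $\int g_{\gamma_n - p + 1}(u)\, ||u||_2\, du$ and $\int g_{\gamma_n - p + 1}^2(u)\, du$ along the sequence of growing degrees of freedom; the clean route is convergence of the t-density to the Gaussian, after which these moments are bounded by their Gaussian limits. A secondary subtlety concerns the boundary of $[0,1]^p$: because $f_0$ obeys a positive density condition rather than vanishing at the boundary, the term $f_0(x - wu)$ formally evaluates $f_0$ outside its support, so the mean value theorem step should be understood for interior $x$, or supplemented by an argument that the kernel mass escaping $[0,1]^p$ is asymptotically negligible as $w \to 0$.
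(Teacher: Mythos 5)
Your proposal is correct in substance and follows essentially the same route as the paper's proof of Lemma \ref{lemma:t_kde}: both show that $E_{P_{f_0}}\{f_K(x)\}\to f_0(x)$ and that $\mbox{var}_{P_{f_0}}\{f_K(x)\} = \mathcal{O}\{(nw^p)^{-1}\}$, and conclude convergence in probability. The one genuine difference is in execution: the paper keeps the integrals over the support, writing $E_{P_{f_0}}\{f_K(x)\} = \int_{[-x/w,\,(1-x)/w]^p} g_{\gamma_n-p+1}(u)\, f_0(x+wu)\, du$, and handles the $n$-dependent kernel via weak convergence of the $t$-density to the Gaussian together with Polya's theorem, so that the kernel mass over the expanding rectangle is $1-o_n(1)$; you instead extend $f_0$ by zero to all of $\mathbb{R}^p$ and control the kernel through uniform-in-$n$ bounds on $\int g_{\gamma_n-p+1}(u)\,||u||_2\, du$ and $\int g^2_{\gamma_n-p+1}(u)\, du$, which is an equally valid device. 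One caution: as literally written, your bias bound $|f_0(x-wu)-f_0(x)| \leq Lw\,||u||_2$ fails for those $u$ with $x-wu \notin [0,1]^p$, where $f_0(x-wu)=0$ while $f_0(x)\geq a_1>0$ (take $f_0$ uniform, so $L=0$, for a counterexample); the mean value theorem cannot be applied across the support boundary. The patch you yourself flag is the right one: for interior $x$, split at $||u||_2 \leq d/w$ with $d$ the distance of $x$ to $\partial[0,1]^p$, apply the mean value theorem inside (where the segment stays in $[0,1]^p$), and bound the outer contribution by $2a_2\int_{||u||_2>d/w} g_{\gamma_n-p+1}(u)\,du \leq 2a_2 C w/d \to 0$ using your uniform first-moment bound. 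It is worth noting that the paper's proof has the mirror image of this same issue, since the rectangle $[-x/w,(1-x)/w]^p$ expands to all of $\mathbb{R}^p$ only when $x$ is interior; so both arguments, yours patched and the paper's, really establish the lemma at interior points.
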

\begin{proof}
It is enough to show that $\E \{f_{K}(x)\} \xrightarrow{} f_{0}(x)$ and $\mbox{var}_{P_{f_0}}\{f_{K}(x)\} \xrightarrow{} 0$ as $n \xrightarrow{} \infty$. Let us start first with $\E\{f_{K}(x)\}$. We have
\begin{align*}
    \E\{ f_{K}(x)\} & = \E\left\{\frac{1}{w_n^p} g_{\gamma_{n}-p+1} \left(\frac{x-X_{1}}{w_n}\right)\right\}\\
    & = \int_{[0,1]^p} \dfrac{1}{w_n^p} g_{\gamma_{n}-p+1} \left(\frac{y-x}{w_n}\right) f_{0}(y) \, dy\\
    & = \int_{\left[-\frac{x}{w_n}, \frac{1-x}{w_n}\right]^p} g_{\gamma_n-p+1} (u)\, f_{0}(x + w_n u) \, du ,\\
    & = \int_{\left[-\frac{x}{w_n}, \frac{1-x}{w_n}\right]^p} g_{\gamma_n-p+1} (u)\, \{f_{0}(x) + w_n u^{\T} \nabla f_{0}(\xi)\} \, du \\
    & = f_{0}(x) \int_{\left[-\frac{x}{w_n}, \frac{1-x}{w_n}\right]^p} g_{\gamma_n-p+1}(u) \, du + w_n \int_{\left[-\frac{x}{w_n}, \frac{1-x}{w_n}\right]^p} g_{\gamma_n-p+1}(u) u^{\T} \nabla f_{0}(\xi) \, du,\\
    & = f_0(x)\{1-o_n(1)\} + w_n  \mathcal{O}_n(1),
\end{align*}
using the mean value theorem and Polya's theorem \citep{polya1920zentralen} along with Assumption \ref{assump:a2} to bound $\nabla f_0(\cdot)$. As $n \to \infty$, this implies that $E_{P_{f_0}}\{f_K(x)\} \to f_0(x)$ since $w_n \to 0$ as $n \to \infty$. 

The variance may be dealt with in a similar manner. Following the same steps as before we get
\begin{align*}
    \mbox{var}_{P_{f_0}}\{f_{K}(x)\} & = \frac{1}{n} \, \mbox{var}_{P_{f_0}}\left\{\frac{1}{w_n^p} g_{\gamma_n - p +1}\left(\frac{x-X_{1}}{w_n}\right)\right\} \leq \frac{1}{n} \E\left\{\frac{1}{w_n^{2p}} g^2_{\gamma_n-p+1}\left(\frac{x-X_{1}}{w_n}\right)\right\}\\
    & \leq \frac{1}{nw_n^{2p}} \int_{[0,1]^p} g^2_{\gamma_n-p+1}\left(\frac{y-x}{w_n}\right) f_{0}(y) \, dy\\
    &\leq \frac{1}{nw_n^p} \int_{\left[-\frac{x}{w_n}, \frac{1-x}{w_n}\right]^p} g^2_{\gamma_n-p+1}(u) \{f_{0}(x) + w_n u^{\T} \nabla f_{0}(\xi)\} \, du,\\
 &   \leq \frac{f_0(x) \mathcal{O}_n(1)}{nw_n^p} ,  
\end{align*}
which shows that the variance goes to $0$ as $n \to \infty$, since $nw_n^p \to \infty$ as $n \to \infty$.
\end{proof}
For the nearest neighbor-Dirichlet mixture, recall $f_K(x) = (1/n)\sum_{i=1}^n t_{\gamma_n - p + 1}(x ; X_i, H_n)$ from Section \ref{sec:mean_and_variance} of the main document, where $H_n = h_n^2 \mathbbm{I}_p$ and $h_n^2 = \{\nu_n(\gamma_n - p + 1)\}^{-1} \{(\nu_n+1)(\gamma_0 - p + 1)\} \delta_0^2$. Here, the bandwidth $h_n$ satisfies $h_n \to 0$ and $nh_n^p \to \infty$ as $n \to \infty$. Lemma \ref{lemma:t_kde} then shows that $f_K(x)$ converges to $f_0(x)$ in $P_{f_0}$-probability as $n \to \infty$. 

\section{Cross-validation}\label{app:CVdetails}
\subsection{Algorithm for leave-one-out cross-validation}
Consider independent and identically distributed data $X_1, \ldots, X_n \in \mathbb{R}^p \sim f$ with $f$ having the nearest neighbor-Dirichlet mixture formulation. 
The prior of the neighborhood parameters $(\eta_i, \Sigma_i)$ following Sections \ref{sec:multivariate_gaussian} and \ref{sec:CV} is $(\eta_i, \Sigma_i) \sim \mathrm{NIW}_p(\mu_0, \nu_0, \gamma_0, \Psi_0)$ where $\Psi_0 = (\gamma_*\delta_0^2)\, \mathbbm{I}_{p}$ with $\gamma_{*} = \gamma_0 - p + 1$. We use the pseudo-posterior mean in \eqref{eq:multivariate_posterior_mean} to compute leave-one-out log-likelihoods $\mathbf{L}(\delta_0^2)$ for different choices of the hyperparameter $\delta_0^2$, choosing $\delta_{0, \text{CV}}^2 = \mbox{arg\,sup}_{\delta_0^2} \mathbf{L}(\delta_0^2)$ to maximize this criteria. The details of the computation of $\mathbf{L}(\delta_0^2)$ for a fixed $\delta_0^2$ are provided in Algorithm \ref{algo:CV}.
\begin{algorithm}[h]
\begin{itemize}
    \item Consider data $\mathcal{X}^{(n)} = (X_1, \ldots, X_n)$ where $X_i \in \mathbb{R}^p, \, p \geq 1$. 
    
    Fix the number of neighbors $k$ and other hyperparameters $\mu_0, \nu_0, \gamma_{0}$.
    \item For $i \in \{1,\ldots,n\}$, consider the data set leaving out the $i$th data point, given by $\mathcal{X}^{-i} = (X_1, \ldots, X_{i-1}, X_{i+1}, \ldots, X_n)$. Compute the pseudo-posterior mean density estimate at $X_i$, namely $\hat{f}_{-i}(X_i)$, using $\mathcal{X}^{-i}$ and \eqref{eq:multivariate_posterior_mean}; let $\mathbf{L}_i(\delta_0^2) = \hat{f}_{-i}(X_i)$. Finally, compute the leave-one-out log-likelihood given by
\begin{equation*}
\mathbf{L}(\delta_0^2) = \frac{1}{n} \sum_{i=1}^n \log\{\mathbf{L}_i(\delta_0^2)\}.
\end{equation*}
	\item For $\delta_{0}^2 > 0$, obtain $\delta_{0, \text{CV}}^2 = \underset{\delta_0^2}{\mbox{arg\,sup}} \, \mathbf{L}(\delta_0^2)$.
\end{itemize}
\caption{Leave-one-out cross-validation for choosing the hyperparameter $\delta_0^2$ in nearest neighbor-Dirichlet mixture method.}
\label{algo:CV}
\end{algorithm}

\subsection{Fast Implementation of cross-validation}
In Algorithm \ref{algo:CV}, the nearest neighborhood specification for each $\mathcal{X}^{-i}$ is different for $i=1,\ldots,n$. However, we bypass this computation by initially forming a neighborhood of size $(k+1)$ for each data point using the entire data and storing the respective neighborhood means and covariance matrices. Suppose for $X_i$, the indices of the $(k+1)$-nearest neighbors are given by $\widetilde{\mathcal{N}}_i = \{j \in \{1,\ldots,n\} : ||X_i - X_j||_2 \leq ||X_i - X_{i[k+1]}||_2\}$, arranged in increasing order according to their distance from $X_i$ with $X_{i[1]} = X_i$. Define the neighborhood mean $m_i = \{1/(k+1)\}\sum_{j\in \widetilde{\mathcal{N}}_i} X_{j}$ and the neighborhood covariance matrix $S_i = (k+1)^{-1} \{\sum_{j \in \widetilde{\mathcal{N}}_i} (X_{j} - m_i)(X_{j} - m_i)^{\T}\}$. Then, to form a $k$-nearest neighborhood for the new data $\mathcal{X}^{-i}$, a single pass over the initial neighborhoods $\widetilde{\mathcal{N}}_i$ is sufficient to update the new neighborhood means and covariance matrices. Below, we describe the update for the neighborhood means $m_j^{(-i)}$ and covariance matrices $S_j^{(-i)}$ for $j = 1, \ldots, n$ and $j \neq i$, considering the data $\mathcal{X}^{-i}$. For $j=1,\ldots,n$ and $j \neq i$, we have,
\begin{align}
\nonumber
    m_j^{(-i)} & = 
    \begin{cases}
  (1/k)\{(k+1)m_j - X_{j[k+1]}\} & \quad \mbox{if} \,\,\, i \notin \widetilde{\mathcal{N}}_j,\\
    (1/k)\{(k+1)m_j - X_i\} & \quad \mbox{if} \,\,\, i\in \widetilde{\mathcal{N}}_j.
    \end{cases}\\
   S_j^{(-i)} & = 
   \begin{cases}
   S_j - \{(k+1)/k\} (m_j - X_{j[k+1]})(m_j - X_{j[k+1]})^{\T} & \quad \mbox{if} \,\,\, i \notin \widetilde{\mathcal{N}}_j,\\
  S_j - \{(k+1)/k\} (m_j - X_i)(m_j - X_i)^{\T} & \quad \mbox{if} \,\,\, i \in \widetilde{\mathcal{N}}_j.
   \end{cases}
    \label{eq:nbd_update_mult}
\end{align}

\section{Algorithm with Gaussian Kernels for Univariate Data}\label{app:univariate_gaussian}

For $p=1$, we have a univariate Gaussian density $\phi(x; \eta_{i}, \sigma^2_{i})$ in neighborhood $i$ and normal-inverse gamma priors 
 $(\eta_i,\sigma_i^2) \sim \mbox{NIG}(\mu_0, \nu_0, \gamma_0/2, \gamma_0\delta_0^2/2)$ independently for $i=1,\ldots,n$, with $\mu_0 \in \mathbb{R}$ and $\nu_0,\,\gamma_0, \delta_0^2 > 0$. That is, $$\eta_i \mid \sigma_i^2 \sim \Gauss\left(\mu_0, \dfrac{\sigma_i^{2}}{\nu_0} \right), \quad \sigma_{i}^2 \sim \mbox{IG}\left(\dfrac{\gamma_0}{2}, \dfrac{\gamma_0 \delta_0^2}{2}\right).$$ 
 Monte Carlo samples from the pseudo-posterior of the unknown density $f$ at any point $x$ can be generated following the steps of Algorithm \ref{algo:NNDP_univariate}.

\begin{algorithm}
\begin{itemize}
\item \textbf{Step 1: }Compute the $k$-nearest neighborhood $\mathcal{N}_i$ for data point $X_i$ with $X_{i[1]} = X_i$, using the distance $d(\cdot, \cdot)$.
\item \textbf{Step 2: }Update the parameters for neighborhood $\mathcal{N}_i$ to $(\mu_i, \nu_n, \gamma_n / 2, \gamma_n \delta_i^2 / 2)$ where 
$\nu_n = \nu_0 + k$, $\gamma_n = \gamma_0 + k$, 
$$\mu_i  = \dfrac{\nu_0 \mu_0 + k  \bar{X}_{i}}{\nu_n}, \quad \bar{X}_{i} = \dfrac{1}{k} \sum_{j \in \mathcal{N}_i} X_j,$$ and $\gamma_n \delta_i^2  =  \gamma_0 \delta_0^2 + \sum_{j \in \mathcal{N}_i}\left(X_j - \bar{X}_{i}\right)^2 + k \nu_0 \nu_n^{-1} \left(\mu_0 - \bar{X}_{i} \right)^2.$

\item \textbf{Step 3: }To compute the $t$-th Monte Carlo sample of $f(x)$, sample Dirichlet weights $\pi^{(t)} \sim \mbox{Dirichlet}(\alpha+1, \ldots, \alpha+1)$ and neighborhood-specific parameters $(\eta_{i}^{(t)}, \sigma_{i}^{(t)2}) \sim  \mbox{NIG}\left(\mu_i,\,\nu_n, \,\gamma_n/2,\gamma_n \delta_i^2/2\right)$ independently for $i = 1, \ldots, n$, and set 
\begin{equation}
         f^{(t)}(x) = \sum_{i=1}^n \pi_{i}^{(t)} \phi(x; \eta_{i}^{(t)}, \sigma_{i}^{(t)2}).
     \end{equation}

\end{itemize}
\caption{Nearest neighbor-Dirichlet mixture algorithm to obtain Monte Carlo samples from the pseudo-posterior of $f(x)$ with Gaussian kernel and normal-inverse gamma prior.}
\label{algo:NNDP_univariate}
\end{algorithm}

\section{Inverse Wishart Parametrization}\label{app:invwishart}
The parametrization of the inverse Wishart density defined on the set of all $p \times p$ matrices with real entries used in this article is given as follows. Suppose $\gamma>p-1$ and $\Psi$ is a $p\times p$ positive definite matrix. If $\Sigma \sim \mbox{IW}_{p}(\gamma, \Psi)$, then $\Sigma$ has the following density function:
\begin{align*}
g(\Sigma) & = 
\begin{cases}
\dfrac{|\Psi|^{\gamma/2}}{2^{\gamma p/2}\Gamma_{p}\left(\dfrac{\gamma}{2}\right)}|\Sigma|^{-(\gamma+p+1)/2}\etr\left(-\dfrac{1}{2}\Psi \Sigma^{-1}\right) & \quad \mbox{if } \Sigma \mbox{ is positive definite},\\ \\
0 & \quad \mbox{otherwise},
\end{cases}
\end{align*}
where $\Gamma_{p}(\cdot)$ is the multivariate gamma function given by $$\Gamma_{p}(a) = \pi^{p(p-1)/4} \prod_{j=1}^{p} \Gamma\left(a+\frac{1-j}{2}\right),$$ 
for $a\geq (p-1)/2$ and the function $\etr{(A)} = \exp{\{\mbox{tr}(A)\}}$ for a square matrix $A$. When $p=1$, the $\text{IW}_p(\gamma, \Psi)$ density is the same as the $\text{IG}(\gamma/2, \gamma \delta^2/2)$ density, where $\delta^2 = \Psi / \gamma$. The $\mbox{IW}_p(\gamma, \Psi)$ distribution has mean $\Psi / (\nu - p - 1)$ for $\nu > p + 1$ and mode $\Psi / (\nu + p + 1)$.
\section{$\mathcal{L}_1$ Error Tables in Sections \ref{sec:univ_simulations} and \ref{sec:mvt_simulations}} \label{app:L1tables}

\begin{center}
\begin{table}[H]
\Huge
\centering
\scalebox{0.45}{
\begin{tabular}{cccccccccccc}\toprule
Sample size          & Estimator & CA    & CW    & DE  & GS  & IE  & LN  & LO  & SB  & SP  & ST  \\
\cmidrule{1-12}\\								
\multirow{3}{*}{200} & NN-DM     & 0.20 & 0.31 & 0.19 & 0.12 & 0.36 & 0.20 & 0.13 & 0.16 & 0.30 & 0.31 \\
& NN-DM (default)    & 0.21 & 0.37 & 0.17 & 0.12 & 0.34 & 0.20 & 0.14 & 0.17 & 0.31 & 0.32 \\
                     & DP-MC     & 0.17 & 0.37 & 0.14 & 0.10 & 0.36 & 0.22 & 0.13 & 0.23 & 0.27 & 0.55 \\
                     & KDE       & -    & 0.37 & 0.16 & 0.12 & -    & 0.18 & 0.11 & 0.18 & -    & 0.52 \\
                     & KNN       & 5.99 & 0.58 & 0.59 & 0.28 & 3.46 & 0.54 & 0.48 & 0.39 & 6.02 & 0.46 \\
                     & DP-VB     & 0.20 & 0.35 & 0.15 & 0.08 & 0.53 & 0.25 & 0.11 & 0.11 & 0.44 & 0.57 \\
                     & RD        & -    & 0.35 & 0.13 & 0.12 & -    & 0.16 & 0.11 & 0.16 & -    & 0.53 \\
                     & PTM       & 0.29 & 0.27 & 0.18 & 0.13 & 0.38 & 0.22 & 0.13 & 0.20 & 0.40 & 0.39 \\
                     & LLDE       & 0.19 & 0.36 & 0.14 & 0.10 & - & 0.15 & 0.10 & 0.18 & - & 0.55 \\
                     & OPT       & 0.32 & 0.36 & 0.28 & 0.17 & 0.55 & 0.21 & 0.02 & 0.18 & 0.75 & 0.52 \\
       & A-KDE       & 0.22 & 0.35 & 0.15 & 0.14 & 0.46 & 0.16 & 0.11 & 0.17 & 0.38 & 0.53 \\
                     
\cmidrule{1-12}\\
\multirow{3}{*}{500} & NN-DM     & 0.16 & 0.17 & 0.13 & 0.08 & 0.30 & 0.16 & 0.10 & 0.10 & 0.24 & 0.20 \\
& NN-DM (default)    & 0.16 & 0.36 & 0.12 & 0.09 & 0.30 & 0.17 & 0.10 & 0.12 & 0.25 & 0.22 \\
                     & DP-MC     & 0.11 & 0.35 & 0.10 & 0.08 & 0.27 & 0.18 & 0.09 & 0.13 & 0.22 & 0.53 \\
                     & KDE       & -    & 0.32 & 0.11 & 0.08 & -    & 0.15 & 0.08 & 0.11 & -    & 0.51 \\
                     & KNN       & 3.62 & 0.47 & 0.48 & 0.27 & 3.39 & 0.40 & 0.30 & 0.31 & 5.64 & 0.35 \\
                     & DP-VB     & 0.14 & 0.33 & 0.11 & 0.05 & 0.48 & 0.19 & 0.08 & 0.08 & 0.45 & 0.55 \\
                     & RD        & -    & 0.32 & 0.10 & 0.09 & -    & 0.13 & 0.09 & 0.10 & -    & 0.50 \\
                     & PTM       & 0.24 & 0.19 & 0.14 & 0.10 & 0.32 & 0.19 & 0.11 & 0.14 & 0.32 & 0.30 \\
                     & LLDE       & 0.17 & 0.35 & 0.11 & 0.08 & - & 0.15 & 0.08 & 0.14 & - & 0.53 \\
                     & OPT       & 0.27 & 0.31 & 0.16 & 0.12 & 0.51 & 0.16 & 0.01 & 0.14 & 0.72 & 0.46 \\
                      & A-KDE       & 0.16 & 0.32 & 0.10 & 0.10 & 0.40 & 0.13 & 0.10 & 0.10 & 0.27 & 0.52 \\
\cmidrule{1-12}
\end{tabular}
}
\caption{Comparison of the methods in terms of $\mathcal{L}_1$ error in the univariate case. Number of test points and replications considered are $n_t = 500$ and $R = 20$, respectively.}
\label{tab:univariate_tab}
\end{table}
\end{center}
\begin{sidewaystable}
\centering
\Huge
\scalebox{0.38}{
\begin{tabular}[H]{cccccccccccccccccccccccccc}\toprule
& Density & \multicolumn{4}{c}{MG} & \multicolumn{4}{c}{MST} & \multicolumn{4}{c}{MVC} & \multicolumn{4}{c}{MVG} & \multicolumn{4}{c}{SN} & \multicolumn{4}{c}{T}\\
\cmidrule{1-26}\\
Sample size & Dimension    & 2 & 3  &  4  &    6 & 2 & 3 &    4 &  6 & 2 & 3 &   4 & 6 &     2 & 3 & 4 &  6 & 2 & 3 & 4 & 6 & 2 & 3 & 4 & 6\\
\cmidrule{1-26}\\
\multirow{3}{*}{200} & NN-DM     & 0.29 & 0.39 & 0.46 & 0.63  & 0.27 & 0.37 & 0.43 & 0.59  & 0.33 & 0.48 & 0.52  & 0.62  & 0.33 & 0.50 & 0.61 & 0.74 & 0.23                  & 0.32                  & 0.40                  & 0.56                  & 0.26                  & 0.33                  & 0.38                  & 0.52                  \\
& NN-DM (default)     & 0.29 & 0.40 & 0.47 & 0.65  & 0.28 & 0.38 & 0.45 & 0.61 & 0.37 & 0.49 & 0.61  & 0.80  & 0.38 & 0.50 & 0.62 & 0.75 & 0.24                  & 0.34                  & 0.42                  & 0.58                  & 0.26                  & 0.33                  & 0.40                  & 0.53                  \\
                      & DP-MC     & 0.20 & 0.36 & 0.62 & 0.67 & 0.22 & 0.42 & 0.55 & 0.68  & 0.31 & 0.46 & 0.51  & 0.60  & 0.34 & 0.52 & 0.61 & 0.73 & 0.16                  & 0.18                  & 0.25                  & 0.32                  & 0.21                  & 0.26                  & 0.34                  & 0.44                  \\
                      & KDE       & 0.28 & 0.52 & 0.79 & 1.29  & 0.27 & 0.55 & 0.72 & 1.21  & -    & -    & -     & -     & 0.43 & 0.77 & 0.90 & 1.09 & 0.22                  & 0.50                  & 0.78                  & 1.26                  & 0.26                  & 0.53                  & 0.74                  & 1.21                  \\
                      & KNN       & 1.93 & 3.82 & 4.80 & 18.83 & 7.28 & 8.22 & 9.25 & 11.05 & 4.92 & 7.31 & 15.24 & 20.5  & 3.30 & 3.65 & 4.75 & 5.54 & 2.21                  & 5.16                  & 8.37                  & 10.04                 & 2.97                  & 6.37                  & 10.22                 & 17.54                 \\
                      & DP-VB     & 0.29 & 0.38 & 0.41 & 0.50  & 0.24 & 0.36 & 0.44 & 0.59  & 0.48 & 0.85 & 1.28  & 1.69  & 0.45 & 0.58 & 0.71 & 0.86 & 0.17                  & 0.23                  & 0.31                  & 0.52                  & 0.19                  & 0.29                  & 0.34                  & 0.46                  \\
                      & LLDE     & 0.32 & 0.52 & 0.93 & 1.82 & 0.28 & 0.67 & 1.02 & 11.38 & 0.53 & - &  - & -  & 0.32 & 0.45 & 0.75 & 0.97 & 0.16                  & 0.22                  & 0.29                  & 0.65                  & 0.17                  & 0.26                  & 0.35           & 2.05                \\
                      & OPT     & 0.58 & 0.79 & 1.04 & -  & 0.57 & 1.08 & 1.31 & -  & 0.59 & 0.82 &  1.01 & - & 0.43 & 0.69 & 0.86 & - & 0.48                  & 0.68                  & 0.88                  & -                  & 0.57                  & 0.78                  & 1.10           & -                 \\
\cmidrule{1-26}\\
\multirow{3}{*}{1000} & NN-DM     & 0.18 & 0.26 & 0.34 & 0.46  & 0.19 & 0.27 & 0.32 & 0.44  & 0.28 & 0.33 & 0.46  & 0.50  & 0.29 & 0.44 & 0.49 & 0.62 & 0.15                  & 0.22                  & 0.29                  & 0.42                  & 0.17                  & 0.24                  & 0.30                  & 0.38                  \\
& NN-DM (default)     & 0.22 & 0.30 & 0.37 & 0.48  & 0.21 & 0.29 & 0.33 & 0.44  & 0.31 & 0.41 & 0.52  & 0.66  & 0.36 & 0.48 & 0.55 & 0.68 & 0.22                  & 0.29                  & 0.36                  & 0.45                  & 0.22                  & 0.28                  & 0.33                  & 0.39                  \\
                      & DP-MC     & 0.08 & 0.39 & 0.57 & 0.58  & 0.11 & 0.18 & 0.21 & 0.47  & 0.26 & 0.39 & 0.48  & 0.54  & 0.21 & 0.38 & 0.51 & 0.64 & 0.06                  & 0.07                  & 0.10                  & 0.15                  & 0.09                  & 0.15                  & 0.17                  & 0.28                  \\
                      & KDE       & 0.16 & 0.32 & 0.52 & 0.96  & 0.16 & 0.35 & 0.53 & 1.04  & -    & -    & -     & -     & 0.33 & 0.66 & 0.73 & 0.93 & 0.13                  & 0.32                  & 0.53                  & 1.05                  & 0.14                  & 0.32                  & 0.52                  & 0.90                  \\
                      & KNN       & 0.92 & 2.62 & 4.01 & 15.28 & 5.96 & 6.48 & 7.04 & 9.63  & 4.68 & 6.29 & 13.7  & 17.04 & 2.01 & 2.59 & 3.88 & 4.09 & 1.89                  & 4.39                  & 6.84                  & 8.15                  & 2.37                  & 5.30                  & 9.66                  & 13.28                 \\
                      & DP-VB     & 0.25 & 0.29 & 0.33 & 0.36  & 0.15 & 0.24 & 0.25 & 0.45  & 0.42 & 0.74 & 0.82  & 0.91  & 0.38 & 0.54 & 0.61 & 0.77 & 0.10                  & 0.12                  & 0.15                  & 0.20                  & 0.10                  & 0.15                  & 0.18                  & 0.31\\
                  & LLDE     & 0.31 & 0.37 & 0.52 & 1.02  & 0.22 & 0.40 & 0.46 & 1.18 & 0.47 & - & -  & -  & 0.29 & 0.39 & 0.51 & 0.94 & 0.07                  & 0.10                  & 0.14                  & 0.27                  & 0.10                  & 0.15                  & 0.23                  & 0.38\\
                  & OPT     & 0.39 & 0.72 & 1.00 & -  & 0.43 & 0.84 & 1.10 & -  & 0.51 & 0.72 & 0.89 & - & 0.30 & 0.60 & 0.85 & - & 0.31                  & 0.50                  & 0.76                  & -                  & 0.33                  & 0.53                  & 0.94                  & -\\
 \cmidrule{1-26}
\end{tabular}
}
\caption{Comparison of the methods in terms of $\mathcal{L}_1$ error in the multivariate case. Number of test points and replications considered are $n_t = 500$ and $R = 20$, respectively.}
\label{tab:multivariate_tab}
\end{sidewaystable}
\newpage
\bibliography{ref_library}
\bibliographystyle{apalike}

\end{document}